\documentclass[11pt,letterpaper]{article}
\usepackage{amsmath,amsthm,amssymb}
\usepackage{fullpage}
\usepackage[parfill]{parskip}
\usepackage[dvipsnames]{xcolor}
\usepackage{libertine}
\usepackage{tikz}
\usetikzlibrary{calc}
\usetikzlibrary{arrows.meta}
\usepackage{thm-restate}

\usepackage{hyperref}
\usepackage[nameinlink, capitalise]{cleveref}

\usepackage{tcolorbox}
\tcbset{
  myframe/.style={
    colback=gray!5,
    colframe=black,
    arc=4pt,
    outer arc=4pt,
    boxsep=5pt,
    left=3pt,
    right=5pt,
    top=5pt,
    bottom=5pt,
    enhanced,
    sharp corners
  }
}

\newtheorem{theorem}{Theorem}[section]
\newtheorem{lemma}[theorem]{Lemma}

\newtheorem{corollary}[theorem]{Corollary}

\theoremstyle{definition}
\newtheorem{definition}[theorem]{Definition}

\AddToHook{env/lemma/begin}{\crefalias{theorem}{lemma}}
\AddToHook{env/conjecture/begin}{\crefalias{theorem}{conjecture}}
\AddToHook{env/corollary/begin}{\crefalias{theorem}{corollary}}
\AddToHook{env/proposition/begin}{\crefalias{theorem}{proposition}}
\AddToHook{env/definition/begin}{\crefalias{theorem}{definition}}
\AddToHook{env/example/begin}{\crefalias{theorem}{example}}
\AddToHook{env/remark/begin}{\crefalias{theorem}{remark}}
\AddToHook{env/question/begin}{\crefalias{theorem}{question}}
\AddToHook{env/condition/begin}{\crefalias{theorem}{condition}}

\crefname{theorem}{Theorem}{Theorems}
\crefname{lemma}{Lemma}{Lemmas}
\crefname{conjecture}{Conjecture}{Conjectures}
\crefname{corollary}{Corollary}{Corollaries}
\crefname{proposition}{Proposition}{Propositions}
\crefname{definition}{Definition}{Definitions}
\crefname{example}{Example}{Examples}
\crefname{remark}{Remark}{Remarks}
\crefname{question}{Question}{Questions}
\crefname{condition}{Condition}{Conditions}

\crefname{Program}{Program}{Programs}
\creflabelformat{Program}{(#2\textup{#1})#3}

\newcommand{\E}{\mathbb{E}}

\newcommand{\alg}{\mathsf{ALG}}
\newcommand{\opt}{\mathsf{OPT}}
\newcommand{\st}{\text{start}}
\newcommand{\ed}{\text{end}}
\newcommand{\len}{\text{len}}

\newenvironment{proofof}[1]{{\vspace*{5pt} \noindent\bf Proof of #1:  }}{\hfill\rule{2mm}{2mm}}

\definecolor{linkc}{rgb}{0.6, 0.2, 0.3}
\definecolor{citec}{rgb}{0.3, 0.2, 0.6}
\definecolor{urlc}{rgb}{0.2, 0.6, 0.3}
\hypersetup{
    colorlinks=true,
    linkcolor=linkc,
    citecolor=citec,
    urlcolor=urlc
}

\title{Online Matching in Convex Bipartite Graphs}
\author{Yilong Feng \and Zhihao Gavin Tang \and Kangning Wang \and Xiaowei Wu}
\date{\today}

\begin{document}

\maketitle
\thispagestyle{empty}
\setcounter{page}{0}

\begin{abstract}
Online resource-allocation systems, like outpatient scheduling and spectrum allocation, often assign sequentially arriving requests to an ordered pool of scarce resources, where each request accepts a contiguous interval of feasible options. We study the resulting online matching problem on convex bipartite graphs under irrevocable decisions and adversarial arrivals. We first show that convexity alone does not improve the classic worst-case guarantee of $1-1/e$, achieved by \textsc{Ranking}. We then consider the uniform-length model, in which every online request has exactly $d$ consecutive offline neighbors. We propose \textsc{Flip}, which uses one random bit to commit ex-ante to either earliest-feasible assignment or latest-feasible assignment. Although either natural deterministic policy can waste capacity and be asymptotically only $1/2$-competitive, we show that their randomized mixture is $2/3$-competitive. This guarantee is tight for \textsc{Flip} and remains valid against a semi-adaptive adversary that observes the selected policy before choosing the arrival order. We also prove that no randomized online algorithm can achieve a competitive ratio strictly larger than $3/4$ in the uniform-length model.
\end{abstract}

%\renewcommand{\contentsname}{Contents}
% \tableofcontents

\newpage

\section{Introduction}

Online bipartite matching is a fundamental model for allocating resources to sequentially arriving requests.
It has been studied extensively since the seminal work of Karp, Vazirani, and Vazirani~\cite{conf/stoc/KarpVV90}.
A set of offline resources is known in advance, whereas online requests arrive one at a time and reveal their compatible resources only upon arrival.
Each request must be matched immediately and irrevocably to an available compatible resource, or be left unmatched.
The objective is to maximize the size of the final matching.
The performance of an online algorithm is typically measured by its competitive ratio, defined as the worst-case ratio between its expected matching size and the size of a maximum matching.
For general bipartite graphs, any greedy algorithm that matches a request whenever possible is $1/2$-competitive, and no deterministic algorithm can guarantee a better ratio.
The celebrated \textsc{Ranking} algorithm achieves the optimal randomized ratio of $1-1/e$ against the standard oblivious adversary~\cite{conf/stoc/KarpVV90}.

These worst-case guarantees make no structural assumptions about the compatibility graph.
In many operations-research applications, however, the offline resources possess a natural order and the compatibility constraints have additional structure.
The offline side represents a fixed stock of resources, and the system operator seeks to use this capacity efficiently by serving as many requests as possible.
The online side represents demand revealed sequentially: each request seeks an acceptable resource, but its feasible choices are restricted by operational requirements or service preferences.
Because decisions are irrevocable and future requests are unknown, each assignment carries an opportunity cost; choosing which compatible resource to allocate is therefore central to the overall service level and system throughput.

Consider, for example, a platform that allocates a fixed set of time slots over a planning horizon.
Each requester is willing to accept any slot within its acceptable time window, while the platform seeks to accommodate as many requests as possible.
This captures outpatient scheduling, where hospitals allocate remaining appointment capacity to patients with time preferences~\cite{journals/ior/FeldmanLTZ14}.
The same pattern appears in airport slot allocation, where an airline requests a preferred takeoff or landing time but may accept another time within a displacement window~\cite{journals/tra/ZografosAM18,journals/transci/FairbrotherZG20}.
A similar setting arises in spectrum allocation.
A licensed frequency band is divided into ordered channels, and assignments must account for channel availability and interference between nearby frequencies~\cite{journals/anor/AardalHKMS07}.
Spectrum requests may also arrive online and require immediate, non-preemptive admission decisions~\cite{journals/tc/XuWL10}.
In a simple stylized model, each device has a preferred channel and can use that channel or a fixed number of nearby channels.
If the same displacement limit applies to every device, all acceptable frequency ranges have the same length.
In all these applications, the resources are ordered, and the resources acceptable to each request are consecutive in that order.

Compatibility graphs with this consecutive-neighborhood property belong to the class of convex bipartite graphs~\cite{journals/Glover1967}.
A bipartite graph is convex on one side if the vertices on that side admit a linear order in which the neighborhood of every vertex on the opposite side is consecutive.
We assume convexity on the offline side, and the offline order is known to the online algorithm in advance.
We refer to the resulting online neighborhoods as \emph{interval neighborhoods}.

This gives rise to the following natural question:
\begin{center}
\emph{Can convexity on the offline side be exploited to beat $1-1/e$ for online bipartite matching?}
\end{center}

\subsection{Our Results}

Perhaps surprisingly, we show that convexity alone does not improve the worst-case service guarantee.
When request intervals may have arbitrary lengths, the optimal competitive ratio remains $1-1/e$.

\begin{theorem}
\label{theorem:arbitrary-length}
   No algorithm has a competitive ratio strictly larger than $1-1/e$ for online matching in convex bipartite graphs.
\end{theorem}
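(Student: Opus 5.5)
We will show that the classic upper-triangular hard instance of Karp, Vazirani, and Vazirani is already a convex bipartite graph, so the $1-1/e$ impossibility transfers verbatim. Take $n$ offline vertices $u_1,\dots,u_n$ and $n$ online vertices $v_1,\dots,v_n$ arriving in this order. The adversary first draws a uniformly random permutation $\sigma$ of the offline vertices. Online vertex $v_t$ is then adjacent to $\{u_{\sigma(t)},u_{\sigma(t+1)},\dots,u_{\sigma(n)}\}$. This graph has a perfect matching $v_t \mapsto u_{\sigma(t)}$, so $\opt=n$.

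The key point is the choice of offline order. The algorithm must be told the linear order in advance, but that order must not leak $\sigma$. The hardness argument only needs the neighborhoods to be \emph{nested suffixes} in some order. So we fix the order presented to the algorithm to be $u_1<u_2<\dots<u_n$ by relabeling. Equivalently, we let the neighborhood of $v_t$ be the interval $[t,n]$ in the known order. The randomness must then come from another source.

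The natural fix is to randomize which online vertex's "lost" vertex is dropped. Concretely, we use a revealed-step construction. At step $t$, the surviving set $S_t$ is an interval. Online vertex $v_t$ sees all of $S_t$. Afterwards the adversary removes one uniformly random endpoint-free element. This fails to keep an interval, however, so we must instead remove a random vertex and shift.

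The cleaner route is to note that after removal the remaining vertices still form a set that is consecutive relative to each other. For this we choose the order adaptively to the hidden permutation, but only reveal the order of \emph{vertices}, not which is removed. Concretely, we place the vertices in the order $u_{\sigma(1)},\dots,u_{\sigma(n)}$ and name them by position. Since all vertices are initially symmetric and unlabeled to the algorithm, revealing this order reveals nothing about $\sigma$ beyond a relabeling. The neighborhood of $v_t$ is then positions $[t,n]$, which is an interval.

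The algorithm knows the positions, so this instance is deterministic and easy. Nested suffixes in a known order are therefore not hard, and the construction must be modified. The correct modification, which is the main step, is to let the offline order be fixed but make the neighborhoods \emph{random sub-intervals}. Start with $S_1=[1,n]$. After $v_t$ arrives and is served with neighborhood $S_t$, the adversary sets $S_{t+1}$ to be $S_t$ minus either its left endpoint or its right endpoint. The choice is made so that the removed endpoint is the one "hidden" by optimum. To make this hard, we would rather blow up each offline vertex into a block of $k$ copies and remove a random block.

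The cleanest provable version that I would carry out is the following. The adversary takes $S_{t+1}=S_t\setminus\{\text{left or right endpoint, chosen uniformly}\}$, and $\opt$ matches $v_t$ to the removed endpoint, with the last vertex matched to the final survivor, so $\opt=n$. For any algorithm, we bound the expected number of matches by analyzing the fraction of mass placed on the endpoints, via a Yao-style argument against this distribution. We expect this bound to approach $1-1/e$ only after blowing up each vertex into a block of $k$ copies, so that each step removes a random fraction of a block.

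The main obstacle is designing the interval shrinking process so that the algorithm's fractional allocation cannot concentrate on the vertices that remain useful. With two endpoints, a "middle-first" strategy beats $1-1/e$. I would therefore first try making the neighborhoods intervals in a circular or two-dimensional block arrangement, and check against the LP/water-filling bound. If this fails, I would reduce instead via an embedding of arbitrary bipartite graphs with a known hardness into convex ones by duplicating offline vertices.
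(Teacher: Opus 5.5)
\textbf{There is a genuine gap.} None of the constructions you write down is both convex and hard, so no bound is ever proved.

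You correctly observe that the Karp--Vazirani--Vazirani instance is either not convex (hidden permutation) or trivial (nested suffixes in a known order). But the version you then call the cleanest provable one is also easy. The reason is not a middle-first strategy; matching in the middle is essentially the worst choice there and yields only about $n/2$. Suppose each $S_{t+1}$ is $S_t$ with one endpoint deleted, chosen at random or even adaptively, and let the algorithm match $v_t$ to an unmatched endpoint of $S_t$. Inductively, before each arrival $S_t$ contains at most one matched vertex, and it is an endpoint. After $v_t$ takes an unmatched endpoint, there are two cases: either both endpoints are matched, so the deletion removes an already-matched vertex, or $S_t$ held only this one matched vertex. In both cases the invariant survives the deletion. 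Hence every $v_t$ with $|S_t|\geq 2$ is matched, the algorithm gets at least $n-1$, and the ratio tends to $1$.

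Blowing vertices up into blocks does not fix this as long as material is shaved only off the two ends of a single nested interval. For instance, with $k$ equal blocks and one end block removed per phase, an algorithm that keeps both end blocks loaded loses only about one block. Your closing ideas (circular or two-dimensional arrangements, generic duplication) are not developed. Moreover, convexity refers to one fixed linear order, so they do not address the problem.

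The missing idea is how to let the adversary drop a piece the algorithm cannot single out while keeping every future neighborhood an interval. The paper does this by branching instead of maintaining one shrinking interval. In $\mathsf{Build}(I,k)$ with $|I|=k!$:
\begin{itemize}
\item It splits $I$ into $k$ consecutive subintervals of size $(k-1)!$.
\item It sends $(k-1)!$ online vertices with neighborhood $I$, so the average load of $I$ rises by at most $1/k$.
\item It declares the least-loaded subinterval $I_r$ dead. This adaptive choice is legitimate against a deterministic fractional algorithm, to which randomized algorithms reduce via their marginals.
\item It then runs $\mathsf{Build}(I_j,k-1)$ independently inside each surviving $I_j$.
\end{itemize}
Every later neighborhood lies inside a single $I_j$, so convexity holds automatically even when $I_r$ sits in the middle. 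Matching each level's vertices into its dropped subinterval gives $\opt=m!$.

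The second missing ingredient is the analysis. One shows by induction that the final average load is at most $w_k(x)=\hat x/k+\frac{k-1}{k}\,w_{k-1}(\hat x)$, where $\hat x=\min\{1,x+1/k\}$. The step uses that $w_{k-1}$ is concave and $w_{k-1}(x)-x$ is non-increasing, which shows that non-uniform allocations cannot help. Finally, $w_m(0)=\frac{1}{m}\sum_{\ell=1}^{m}\min\{1,\sum_{s=\ell}^{m}1/s\}\to 1-1/e$.
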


Recall that \textsc{Ranking} is $(1-1/e)$-competitive on general bipartite graphs, so this upper bound is tight.
At a high level, our construction reproduces the upper-triangular structure of the classic $1-1/e$ hard instance by recursively replacing each offline resource with a sufficiently large block of consecutive resources.
Every request then has an interval of feasible resources, but these intervals have widely varying lengths.
From an operations perspective, this result shows that an ordered capacity structure does not by itself protect the system from worst-case congestion: sufficiently heterogeneous flexibility across requests can recreate the same bottlenecks as in an unstructured compatibility graph.

Many applications, however, standardize the amount of flexibility offered to each request.
A scheduling platform may require every acceptable time window to contain the same number of slots; similarly, an airport or spectrum-allocation system may impose a common displacement tolerance.
Besides simplifying the service design, such a restriction provides a basic form of procedural fairness across customers.
Motivated by these settings, we study the \emph{uniform-length} model, in which every request can be assigned to exactly $d$ consecutive resources, for some $d\geq 2$.
We show that this restriction fundamentally changes the attainable service level: the $1-1/e$ barrier can be surpassed by a transparent randomized policy using only one random bit.

To explain the policy, recall a useful offline property of convex bipartite graphs.
For each online vertex $v$, let $s(v)$ and $l(v)$ denote the smallest and largest indices in its interval neighborhood, respectively.
When all requests are known in advance, a maximum matching can be obtained by processing requests in non-decreasing order of $l(v)$ and assigning each request to its smallest-indexed available resource.
Symmetrically, one may process requests in non-increasing order of $s(v)$ and assign each request to its largest-indexed available resource~\cite{journals/Glover1967}.
In the online system, the arrival sequence is exogenous, but the operator can retain the corresponding resource-selection decisions.
This yields two natural dispatching policies:
\begin{itemize}
    \item The \emph{up-matching rule} is an \emph{earliest-feasible assignment} policy: assign each arriving request to its smallest-indexed currently available feasible resource.
    \item The \emph{down-matching rule} is a \emph{latest-feasible assignment} policy: assign each arriving request to its largest-indexed currently available feasible resource.
\end{itemize}

Both policies are work-conserving in the sense that they serve a request whenever any compatible capacity remains, and neither policy requires demand forecasts or knowledge of future arrivals.
Nevertheless, each policy has a systematic directional bias and can waste capacity badly: it may consume resources that are critical for later requests while leaving residual capacity that subsequent demand cannot use.
Each rule always produces a maximal matching and is therefore $1/2$-competitive.
The examples in Figure~\ref{figure:up-down-hard-instances} show that this poor performance guarantee is essentially tight.
In each instance, the indicated policy serves only $d$ requests, whereas a clairvoyant scheduler serves $2d-1$.
The resulting ratio is $\frac{d}{2d-1}$, which converges to $1/2$ as $d\to\infty$.
Thus, neither earliest-feasible nor latest-feasible assignment provides a satisfactory worst-case service guarantee when used alone.

\begin{figure}[htb]
\centering
\resizebox{0.75\textwidth}{!}{%
\begin{tikzpicture}[
    x=1cm,
    y=0.72cm,
    vertex/.style={
        circle,
        draw,
        fill=white,
        minimum size=0.48cm,
        inner sep=1pt
    },
    u_vertex/.style={vertex, fill=blue!20},
    v_vertex/.style={vertex, fill=red!20},
    edge/.style={draw, gray, thin},
    match_edge/.style={draw, black, thick}
]

% -----------------------------------------------------------------
% Left panel: down-matching rule.
% -----------------------------------------------------------------
\node at (1.6,7.2) {Down-matching (latest feasible)};

\foreach \i in {1,...,7} {
    \node[u_vertex] (lu\i) at (0,{7-\i}) {$u_\i$};
    \node[v_vertex] (lv\i) at (3.2,{7-\i}) {$v_\i$};
}

\foreach \j in {1,...,4} {
    \draw[edge] (lv1) -- (lu\j);
}
\foreach \j in {2,...,5} {
    \draw[edge] (lv2) -- (lu\j);
}
\foreach \j in {3,...,6} {
    \draw[edge] (lv3) -- (lu\j);
}
\foreach \i in {4,...,7} {
    \foreach \j in {4,...,7} {
        \draw[edge] (lv\i) -- (lu\j);
    }
}

% Edges selected by the down-matching rule.
\draw[match_edge] (lv1) -- (lu4);
\draw[match_edge] (lv2) -- (lu5);
\draw[match_edge] (lv3) -- (lu6);
\draw[match_edge] (lv4) -- (lu7);

% Divider.
\draw[densely dashed, gray] (4.8,-0.4) -- (4.8,8);

% -----------------------------------------------------------------
% Right panel: up-matching rule.
% -----------------------------------------------------------------
\node at (8,7.2) {Up-matching (earliest feasible)};

\foreach \i in {1,...,7} {
    \node[u_vertex] (ru\i) at (6.4,{7-\i}) {$u_\i$};
    \node[v_vertex] (rv\i) at (9.6,{7-\i}) {$v_\i$};
}

\foreach \i in {1,...,4} {
    \foreach \j in {1,...,4} {
        \draw[edge] (rv\i) -- (ru\j);
    }
}
\foreach \j in {2,...,5} {
    \draw[edge] (rv5) -- (ru\j);
}
\foreach \j in {3,...,6} {
    \draw[edge] (rv6) -- (ru\j);
}
\foreach \j in {4,...,7} {
    \draw[edge] (rv7) -- (ru\j);
}

% Edges selected by the up-matching rule.
\draw[match_edge] (rv4) -- (ru1);
\draw[match_edge] (rv5) -- (ru2);
\draw[match_edge] (rv6) -- (ru3);
\draw[match_edge] (rv7) -- (ru4);

\end{tikzpicture}
}
\caption{Hard instances for the latest-feasible (down-matching) and earliest-feasible (up-matching) policies when $d=4$.
Blue and red nodes represent the offline and online vertices, respectively.
The arrival order for the down-matching rule is $v_1,v_2,\ldots,v_7$, while that for the up-matching rule is $v_7,v_6,\ldots,v_1$.
In both instances, the edges $(u_i,v_i)$, $i=1,\ldots,7$, form a perfect matching, whereas the indicated rule matches only $4$ online vertices.
}
\label{figure:up-down-hard-instances}
\end{figure}

The two instances also reveal the complementarity of the policies.
On the left instance, latest-feasible assignment wastes substantial capacity, whereas earliest-feasible assignment attains the offline optimum; on the right instance, the roles are reversed.
We therefore propose \textsc{Flip}, which makes a single ex-ante random choice between the two policies.
Before any request arrives, \textsc{Flip} samples a fair coin, commits to either the earliest-feasible or latest-feasible assignment, and follows that policy throughout the planning horizon.
Thus, the algorithm randomizes over two stable dispatching policies rather than making a new randomized decision for every request.
It uses only one random bit and does not require $d$ as an input.

This simple policy mixture provides a robust service-level guarantee.
For every uniform-length instance, its expected number of served requests is at least $\frac{2d-1}{3d-2}$ times the number served by a clairvoyant optimal scheduler.
Moreover, our analysis holds against a \emph{semi-adaptive adversary}: the compatibility graph is fixed first, the adversary then observes which of the two policies \textsc{Flip} has selected, and only afterward chooses the arrival order.
The adversary may therefore use a different policy-aware worst-case sequence for each outcome of the coin flip.
Consequently, the guarantee does not rely on concealing the selected dispatching rule; the operator may announce and consistently apply the policy before operations begin.
This public-commitment interpretation is stronger than the standard oblivious-adversary model and is particularly relevant in settings where assignment rules must be transparent to customers or regulators.
The policy is also operationally lightweight:
A standard balanced ordered-set data structure implements either policy in $O(\log m)$ time per arrival using $O(m)$ space, where $m$ is the number of offline resources.
Hence \textsc{Flip} is not only analytically simple but also easy to deploy at scale.

\begin{theorem}
    Algorithm \textsc{Flip} is $\frac{2d-1}{3d-2}$-competitive for online matching in uniform-length convex bipartite graphs, even against the semi-adaptive adversary described above, and the ratio is tight for \textsc{Flip}.
\end{theorem}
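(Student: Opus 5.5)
The plan is to reduce the theorem to a deterministic, per-instance inequality and then prove that inequality by a charging argument along the offline order. Fix a uniform-length instance $G$ with maximum matching size $n=\opt$. Let $U(G)$ be the minimum, over all arrival orders, of the number of requests served by the up-matching rule. Let $D(G)$ be the analogous minimum for the down-matching rule. Against the semi-adaptive adversary, \textsc{Flip} serves in expectation exactly $\tfrac12\bigl(U(G)+D(G)\bigr)$, because the adversary may pick a separate worst order after seeing the coin. The goal is therefore to prove
\[
U(G)+D(G)\;\ge\;\frac{2(2d-1)}{3d-2}\,n .
\]
Write $L_U=n-U(G)$ and $L_D=n-D(G)$ for the losses. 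This is equivalent to $L_U+L_D\le \frac{2(d-1)}{3d-2}\,n$. The bound is deliberately weaker than adding two single-policy bounds: each policy alone can lose about $n/2$.

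The first step is a structural description of one run of the up-matching rule. Whenever a request $w$ is assigned to resource $u$, all resources in $[s(w),u-1]$ are already occupied. Whenever a request $v$ is rejected, all $d$ resources of $[s(v),s(v)+d-1]$ are occupied. Fix a maximum matching $M^*$ and look at the final up-matching. Decompose the offline line into maximal runs of occupied resources separated by free resources. I would show:
\begin{itemize}
\item every rejected $M^*$-request has its whole window inside a single run;
\item within a run, each loss is caused by requests that were pushed to the right of their $M^*$-partners, that is, by \emph{rightward displacement}.
\end{itemize}
From this I would derive a local bound: a run of length $\ell$ that contains a rejection has $\ell\ge d$, and within a stretch of $M^*$ of length about $2d-1$ the up-rule loses at most about $d-1$ requests. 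Symmetrically, the down-matching rule loses only through leftward displacement. These facts give the single-policy bound $L\le \frac{d-1}{2d-1}n$, which already implies $1/2$-competitiveness.

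The main obstacle is the coupling step: a configuration of $G$ that lets the adversary inflict large loss on the up-rule must limit the loss the (different) adversary can inflict on the down-rule. My first approach is to localize. Partition the offline line into segments according to the intervals of $M^*$-requests. For each segment, define a \emph{right-congestion} quantity and a \emph{left-congestion} quantity from the interval structure of $G$ alone, independent of arrival orders:
\begin{itemize}
\item right congestion is the number of requests whose intervals extend past the segment on the right and could be pushed into it;
\item left congestion is the analogous count on the left.
\end{itemize}
Losses of the up-rule are charged to right congestion and losses of the down-rule to left congestion. The key claim is that, per segment, the up-loss plus the down-loss is at most $\frac{2(d-1)}{3d-2}$ times the segment's share of $n$. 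This rests on uniform length: a request cannot be long toward both sides at once, since its interval has exactly $d$ resources. Summing over segments gives the inequality. If the per-segment accounting resists a direct proof, the fallback is an LP, or factor-revealing, relaxation over the counts of requests per offset class within a window of $3d-2$ resources, with the two policies' loss constraints written separately, and then verifying that the LP value is $\frac{2(d-1)}{3d-2}$.

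For tightness I would glue the two gadgets of Figure~\ref{figure:up-down-hard-instances} along a shared middle block. This yields an instance on $3d-2$ resources with a perfect matching of size $3d-2$. The left part contains the diagonal requests that trap down-matching, and the right part contains those that trap up-matching. I would then check that each rule, under its own adversarial order, serves only $2d-1$ requests: each rule fully serves the gadget that does not hurt it but loses $d-1$ in the other. The expected value of \textsc{Flip} is then $2d-1$ against $\opt=3d-2$, which matches the bound.
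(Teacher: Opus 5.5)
\textbf{There is a genuine gap: the competitive bound itself is never proved.} Your reduction to bounding $\tfrac12\bigl(U(G)+D(G)\bigr)$ graph by graph is correct. Your tightness instance (the two gadgets glued along a shared block of $d$ requests on $3d-2$ resources) is the paper's. The paper in fact uses a single order $v_1,\dots,v_{d-1},v_{2d},\dots,v_{3d-2},v_d,\dots,v_{2d-1}$ that hurts both rules, so tightness holds even against an oblivious adversary.

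The problem is the bound. The inequality $L_U+L_D\le\frac{2(d-1)}{3d-2}\,n$ is the whole content of the theorem, and your proposal only labels it ``the key claim.'' The segments, the right/left congestion quantities, and the rule for charging losses to them are never defined. The only support offered is the heuristic that a window of exactly $d$ resources cannot extend far to both sides. The LP fallback is no better off: it presupposes per-rule loss constraints that you have not derived, plus a decomposition argument to pass from windows of $3d-2$ resources to arbitrary $n$.

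Two preliminary steps are also missing. First, you never determine which arrival orders attain $U(G)$ and $D(G)$, so any charging would have to hold uniformly over all orders. Second, you do not reduce to instances in which a maximum matching covers every request. Counting rejected $M^*$-requests ignores requests outside $M^*$ that get served and displace them. The paper deletes those requests, using the fact that inserting an online vertex never decreases the output of either rule.

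The paper's proof supplies these pieces:
\begin{itemize}
\item \emph{Worst-case orders.} An adjacent-swap argument shows that, with requests indexed by left endpoint, $v_n,\dots,v_1$ is a worst order for up-matching and $v_1,\dots,v_n$ is a worst order for down-matching.
\item \emph{Offset function.} Define $f(i)=\mu_i-s_i+1\in[d]$, where $\mu_i$ is the partner of $v_i$ in the greedy matching $M^\star$. It satisfies $f(i+1)\le f(i)+1$, and it makes your intuition precise: $v_i$ has $f(i)-1$ neighbors before its partner and $d-f(i)$ after it.
\item \emph{Where rejections can occur.} Down-matching can reject $v_i$ only if $f(i)$ strictly exceeds $f$ on the preceding $d$ indices. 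Up-matching can reject $v_i$ only if $f(i)$ is strictly below $f$ on the following $d$ indices.
\end{itemize}

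The coupling between the two rules is then non-local. Suppose an $L$-chain ends between $2$ and $2d$ positions before an $H$-chain starts. Then their lengths sum to at most $d-1$: the $f$-values along each chain strictly increase, those of the $L$-chain stay at least $2$ below those of the $H$-chain, and all lie in $[1,d]$. Chains farther apart do not constrain each other at all. These interactions cross any fixed segmentation, so the paper does not argue segment by segment. Instead it does the following:
\begin{itemize}
\item builds overlapping proxies on two tracks, the second obtained by rotating $f$;
\item relocates one proxy of each overlapping pair;
\item proves $3(|H|+|L|)+\kappa\le 2n$, where $|H|$ and $|L|$ count the indices where down- and up-matching can reject and $\kappa$ is the number of maximal chains.
\end{itemize}
The $+\kappa$ term contributes one unit per maximal chain, and each chain has length at most $d-1$. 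That term is exactly what improves $2/3$ to $\frac{2d-1}{3d-2}$, and your sketch has no mechanism producing it.
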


To benchmark the value of more sophisticated online controls, we complement this guarantee with an impossibility result that applies to every randomized assignment policy in the uniform-length model.

\begin{theorem}
    No algorithm has a competitive ratio strictly larger than $3/4$ for online matching in uniform-length convex bipartite graphs.
\end{theorem}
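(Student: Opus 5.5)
We prove the upper bound of $3/4$ by applying Yao's principle: we build a distribution over uniform-length instances and show that every deterministic online algorithm matches at most $(3/4+o(1))\cdot\opt$ vertices in expectation. The instance family extends the two hard instances of Figure~\ref{figure:up-down-hard-instances}, on which up-matching and down-matching each fail. The adversary opens with a common prefix of requests whose intervals overlap in a shared block. It then reveals one of two mirror-image continuations, chosen by a fair coin. One continuation punishes assignments that lean toward the low end of the prefix's intervals; the other punishes assignments that lean toward the high end. The algorithm sees the prefix before the coin is revealed, so its behaviour on the prefix cannot depend on which continuation follows.

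\textbf{Construction.} We take $d$ large and offline vertices $u_1,\dots,u_{3d}$ or a similar range. The prefix consists of $d$ requests, each with neighbourhood $[d+1,2d]$, so their intervals coincide on the middle block. Every request has exactly $d$ neighbours, and the left block $[1,d]$ and right block $[2d+1,3d]$ stay untouched. After the prefix, each middle vertex $u_{d+j}$ carries a \emph{fraction} $x_j$, the probability that it is matched, with $\sum_j x_j\le d$.
\begin{itemize}
\item In the ``right'' continuation, $d$ requests with intervals $[d+1+i, 2d+i]$ arrive for $i=1,\dots,d$, one for each shifted window. The optimum serves the prefix with the left...
\end{itemize}
This first attempt breaks down: both the prefix and the right continuation need vertices from the middle block, so the sliding windows must be arranged so that $\opt=2d$ regardless of the coin. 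We therefore use shifted windows instead. The prefix takes $[1,d]$, and the continuations take $d$ requests on $[1+\lceil d/2\rceil, d+\lceil d/2\rceil]$ or its mirror $[1-\lceil d/2\rceil, d-\lceil d/2\rceil]$, after reindexing. This forces a trade-off: matching the prefix onto the upper half of $[1,d]$ loses in one continuation, and matching onto the lower half loses in the other.

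\textbf{Computation.} Let $a$ be the expected number of prefix requests the algorithm places in the lower half of the shared block and $b$ the number placed in the upper half, with $a+b\le d$. In the right continuation, every upper-half vertex the prefix used blocks a continuation request, because the continuation can only reach the upper half plus $d/2$ fresh vertices. So the algorithm matches about $(a+b)+(d-b)=d+a$, against $\opt$ of roughly $3d/2$. Symmetrically, the left continuation yields $d+b$. Averaging over the coin gives $d+(a+b)/2\le 3d/2$. This alone yields only ratio $1$, so the counts must be tightened. The prefix should consist of more requests, a full $d$ of them on a $d$-window, with continuations each containing $d$ requests whose windows overlap the prefix window in exactly half of it. Then $\opt=2d$ (or $3d/2$, depending on the exact shift), while the algorithm gets $d+\min$-type terms averaging $3d/2$, which gives $3/4$. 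The key step is to choose the shift and the request counts so that $\opt$ is the same in both branches, and equals $4/3$ times the averaged algorithm value, uniformly over all prefix placements $(a,b)$. Integrality and the divisibility of $d$ contribute only $O(1/d)$ terms.

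\textbf{Main obstacle.} The difficulty is calibrating the gadget so that no deterministic prefix strategy does better than $3/4$. One option is to leave prefix requests unmatched. Another is to split the matched prefix requests across halves in a way that exploits the rigid uniform length $d$. We would check this by writing the algorithm's payoff as a piecewise-linear function of its placement vector $(x_1,\dots,x_d)$ and verifying that the maximum of its coin-average is $\tfrac34\opt+O(1)$. If a single stage does not reach exactly $3/4$, we would iterate the construction recursively, as in the proof of \cref{theorem:arbitrary-length}, and take the limit.
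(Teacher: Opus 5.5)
There is a genuine gap: you never produce an instance that holds every algorithm to $3/4$, and the gadget you settle on gives no bound at all. With $d$ prefix requests on $[1,d]$ and $d$ continuation requests on $[d/2+1,3d/2]$ (or its mirror), the union of all neighbourhoods in a branch has only $3d/2$ vertices. So $\opt=3d/2$, not $2d$. Plain greedy matches the whole prefix and then $d/2$ continuation requests on the untouched half, so it attains $\opt$ in both branches. Your own bound $d+(a+b)/2\le 3d/2$ already says the ratio is $1$.

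Cutting the prefix to $d/2$ requests on the window does create the real trade-off. But a single gadget then only gives $\E[\alg]\le d+(a+b)/2\le 5d/4$ against $\opt=3d/2$, i.e.\ $5/6$. The reason is that each continuation has $d/2$ fresh offline vertices that no prefix request competes for. Your fallback of recursing as in the proof of Theorem~\ref{theorem:arbitrary-length} is also unavailable: that recursion uses neighbourhoods of sizes $k!,(k-1)!,\dots$, which violates the uniform-length assumption.

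The missing idea is to chain the half-shifted gadget horizontally so that no offline vertex is fresh. Take $d$ and $k$ even, and $k$ consecutive blocks $U_1,\dots,U_k$ of size $d$ (so $n=kd$), each split into halves $U_{i,1},U_{i,2}$. Put $d/2$ prefix requests on every block. One continuation puts $d$ requests on each straddling window $U_{i,2}\cup U_{i+1,1}$ with $i$ odd. The other does the same for even $i\le k-2$ and adds $d/2$ requests on each of $U_1$ and $U_k$, so that both branches have a perfect matching and $\opt=n$.

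Suppose the algorithm uses $a_i$ vertices of $U_{i,1}$ and $b_i$ of $U_{i,2}$ in the prefix, with $m=\sum_i(a_i+b_i)\le n/2$. Across the two branches, each loss $b_i+a_{i+1}$, $i\in[k-1]$, is incurred exactly once. The average is therefore at most $m+\frac12\bigl(n-m+a_1+b_k\bigr)\le \frac{3n}{4}+\frac d2$, and Yao's principle with $k\to\infty$ gives $3/4$.

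This works because the windows alternate between odd and even boundaries. Whichever half of a block a prefix request occupies, it blocks a second-phase request in one of the two branches. The only leakage is the boundary term $a_1+b_k=O(d)$, which vanishes relative to $n$.
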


Consequently, the optimal worst-case service ratio in the uniform-length setting lies in $[2/3,3/4]$.
The lower endpoint is achieved asymptotically by the one-bit \textsc{Flip} policy, while the upper endpoint limits what any online policy can guarantee.

\subsection{Our Techniques}

We next outline the proof of the competitive guarantee for \textsc{Flip}.

\paragraph{From Oblivious Adversary to Semi-adaptive Adversary.}

Let $M^\star$ be the maximum matching derived by the offline greedy procedure that processes online vertices in non-decreasing order of $l(v)$~\cite{journals/Glover1967}.
We first show that it suffices to consider the instances where $M^\star$ covers all $n$ online vertices.
Thus $|M^\star|=n$.
We index the online vertices as $v_1,v_2,\ldots,v_n$ according to the order in which the offline greedy procedure processes them.
We analyze the performance of \textsc{Flip} against a semi-adaptive adversary\footnote{As Figure~\ref{figure:up-down-hard-instances} shows, \textsc{Flip} is only $1/2$-competitive against a fully-adaptive adversary.} that first fixes the underlying graph, then sees the rule chosen by the algorithm, and finally decides the online arrival order.
In particular, it may choose different arrival orders for the two outcomes of the random bit.
This adversary is stronger than the standard oblivious adversary. Thus, any guarantee against it also holds against the oblivious adversary.
We show that among all arrival orders, the up-matching rule attains its minimum matching size on the order $v_n,v_{n-1},\ldots,v_1$, whereas the down-matching rule attains its minimum on the order $v_1,v_2,\ldots,v_n$.
It therefore suffices to analyze these two canonical orders.

\paragraph{Encoding Unmatched Vertices to Bad Indices of an Offset Function.}
The maximum matching $M^\star$ allows us to encode the instance by a one-dimensional discrete function $f$.
Let $s_i$ be the smallest index of the $i$-th interval and let $\mu_i$ be the index of the offline vertex matched to it by $M^\star$.
Let $f(i):=\mu_i-s_i+1$, which denotes the number of neighbors of $v_i$ with index at most $\mu_i$.
We call $f$ the offset function, as $f(i)$ records the offset between the highest-priority neighbor of $v_i$ under the up-matching rule and the perfect neighbor of $v_i$ in $M^\star$.
By definition, we have $f(i)\in [d]$ for all $i\in [n]$.
We further show that $f(i+1) \leq f(i)+1$, e.g., $f$ grows at a rate of at most $1$.
Most importantly, we provide the necessary conditions for online vertices to be unmatched using the offset function.
In particular, an online vertex left unmatched by the down-matching rule can occur only at an index whose $f$-value is strictly larger than those at the preceding $d$ indices, i.e., $v_i$ is unmatched only if
\begin{equation*}
    f(i) > f(j), \qquad \forall j\in [i-d,i-1].
\end{equation*}
Symmetrically, an online vertex left unmatched by the up-matching rule can occur only at an index whose $f$-value is strictly smaller than those at the following $d$ indices. 
We refer to these local extrema as bad indices, and let $H$ be the set of bad indices for down-matching, and $L$ be that for up-matching.
Since only vertices corresponding to bad indices can be unmatched, lower bounding the competitive ratio reduces to upper bounding the total number of bad indices:
\begin{equation}
    |H|+|L|\leq \frac{2d-2}{3d-2}\cdot n.
    \label{equation:H+L_at_most_2n/3}
\end{equation}

\paragraph{Charging Argument for Bounding the Bad Indices.}
Intuitively, there cannot be too many bad indices. If $i$ is an $H$-bad index, then $f(i)$ is larger than the values at all $d$ preceding indices.
If one of these preceding indices is also $H$-bad, then its value must in turn exceed the values at its own $d$ preceding indices. 
Consequently, along any sequence of $H$-bad indices in which consecutive indices are at distance at most $d$, the corresponding function values are strictly increasing.
We call such a sequence a \emph{chain} of bad indices.
We show that every chain has length at most $d-1$ and that any two maximal chains are separated by at least $d$ indices. This immediately gives the upper bound $\frac{d-1}{2d-1}\cdot n$ on the number of indices in $H$.
The same upper bound holds for $L$.
Although each bound can be tight individually, as witnessed by Figure~\ref{figure:up-down-hard-instances}, a strictly better bound holds for $|H|+|L|$. 
However, this relies on a careful handling of the interactions between $H$-chains and $L$-chains.
The key interaction lemma shows that when an $H$-chain of length $x$ and an $L$-chain of length $y$ are within a distance of at most $2d$, then their total length is at most $d-1$.
We convert this local restriction into a global charging argument by assigning a proxy of indices to every maximal chain.
The length of a proxy of a chain is proportional to the length of the chain.
Proxies associated with chains of the same type are disjoint.
The proxies are constructed so that an intersection between two proxies of opposite types implies $x+y\leq d-1$, which allows us to pack all proxies within a space of size $n$.
By upper bounding the total length of proxies and using the symmetry between $H$ and $L$, we prove Equation~\eqref{equation:H+L_at_most_2n/3}, which then yields a competitive ratio of $\frac{2d-1}{3d-2}$.

% \paragraph{The $1-1/e$ Upper Bound.}
% Our construction of the hard instance is recursive.
% At level $k$, consider an interval $I$ containing $k!$ offline vertices and partition it into $k$ consecutive subintervals of size $(k-1)!$.
% We introduce $(k-1)!$ online vertices whose common neighborhood is $I$.
% One of the $k$ subintervals is then selected randomly to have no future neighbors, while the construction continues in parallel within each of the other $k-1$ subintervals.
% All random choices are sampled before the matching begins, the resulting distribution is thus oblivious.

% This recursive structure forces the algorithm to make decisions without knowing which part of the instance will become ``inactive" in the future, and this uncertainty propagates across all levels.
% By symmetry, we may assume w.l.o.g. that the algorithm treats all vertices that arrive at the same batch uniformly.
% This reduces the process to a recursive allocation problem, where online vertices are distributed across $k$ identical options, one of which is later removed.
% Solving the resulting recursion and letting $k\to \infty$ yields the $1-1/e$ bound.

\subsection{Other Related Work}

Due to the vast literature on online bipartite matching, we review only the most relevant studies here.
For broader surveys, we refer to Mehta~\cite{journals/fttcs/Mehta13} and the more recent survey of Huang et al.~\cite{journals/sigecom/HuangTW24}.

\paragraph{Online Matching with Bounded Degree.}
A major line of work obtains better guarantees for online bipartite matching by exploiting the graph's degree structure.
Buchbinder et al.~\cite{conf/esa/BuchbinderJN07} considered graphs with maximum online degree $d$ and proposed a deterministic fractional algorithm with competitive ratio $1-(1-1/d)^d$.
Azar et al.~\cite{conf/soda/AzarCR17} proved a matching impossibility result for deterministic fractional algorithms.
Naor and Wajc~\cite{journals/teco/NaorW18} introduced $(k,d)$-bounded graphs in which offline vertices have degrees at least $k$ while online vertices have degrees at most $d$ with $k\geq d\geq 2$, and gave an optimal deterministic algorithm with a competitive ratio of $1-(1-1/d)^k$.
For $d$-regular graphs, Cohen and Wajc~\cite{conf/soda/CohenW18} designed a randomized algorithm whose competitive ratio tends to $1$ as $d$ grows.
Note that our uniform-length setting imposes no
nontrivial bound on the offline degrees; it is therefore incomparable with both $(k,d)$-boundedness and regularity.
Further work has studied different degree regimes, vertex capacities, and related allocation problems~\cite{conf/esa/AlbersS22,conf/wine/AlbersS22,conf/esa/CohenP23,conf/wine/FengLWZ25}.
More recently, Bhangale et al.~\cite{conf/isaac/BhangaleCH25} determined the optimal randomized competitive ratio of approximately $0.7178$ when each online vertex has degree at most $2$.

\paragraph{Convex and Geometric Matching.}
Glover~\cite{journals/Glover1967} introduced convex bipartite graphs and showed that a maximum matching can be obtained by a greedy rule.
Lipski and Preparata~\cite{journals/acta/LipskiP81} gave near-linear and linear-time algorithms for convex and doubly convex graphs, respectively.
Subsequent works~\cite{journals/Gallo1984,journals/Scutella1988} further reduced the time complexity, resulting in an $O(|V|)$-time algorithm by Steiner and Yeomans~\cite{journals/Steiner1996}.
Katriel~\cite{journals/informs/Katriel08} studied the vertex-weighted variant of convex bipartite matching and related it to scheduling unit-length jobs with release times and deadlines.
Brodal et al.~\cite{conf/mfcs/BrodalGHK07} studied the dynamic maintenance of maximum matchings in convex bipartite graphs.
Abels et al.~\cite{journals/acta/AbelsAB26} studied an online-over-time variant on interval-constrained bipartite graphs, allowing previously matched online vertices to be reassigned to later time slots.
Sentenac et al.~\cite{journals/mor/SentenacNLMP26} considered one-dimensional bipartite random geometric graphs; these graphs are doubly convex when both sides are ordered by location.
Unlike the former model, ours permits no reassignment; unlike the latter, we seek worst-case guarantees under adversarial graphs and arrival orders.

\paragraph{Max-min Greedy Matching.}
Analyzing \textsc{Flip} against the semi-adaptive adversary is closely related to the max-min greedy matching problem introduced by Eden et al.~\cite{journals/toc/EdenFF22}.
Given a bipartite graph with a perfect matching of size $n$, a max player first chooses a priority order on the offline vertices; a min player then chooses a worst-case arrival order of the online side.
Eden et al.~\cite{journals/toc/EdenFF22} showed that the max player can always guarantee a matching size of at least $0.51\cdot n$.
Importantly, the priority order chosen by the max player may depend on the graph.
Eden et al.~\cite{journals/toc/EdenFF22} showed that a uniformly random priority order gets an expected matching size of only $(0.5+o(1))\cdot n$ in the worst case.
In our uniform-length setting, \textsc{Flip} and the semi-adaptive adversary follow the same order of play. However, \textsc{Flip} randomizes only between two graph-independent priority orders: the increasing and decreasing orders of the offline vertices.
Nevertheless, we show that the average of their worst-case matching sizes is at least $2/3\cdot n$ whenever each online vertex is adjacent to exactly $d$ consecutive offline vertices.
In contrast to general bipartite graphs, the convex structure on the offline side therefore makes it possible to obtain a guarantee above $0.5$ using priority orders that are independent of the graph.
On the negative side, Cohen-Addad et al.~\cite{conf/sigecom/Cohen-AddadEFF16} observed that there exists a graph for which every priority order has a worst-case matching size of at most $2/3\cdot n$.
Chan et al.~\cite{journals/tcs/ChanTX24} established strong conditional hardness results for computing an optimal response of the min player to a given priority order.

\section{Preliminaries}

\paragraph{Notations.}
We write $[k]$ for $\{1,2,\ldots,k\}$ and $[a,b]$ for $\{a,a+1,\ldots,b\}$.

Let $G=(U\cup V,E)$ be a bipartite graph, where $U=\{u_1,\ldots,u_m\}$ is the set of offline vertices and $V=\{v_1,\ldots,v_n\}$ is the set of online vertices.
The offline vertices, together with their order $u_1,u_2,\ldots,u_m$, are known to the algorithm in advance.
The online vertices arrive one at a time in an adversarial order.
Upon the arrival of an online vertex $v$, the algorithm observes its neighborhood $N(v)\subseteq U$ and must irrevocably either match $v$ to an unmatched vertex in $N(v)$ or leave $v$ unmatched.

We impose the following two structural assumptions.
\begin{itemize}
    \item \textbf{Interval neighborhoods.} For every online vertex $v$, its neighborhood $N(v)$ forms an interval on the ordered offline set $U$.
    In other words, if offline vertices $u_i,u_j\in N(v)$ for some $i<j$, then we also have $u_k\in N(v)$ for every $k\in [i,j]$.
    
    \item \textbf{Uniform-length.} Every online vertex has exactly $d$ neighbors, where $d\geq 2$.
    Thus, for each $v\in V$, the set $N(v)$ consists of $d$ consecutive offline vertices.
\end{itemize}

For an instance $\mathcal I$ of the problem, let $\alg(\mathcal I)$ denote the size of the matching produced by an online algorithm, and let $\opt(\mathcal I)$ denote the size of a maximum matching.
The competitive ratio of a possibly randomized online algorithm is 
\begin{equation*}
    \inf_{\mathcal I} \left\{\frac{\mathbb E[\alg(\mathcal I)]}{\opt(\mathcal I)} \right\},
\end{equation*}
where the expectation is taken over the internal randomness of the algorithm.

When the uniform-length assumption is dropped, so that $|N(v)|$ may vary, we establish an upper bound of $1-1/e$ on the competitive ratio of every online algorithm.
The proof is given in Section~\ref{section:1-1/e}.

We index the online vertices so that their neighborhoods are ordered from top to bottom.
More precisely, for each $i\in[n]$, let  
\begin{equation*}
    s_i:=\min\{j:u_j\in N(v_i)\} \qquad\text{and}\qquad l_i:=\max\{j:u_j\in N(v_i)\},
\end{equation*}
and relabel the online vertices so that $s_1\leq s_2\leq\cdots\leq s_n$, breaking ties arbitrarily.
Since the neighborhood of each online vertex has length $d$, we have $l_i=s_i+d-1$ for every $i\in [n]$.
Consequently, $l_1\leq l_2\leq\cdots\leq l_n$.
These indices do not necessarily reflect the online arrival order.

\section{A Simple Randomized Algorithm: \textsc{Flip}}

We now introduce a randomized algorithm, called \textsc{Flip}, that uses only one random bit.
The algorithm chooses uniformly at random between the two natural priority orders on the offline vertices.

\begin{tcolorbox}[title=\textsc{Flip}]
    Sample $\text{flip}\in \{0,1\}$ uniformly at random.

    When an online vertex $v$ arrives:
    \begin{itemize}
        \item if $\text{flip}=0$, match it to its highest (smallest-indexed) unmatched neighbor;
        \item if $\text{flip}=1$, match it to its lowest (largest-indexed) unmatched neighbor.
    \end{itemize}

    If $v$ has no unmatched neighbor, leave it unmatched.
\end{tcolorbox}

We call the deterministic rule used when $\text{flip}=0$ the \emph{up-matching rule}, and the rule used when $\text{flip}=1$ the \emph{down-matching rule}.
For a graph $G$ and an arrival order $\sigma$, let $\alg_{\mathrm{up}}(G,\sigma)$ denote the size of the matching produced by the up-matching rule, and let $\alg_{\mathrm{down}}(G,\sigma)$ denote the corresponding quantity for the down-matching rule.
Note that both $\alg_{\mathrm{up}}(G,\sigma)$ and $\alg_{\mathrm{down}}(G,\sigma)$ are deterministic, and we have
\begin{equation*}
    \E[\alg] = \frac{1}{2}\cdot \Big( \alg_{\mathrm{up}}(G,\sigma) + \alg_{\mathrm{down}}(G,\sigma) \Big).
\end{equation*}

\subsection{A Semi-adaptive Adversary and Worst-case Arrival Orders}

We make the standard assumption of an oblivious adversary.
That is, the adversary can decide the graph and the arrival order, but cannot see the internal randomness, i.e., the flip variable, when designing the instance.
% In fact, it can be easily shown (using hard instances similar to the one we show in Section~) that 
Recall that Figure~\ref{figure:up-down-hard-instances} gives simple instances on which either up-matching or down-matching matches only about half of the vertices.
Therefore, \textsc{Flip} cannot achieve a ratio better than $1/2$ against a fully adaptive adversary.
Nevertheless, we show that its ratio is at least $2/3$ when the adversary is \emph{semi-adaptive}.

Specifically, we analyze the performance of \textsc{Flip} against the following semi-adaptive adversary.
\begin{enumerate}
    \item The adversary, who knows the algorithm, chooses the graph $G$.
    
    \item The algorithm samples the variable $\text{flip}\in \{0,1\}$.
    
    \item After observing $\text{flip}$, the adversary chooses the arrival order of the online vertices.
\end{enumerate}

Therefore, the adversary may choose one arrival order $\sigma_0$ when $\text{flip}=0$ and another arrival order $\sigma_1$ when $\text{flip}=1$.
On a fixed graph $G$, the expected matching size of \textsc{Flip} is thus
\begin{equation*}
    \min_{\sigma_0}\left\{ \frac{1}{2}\cdot \alg_{\mathrm{up}}(G,\sigma_0) \right\} + \min_{\sigma_1} \left\{ \frac{1}{2}\cdot \alg_{\mathrm{down}}(G,\sigma_1) \right\}.
\end{equation*}

\begin{theorem}
\label{theorem:flip-competitive-ratio}
    Against the semi-adaptive adversary, \textsc{Flip} is $\frac{2d-1}{3d-2}$-competitive.
\end{theorem}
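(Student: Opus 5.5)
We follow the roadmap sketched in the techniques overview.

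\textbf{Reduction to perfect matchings.} First we reduce to the case $\opt(G)=n$. Let $M^\star$ be the Glover greedy maximum matching. Deleting the online vertices that $M^\star$ leaves unmatched does not change $\opt$. We then argue that it can only help the adversary. For any arrival order on the full graph, we compare with the induced order on the subgraph. For each of the two rules, the set of matched offline vertices of the full run dominates, step by step, that of the reduced run: extra arrivals consume at most one extra offline vertex each, and the interval structure keeps the standard alternating-path argument monotone. Hence the algorithm matches at least as many vertices on the subgraph plus nothing negative, while $\opt$ is unchanged. So we may assume $M^\star$ saturates $V$, index $v_1,\dots,v_n$ as in the preliminaries, and let $\mu_i$ be the partner of $v_i$, so that $\mu_1<\mu_2<\dots<\mu_n$.

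\textbf{Canonical worst orders and bad indices.} Next we show that the semi-adaptive adversary's best responses are canonical: the order $v_n,\dots,v_1$ is worst for up-matching, and $v_1,\dots,v_n$ is worst for down-matching. We would prove this by an exchange argument. Swapping two adjacent arrivals that are out of canonical order never increases the up-matching (resp.\ down-matching) size, using that intervals have equal length so that $s$ and $l$ are both sorted. Define the offset function $f(i)=\mu_i-s_i+1\in[d]$. From $\mu_{i+1}\ge\mu_i+1$ and $s_{i+1}\ge s_i$ we get only a lower-type relation, so the claimed property $f(i+1)\le f(i)+1$ needs the greedy structure of $M^\star$ as well. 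Indeed, $\mu_{i+1}$ is the smallest free vertex, and any skipped vertices are taken by earlier $\mu_j$, which bounds the jump.

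We then simulate down-matching on $v_1,\dots,v_n$. By induction we show that $v_i$ is matched whenever some $j\in[i-d,i-1]$ has $f(j)\ge f(i)$: the earlier vertex $v_j$ "leaves room" at an index at most $\mu_i$. The contrapositive defines the bad set $H$. The symmetric simulation of up-matching on $v_n,\dots,v_1$ defines $L$, where $v_i$ can be unmatched only if $f(i)<f(j)$ for all $j\in[i+1,i+d]$. This gives
\[
\E[\alg]\ge n-\tfrac12(|H|+|L|).
\]
It therefore suffices to prove $|H|+|L|\le\frac{2d-2}{3d-2}n$, which yields exactly the ratio $\frac{2d-1}{3d-2}$.

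\textbf{Combinatorial counting.} Group the $H$-indices into maximal chains, where consecutive members are at distance at most $d$ and $f$ strictly increases along the chain. Since $f\in[d]$, every chain has length at most $d-1$. Using $f(i+1)\le f(i)+1$, we show that after the top of a chain the next $H$-index lies more than $d$ positions away. The same holds for $L$ with the direction reversed.

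The key step, and the main obstacle, is the interaction lemma: if an $H$-chain of length $x$ and an $L$-chain of length $y$ lie within distance $2d$, then $x+y\le d-1$. An $H$-chain forces $f$ to climb by $x$ into its top, and an $L$-chain forces $f$ to descend into its bottom, with its later values at least $y$ above its bottom. Since $f$ rises by at most one per step and lives in $[d]$, nearby placement squeezes both excursions into the range $[1,d]$. We would first attempt a case analysis on the relative order of the two chains, tracking the minimum and maximum of $f$ over the union window.

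We then convert this lemma into a packing argument. Each chain of length $x$ receives a proxy interval of length $\frac{3d-2}{2d-2}x$ (or a suitable integer rounding thereof) adjacent to it. Same-type proxies are disjoint by the separation property. Opposite-type proxies may overlap only when $x+y\le d-1$, and then the overlap is absorbed by the slack, since $\frac{3d-2}{2d-2}(x+y)$ is small compared to $2d$. Summing the proxy lengths gives a total of at most $n$, which is the desired bound. Tightness for \textsc{Flip} would follow by gluing the two instances of Figure~\ref{figure:up-down-hard-instances} suitably.
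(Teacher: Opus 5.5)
Your reduction to $\opt=n$, the adjacent-swap argument for the canonical worst orders, the offset function, and the sets $H,L$ follow the paper. There is one slip: in the $H$-lemma the free neighbor of $v_i$ lies in $[l_j+1,l_i]$, via $l_i-l_j\ge i-j$, not at an index at most $\mu_i$. For example, with $d=2$ and $N(v_i)=\{u_i,u_{i+1}\}$ for $i=1,2,3$, down-matching gives $u_{\mu_3}=u_3$ to $v_2$, and $v_3$ gets $u_4$.

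The genuine gap is the counting step, and first of all your interaction lemma is false as stated. Along an $L$-chain $f$ also strictly increases, because each $L$-bad index lies below the next $d$ values, which contain the next chain element. The bound $x+y\le d-1$ holds only when the $L$-chain $D$ precedes the $H$-chain $C$ with $2\le\st(C)-\ed(D)\le 2d$. In that case some $z$ satisfies $f(\ed(D))<f(z)<f(\st(C))$, and $f(\st(C))\ge f(\ed(D))+2$ combines with $\len(C)\le d+1-f(\st(C))$ and $\len(D)\le f(\ed(D))$; no Lipschitz property is needed. Two examples show the unordered version fails:
\begin{itemize}
\item In the paper's tight instance ($n=3d-2$) the only chains are $H=[d+1,2d-1]$ and $L=[d,2d-2]$. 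They interleave and have $x+y=2d-2$.
\item Placing the left instance of Figure~\ref{figure:up-down-hard-instances} before the right one (indices shifted) yields an $H$-chain of length $d-1$ immediately followed by an $L$-chain of length $d-1$.
\end{itemize}
So the proxies must be oriented, with $H$-proxies to the left of their chains and $L$-proxies to the right. You then need the proxy geometry together with the one-sided Lipschitz property to show that overlapping proxies force $\st(C)-\ed(D)\ge 2$ (Lemma~\ref{lemma:collision_P(C)_Q(D)}).

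The packing itself is also not a matter of rounding and slack. If the integral proxy size depends only on chain length, your target $\frac{3d-2}{2d-2}x$ is unattainable. For odd $d$ the tight instance needs $\frac{3d-2}{2}\notin\mathbb{Z}$ indices for each of its two chains, so rounding up overflows $[n]$ ($3d-1>n$), and rounding down loses the bound.

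The paper instead proves $3(|H|+|L|)+\kappa\le 2n$, with $\kappa$ the number of maximal chains, using two tracks:
\begin{itemize}
\item On the first track, $H$-chains get $P(C)$ of size $p(x)=\min\{3x+1,d+x\}$ starting at $\st(C)-d$. $L$-chains get $Q(D)=[\ed(D)+d-y+1,\ed(D)+y+1]$ of size $q(y)=\max\{0,2y-d+1\}$.
\item The second track swaps the roles via $\overleftarrow f(i)=d+1-f(n+1-i)$.
\end{itemize}
Then $p(x)+q(x)=3x+1$ and $\kappa\ge(|H|+|L|)/(d-1)$ give the ratio.

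Moreover, ``the overlap is absorbed by the slack'' is exactly what must be proved, since the room within distance $2d$ may already be used by proxies of other chains. The paper shows three things. Same-type extended proxies are disjoint. Every proxy meets at most one other (Lemma~\ref{lemma:at_most_one_overlap}). An overlapping $Q(D)$, of size $2y-d+1\le d-2x-1$, can be moved into $\tilde P(C)\setminus P(C)$ without hitting any other $Q(D')$. None of this is in your sketch, so $|H|+|L|\le\frac{2d-2}{3d-2}\,n$ remains unproved.
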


Before proving the theorem, we reduce the analysis of \textsc{Flip} to graphs in which some maximum matching covers every online vertex.
We then construct a maximum matching and identify a worst-case arrival order for each of the two deterministic matching rules.

Notice that for either rule, inserting additional online vertices while preserving the relative order of the original vertices cannot decrease the final matching size.
Inserting a single online vertex either has no effect or creates an alternating path.
The path ends either at a previously unmatched offline vertex, increasing the matching size by one, or at a previously matched online vertex that becomes unmatched, leaving the matching size unchanged.

Let $M$ be a maximum matching of $G$, and let $W\subseteq V$ be the set of online vertices covered by $M$.
Since $M$ is maximum, we have $\opt(G) = |W| =\opt(G_W)$.
For any online arrival order $\sigma$, let $\sigma_W$ be the subsequence obtained by deleting the vertices in $V\setminus W$.
The preceding argument gives, for each rule $r\in\{\mathrm{up},\mathrm{down}\}$, $\alg_r(G,\sigma) \geq \alg_r(G_W,\sigma_W)$, and hence
\begin{equation*}
    \min_{\sigma}\{\alg_r(G,\sigma)\} \geq \min_{\tau}\{\alg_r(G_W,\tau)\},
\end{equation*}
where the second minimum ranges over all arrival orders of $W$.
Therefore, it suffices to prove Theorem~\ref{theorem:flip-competitive-ratio} for graphs that admit a maximum matching covering all online vertices.
Henceforth, we assume without loss of generality that $\opt=|V|=n$.

Recall that the online vertices are indexed so that $s_1\leq s_2\leq\cdots\leq s_n$.
Since all interval neighborhoods have the same length, this also gives $l_1\leq l_2\leq\cdots\leq l_n$.
Consider running up-matching with arrival order
$v_1,v_2,\ldots,v_n$.
Let $M^\star$ denote the resulting matching.
This is precisely the greedy procedure for convex bipartite matching studied by Glover~\cite{journals/Glover1967}, which yields the following lemma.
For completeness, we provide an alternative proof as follows.

\begin{lemma}
\label{lemma:canonical-maximum-matching}
    $M^\star$ is a maximum matching of $G$.
\end{lemma}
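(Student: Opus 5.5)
We use the standard exchange argument: we show by induction that every prefix of the greedy choices extends to a maximum matching. Process the online vertices in the order $v_1,\dots,v_n$, where $s_1\le\cdots\le s_n$ and hence $l_1\le\cdots\le l_n$. Let $M^\star_k$ denote the partial matching after $v_1,\dots,v_k$ have been processed. The invariant is that there exists a maximum matching $M_k$ of $G$ that agrees with $M^\star_k$ on $v_1,\dots,v_k$: each of these vertices is matched in $M_k$ exactly as in $M^\star_k$, or is unmatched in both. For $k=0$ any maximum matching works. At $k=n$ we get $M_n=M^\star$, so $M^\star$ is maximum.

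For the inductive step, fix $M_{k-1}$ and consider $v_k$.

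\textbf{Case 1: greedy leaves $v_k$ unmatched.} Then every neighbor of $v_k$ is already taken by some $v_j$ with $j<k$ in $M^\star_{k-1}$, and hence also in $M_{k-1}$. So $v_k$ is unmatched in $M_{k-1}$ too, and we set $M_k=M_{k-1}$.

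\textbf{Case 2: greedy matches $v_k$ to $u_a$,} the smallest-indexed neighbor of $v_k$ still free after step $k-1$. There are subcases.
\begin{itemize}
\item If $M_{k-1}$ also matches $v_k$ to $u_a$, we are done.
\item If $M_{k-1}$ leaves $v_k$ unmatched, then $u_a$ cannot be used by any $v_j$ with $j<k$. If $u_a$ is free in $M_{k-1}$, add the edge $(v_k,u_a)$; this contradicts maximality, so it cannot happen. If $u_a$ is matched to some $v_j$ with $j>k$, swap: replace $(v_j,u_a)$ by $(v_k,u_a)$. The size is preserved.
\item If $M_{k-1}$ matches $v_k$ to some $u_b\ne u_a$, then $u_b$ was free at step $k$, so $b>a$ by greedy's choice. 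If $u_a$ is free in $M_{k-1}$, simply reassign $v_k$ to $u_a$. Otherwise $u_a$ is matched to some $v_j$ with $j>k$, since the earlier vertices are fixed and used other vertices. Then swap to $(v_k,u_a)$ and $(v_j,u_b)$.
\end{itemize}

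The main obstacle is checking that this last swap is legal, namely that $u_b\in N(v_j)$. We have $s_j\le a<b$, so $u_b$ lies above the left end of $v_j$'s interval. Also $b\le l_k\le l_j$, since $j>k$ and the right endpoints are nondecreasing. By convexity of $N(v_j)$, it follows that $u_b\in N(v_j)$.

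The swap keeps the matching size, fixes $v_1,\dots,v_k$ as greedy chose them, and touches only vertices $v_j$ with $j>k$. This establishes the invariant for $k$, and the induction proves the lemma. The key ingredient, the monotonicity $l_k\le l_j$, is exactly the non-decreasing-$l$ processing order of Glover's procedure.
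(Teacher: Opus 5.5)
Your proof is correct and is essentially the paper's own argument. Both are an induction that keeps a maximum matching agreeing with the greedy on $v_1,\ldots,v_k$, using the same three cases (greedy leaves $v_k$ unmatched; the maximum matching leaves $v_k$ unmatched; the maximum matching uses some $u_b\neq u_a$). The crucial swap is justified in both by the same chain $s_j\le a<b\le l_k\le l_j$.
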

\begin{proof}
    For a matching $M$ and an online vertex $v$, we write $M(v)=u$ if $(u,v)\in M$, and $M(v)=\bot$ if $v$ is unmatched by $M$.
    Let there be a dummy online vertex $v_0$ that has no offline neighbor.
    We show inductively that, for every $i\in[0,n]$, there exists a maximum matching $M$ such that for any $j\in [0,i]$, we have $M(v_j)=M^\star(v_j)$.
    Suppose the claim holds for $i-1$, and consider the following cases:
    \begin{itemize}
        \item $M^\star(v_i)=\bot$.
        Then every neighbor of $v_i$ is occupied by an edge of $M^\star$ incident to a smaller-indexed online vertex.
        These edges also belong to $M$, and $M$ must leave $v_i$ unmatched as well.

        \item $M^\star(v_i)=u_a$, while $M(v_i)=\bot$.
        Because $M$ is maximum, $u_a$ must be matched in $M$, say to $v_k$.
        Since $u_a$ is not matched to any of $v_1,\ldots,v_{i-1}$, we have $k>i$.
        Replacing $(v_k,u_a)$ by $(v_i,u_a)$ preserves the size of the matching and makes $M(v_i)=u_a$.

        \item $M^\star(v_i)=u_a$ while $M(v_i)=u_b$, where $a\neq b$.
        The vertex $u_b$ is not matched to any of $v_1,\ldots,v_{i-1}$. Since $M^\star$ greedily chooses the smallest-indexed available neighbor of $v_i$, we have $a<b$.
        If $u_a$ is unmatched in $M$, we may simply replace $(v_i,u_b)$ by $(v_i,u_a)$.
        Otherwise, let $v_k$ (with $k>i$) be matched to $u_a$.
        Since $u_a\in N(v_k)$, we have $s_k\leq a$.
        Together with $a<b$ and $l_i\leq l_k$, we have
        \begin{equation*}
            s_k\leq a < b\leq l_i\leq l_k,
        \end{equation*}
        and hence $u_b\in N(v_k)$.
        We may therefore replace $\{(v_i,u_b), (v_k,u_a)\}$ by $\{(v_i,u_a),(v_k,u_b)\}$.
    \end{itemize}
    
    Therefore, in every case, we obtain a maximum matching $M$ that satisfies $M(v_i)=M^\star(v_i)$.
    The lemma follows by induction.
\end{proof}

By the preceding reduction, $M^\star$ covers every online vertex.
By symmetry, the down-matching rule is optimal under the reverse order $v_n,v_{n-1},\ldots,v_1$.
The next lemma shows that each of these two orders is a worst-case order for the opposite rule (see Figure~\ref{figure:up-down-hard-instances} for an example).

\begin{lemma}[Worst-case Arrival Orders]
\label{lemma:worst-arrival-order}
    For every graph $G$, we have
    \begin{equation*}
        \min_{\sigma} \left\{ \alg_{\mathrm{up}}(G,\sigma) \right\} = \alg_{\mathrm{up}} \left(G,(v_n,v_{n-1},\ldots,v_1)\right),
    \end{equation*}
    and
    \begin{equation*}
        \min_{\sigma} \left\{\alg_{\mathrm{down}}(G,\sigma)\right\} = \alg_{\mathrm{down}} \left(G,(v_1,v_2,\ldots,v_n)\right).
    \end{equation*}
\end{lemma}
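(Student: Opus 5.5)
The plan is an adjacent-transposition (bubble-sort) exchange argument. I treat only the up-matching rule. The down-matching statement then follows by reversing the order of the offline vertices, which swaps the two rules and reverses the labelling $v_1,\ldots,v_n$; ties in $s_i$ are harmless because only $s_a\le s_b$ is used below. The goal is the following claim.

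\emph{Claim.} Let $\sigma$ be any arrival order in which $v_a$ arrives immediately before $v_b$ with $a<b$, so $s_a\le s_b$ and $l_a\le l_b$. Let $\sigma'$ be $\sigma$ with these two vertices swapped. Then $\alg_{\mathrm{up}}(G,\sigma')\le \alg_{\mathrm{up}}(G,\sigma)$.

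Every order can be turned into $(v_n,\ldots,v_1)$ by such swaps, each of which puts a larger-indexed vertex earlier. So the claim shows that $(v_n,\ldots,v_1)$ attains the minimum.

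To prove the claim, fix the set $A$ of available offline vertices just before the pair arrives; it is the same under $\sigma$ and $\sigma'$. I then compare the sets occupied after the two steps. Under $\sigma$, $v_a$ takes $x=\min(A\cap N(v_a))$ if it exists, and then $v_b$ takes $\min((A\setminus\{x\})\cap N(v_b))$. I split into cases.

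\textbf{Case $x$ does not exist, or $x\notin N(v_b)$.} Then $x<s_b$. So $v_b$'s choice is the same in both orders, and $v_a$'s choice is also unchanged, since $v_b$ never took $x$. The occupied sets coincide.

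\textbf{Case $x\in N(v_b)$.} Then $x=\min(A\cap N(v_b))$, because everything in $A\cap N(v_b)$ below $x$ would also lie in $N(v_a)$. So under $\sigma'$, $v_b$ takes $x$, and $v_a$ takes $z=\min((A\setminus\{x\})\cap N(v_a))$ if it exists.
\begin{itemize}
\item If $z$ exists, then $s_b\le x<z\le l_a\le l_b$, so $z\in N(v_b)$. Conversely, $v_b$'s choice $y$ under $\sigma$ satisfies $s_a\le s_b\le y\le z\le l_a$, so $y\in N(v_a)$, which forces $y=z$. Again the occupied sets coincide.
\item If $z$ does not exist, then under $\sigma'$ the occupied set is $S\cup\{x\}$. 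Under $\sigma$ it is either the same, or $S\cup\{x,y\}$ with exactly one more matched vertex so far.
\end{itemize}

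It remains to handle that last subcase, where the two runs continue on the same suffix of $\sigma$ but from states differing by one extra available offline vertex $y$. I will use the alternating-path argument already invoked in the paper for inserting an online vertex. Running the same deterministic priority rule from a state with one extra free offline vertex changes the final number of matched online vertices by at most one: the symmetric difference of the two runs stays a single alternating path, moving as vertices arrive. Hence the suffix gains at most one more match from the state $S\cup\{x\}$ than from $S\cup\{x,y\}$. This is exactly offset by the one-match deficit incurred at the swapped pair, giving $\alg_{\mathrm{up}}(G,\sigma')\le\alg_{\mathrm{up}}(G,\sigma)$.

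I expect this ``one extra free vertex changes the outcome by at most one'' step to be the main obstacle to write carefully. I would prove it by induction over arrivals, maintaining the invariant that the two occupied sets differ in exactly one offline vertex (or coincide), with the larger-count run never ahead by more than one.
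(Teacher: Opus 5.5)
Your proposal is correct and follows essentially the same route as the paper. Both use adjacent swaps toward $(v_n,\ldots,v_1)$, and your split (whether $x$ lies in $N(v_b)$, and whether $z$ exists) matches the paper's three cases exactly. Your explicit invariant for the suffix, that the run with one extra free offline vertex stays a superset by exactly one vertex and gains at most one extra match (after which the two runs coincide), is a valid way to make precise the ``standard alternating path argument'' that the paper only invokes.
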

\begin{proof}
    We prove the statement for the up-matching rule.
    The proof for the down-matching rule is symmetric.
    Let $\sigma_0 = (v_n,\ldots,v_1)$.
    Consider an arbitrary arrival order $\sigma \ne \sigma_0$.
    Then $\sigma$ contains two consecutively arriving vertices $v_i$ and $v_j$ with $i<j$.
    Let $\sigma'$ be the arrival order obtained by swapping these two consecutive vertices in $\sigma$, so that $v_j$ arrives immediately before $v_i$.
    We show that this swap cannot increase the final matching size.
    More precisely, we have $\alg_{\mathrm{up}}(G,\sigma) \geq \alg_{\mathrm{up}}(G,\sigma')$.

    Since $i<j$, we have $l_i\leq l_j$.
    For $v_j$, neighbors in $N(v_i)\cap N(v_j)$ (if there is any) have higher priorities than those in $N(v_j)\setminus N(v_i)$.
    Immediately before $v_i$ and $v_j$ arrive:
    \begin{itemize}
        \item If no vertex in $N(v_i)\cap N(v_j)$ is unmatched, or if at least one vertex in $N(v_i)\setminus N(v_j)$ is unmatched, the two resulting matchings under $\sigma$ and $\sigma'$ are identical.
        
        \item If at least two vertices in $N(v_i)\cap N(v_j)$ are unmatched, the resulting matchings differ only at $v_i$ and $v_j$, and we still have $\alg_{\mathrm{up}}(G,\sigma) = \alg_{\mathrm{up}}(G,\sigma')$.
    \end{itemize}
    
    The only remaining case is that all offline vertices in $N(v_i)$ are already matched before $v_i$ and $v_j$ arrive, except a unique vertex $w_0\in N(v_i)\cap N(v_j)$.
    In this case:
    \begin{itemize}
        \item Under order $\sigma$, $v_i$ arrives earlier and matches to $w_0$, $v_j$ matches to the smallest-indexed unmatched offline vertex $w_1$ in $N(v_j)\setminus N(v_i)$, if such $w_1$ exists;
            
        \item Under order $\sigma'$, $v_j$ arrives earlier and matches to $w_0$, $v_i$ is unmatched.
    \end{itemize}
    A standard alternating path argument gives $\alg_{\mathrm{up}}(G,\sigma) \geq \alg_{\mathrm{up}}(G,\sigma')$.
    Repeatedly applying the above adjacent swaps transforms any arrival order into $(v_n,v_{n-1},\ldots,v_1)$ without increasing the matching size.
    Therefore, $(v_n,v_{n-1},\ldots,v_1)$ is a worst-case arrival order for the up-matching rule.
    
    By reversing the order of the offline indices, the same argument shows that $(v_1,v_2,\ldots,v_n)$ is a worst-case arrival order for the down-matching rule.
\end{proof}

Having identified the worst arrival order for each rule, we next bound the total number of online vertices left unmatched by the two rules.

\subsection{From Unmatched Vertices to Bad Indices}

In this subsection, we introduce a discrete function $f$, called the \emph{offset function}, and relate the unmatched online vertices to certain one-sided local extrema of $f$, which we call \emph{bad indices}.

For each $i\in [n]$, let $\mu_i\in [m]$ be the index of the offline vertex that gets matched to $v_i$ by $M^\star$.
Recall that when $v_i$ is processed during the construction of $M^\star$, every offline vertex with index in $[s_i,\mu_{i-1}]$ is already occupied whenever this interval is non-empty, while the vertex with index $\mu_{i-1}+1$ is available.
Therefore, we have $\mu_1=s_1$, and $\mu_i=\max\{s_i,\mu_{i-1}+1\}$ for every $i\in [2,n]$.

\begin{definition}[Offset Function]
    For each online vertex $v_i$, where $i\in [n]$, define
    \begin{equation*}
        f(i) := \mu_i-s_i+1.
    \end{equation*}
    In other words, $f(i)$ is the number of neighbors of $v_i$ whose index is at most $\mu_i$; see Figure~\ref{figure:example-G-and-f}.
\end{definition}

\begin{figure}[htb]
\centering
\resizebox{0.8\textwidth}{!}{%
\begin{tikzpicture}[
    vertex/.style={
        circle,
        draw,
        minimum size=5.8mm,
        text width=4.7mm,
        align=center,
        inner sep=0pt,
        font=\scriptsize
    },
    u_vertex/.style={vertex, fill=blue!20},
    v_vertex/.style={vertex, fill=red!20},
    edge/.style={draw, gray, thin}
]

% -----------------------------------------------------------------
% Left panel: graph G.
% The smaller horizontal and vertical spacings make this panel
% more compact without changing the vertex sizes.
% -----------------------------------------------------------------
\foreach \i in {1,...,10} {
    \pgfmathsetmacro{\yy}{7.2-0.76*(\i-1)}
    \node[u_vertex] (lu\i) at (0,\yy) {$u_{\i}$};
    \node[v_vertex] (lv\i) at (2.9,\yy) {$v_{\i}$};
}

\foreach \j in {1,...,4} {
    \draw[edge] (lv1) -- (lu\j);
}
\foreach \j in {2,...,5} {
    \draw[edge] (lv2) -- (lu\j);
}
\foreach \j in {3,...,6} {
    \draw[edge] (lv3) -- (lu\j);
}
\foreach \i in {4,...,7} {
    \foreach \j in {4,...,7} {
        \draw[edge] (lv\i) -- (lu\j);
    }
}
\foreach \j in {5,...,8} {
    \draw[edge] (lv8) -- (lu\j);
}
\foreach \j in {6,...,9} {
    \draw[edge] (lv9) -- (lu\j);
}
\foreach \j in {7,...,10} {
    \draw[edge] (lv10) -- (lu\j);
}

% -----------------------------------------------------------------
% Right panel: offset function f.
% The y-coordinate unit is slightly smaller than before, making
% the plot slightly shorter while retaining its width.
% -----------------------------------------------------------------
\begin{scope}[shift={(4.5,1.2)}, x=0.72cm, y=0.82cm]

    % Grid: integer coordinates correspond exactly to (i,f(i))
    \draw[gray!50, thin, step=1]
        (0,0) grid (11,5);

    % x-axis
    \draw[black, thick, -{Stealth[length=2.5mm]}] (0,0) -- (11.45,0);
    \node at (11.45,-0.45) {$i$};
    \foreach \i in {1,...,10} {
        \node at (\i,-0.45) {$\i$};
    }

    % y-axis
    \draw[black, thick, -{Stealth[length=2.5mm]}] (0,0) -- (0,5.35);
    \node at (-0.45,5.3) {$f$};
    \foreach \j in {1,...,4} {
        \node at (-0.45,\j) {$\j$};
    }

    % Correct values: 1,1,1,1,2,3,4,4,4,4
    \foreach \x/\y in {
        1/1,
        2/1,
        3/1,
        4/1,
        5/2,
        6/3,
        7/4,
        8/4,
        9/4,
        10/4
    }{
        \fill[blue] (\x,\y) circle (2pt);
    }

    \draw[blue, thick]
        (1,1) -- (2,1) -- (3,1) -- (4,1)
        -- (5,2) -- (6,3) -- (7,4)
        -- (8,4) -- (9,4) -- (10,4);

\end{scope}

\end{tikzpicture}%
}
\caption{An example of graph $G$ in which $\mu_i=i$ for all
$i\in[n]$ and the corresponding offset function $f$.}
\label{figure:example-G-and-f}
\end{figure}

Since $s_i\leq \mu_i\leq l_i=s_i+d-1$, it follows immediately that $f(i)\in[d]$.
The following lemma shows the one-sided Lipschitzness of $f$.

\begin{lemma}[$f$ is one-sided Lipschitz] \label{lemma:lipschitz_of_f}
    For every $1\leq i<j\leq n$, we have $f(j)-f(i) \leq j-i$.
\end{lemma}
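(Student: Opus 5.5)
It suffices to prove the case $j=i+1$, namely $f(i+1)\le f(i)+1$; the general statement then follows by telescoping,
\begin{equation*}
f(j)-f(i)=\sum_{k=i}^{j-1}\bigl(f(k+1)-f(k)\bigr)\le j-i.
\end{equation*}

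For the one-step bound we use the recursion for $M^\star$ established just before the definition of the offset function, namely $\mu_{i+1}=\max\{s_{i+1},\mu_i+1\}$. Substituting into $f(i+1)=\mu_{i+1}-s_{i+1}+1$ splits the argument into two cases.

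\textbf{Case 1: $\mu_{i+1}=s_{i+1}$.} Then $f(i+1)=1$. Since $f(i)\ge 1$, we get $f(i+1)\le f(i)+1$ immediately.

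\textbf{Case 2: $\mu_{i+1}=\mu_i+1$.} Then
\begin{equation*}
f(i+1)=\mu_i+1-s_{i+1}+1=f(i)+1-(s_{i+1}-s_i).
\end{equation*}
The indexing convention gives $s_i\le s_{i+1}$, so the correction term $s_{i+1}-s_i$ is nonnegative and $f(i+1)\le f(i)+1$.

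This step relies on the recursion being valid, which in turn uses the standing assumption that $M^\star$ matches every online vertex (so each $\mu_i$ is defined). That assumption was justified by the earlier reduction together with \cref{lemma:canonical-maximum-matching}. No step is a real obstacle; the only point needing care is to invoke the sorted order $s_1\le\cdots\le s_n$ explicitly in the second case.
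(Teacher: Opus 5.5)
Your proof is correct and follows essentially the same route as the paper. The paper rewrites the recursion as $f(i)=\max\{0,f(i-1)-(s_i-s_{i-1})\}+1\le f(i-1)+1$ using $s_i\ge s_{i-1}$ and then telescopes; your two cases are exactly the two branches of that maximum.
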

\begin{proof}
    The recurrence $\mu_i = \max\{s_i,\mu_{i-1}+1\}$ implies that, for every $i\in [2,n]$,
    \begin{equation*}
        f(i) = \max\{s_i,\mu_{i-1}+1\}-s_i+1
        = \max\{0, f(i-1)-(s_i-s_{i-1})\} + 1
        \leq f(i-1)+1,
    \end{equation*}
    where the inequality follows from $s_i\geq s_{i-1}$.
    Consequently, for every $i<j$,
    \begin{equation*}
        f(j)-f(i) = \sum_{k=i+1}^{j} (f(k)-f(k-1)) \leq j-i,
    \end{equation*}
    and the lemma follows.
\end{proof}

The Lipschitz property is one-sided: it limits how quickly $f$ can increase but imposes no similar restriction on how quickly it can decrease.
For example, take two disjoint copies of the graph $G$ shown in Figure~\ref{figure:example-G-and-f} and concatenate them in both the offline and online orders.
At the boundary between the two copies, the offset function drops sharply from $f(10)=4$ to $f(11)=1$ in a single step.

Since each online vertex has $d$ neighbors, none of the first $d$ arrivals would be left unmatched by either of the two matching rules.
The following two lemmas further establish the necessary conditions for an online vertex to be unmatched.

\begin{lemma}
\label{lemma:unmatched-only-if-local-max}
    Consider any online vertex $v_i$ with $i\in [d+1,n]$.
    If it is not matched by the down-matching rule under the worst arrival order, then we must have
    \begin{equation*}
        f(i) > f(j), \qquad \forall j\in [i-d,i-1].
    \end{equation*}
\end{lemma}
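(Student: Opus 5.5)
The plan is to work directly with the down-matching run on the order $v_1,v_2,\ldots,v_n$, which is the worst order by Lemma~\ref{lemma:worst-arrival-order}, and to compare it with $M^\star$. Suppose $v_i$ is unmatched in this run. Then when $v_i$ arrives, every offline vertex with index in $[s_i,l_i]$ is already occupied. Since arrivals are in index order, these $d$ vertices are matched to a set $J$ of $d$ distinct earlier online vertices $v_k$ with $k<i$. Each such $v_k$ satisfies $s_k\le s_i$ and $l_k\ge s_i$, and hence $s_k\ge s_i-d+1$.

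\textbf{Step 1 (locating the block).} Let $a$ be the smallest index with $l_a\ge s_i$. Every member of $J$ has index in $[a,i-1]$, so $i-a\ge d$, i.e.\ $a\le i-d$. Consequently every $j\in[i-d,i-1]$ satisfies $j\ge a$, so $l_j\ge s_i$ and $N(v_j)$ meets $[s_i,l_i]$. Moreover, all vertices $v_a,\ldots,v_i$ have neighborhoods inside $[s_i-d+1,l_i]$. This window has $2d-1$ offline vertices and must contain at least $d+1$ of the matched partners under the perfect matching $M^\star$.

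\textbf{Step 2 (translate to $\mu$).} By the recurrence $\mu_k=\max\{s_k,\mu_{k-1}+1\}$, the sequence $\mu$ is strictly increasing, so $\mu_i\ge\mu_j+(i-j)$. The target inequality $f(i)>f(j)$ is equivalent to
\begin{equation*}
\mu_i-\mu_j>s_i-s_j .
\end{equation*}
It therefore suffices to show that $\mu_i-\mu_j\ge s_i-s_j+1$.

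Write $t$ for the last index $\le i$ at which the recurrence resets, i.e.\ $\mu_t=s_t$. If $t\le j$, then $\mu$ increases by exactly one per step on $[t,i]$. In that case $\mu_i-\mu_j=i-j$, and the claim becomes $s_i-s_j\le i-j-1$. If $t>j$, then $\mu_i=s_t+(i-t)$, which gives $f(i)=s_t-s_i+i-t+1$; this must be compared with $f(j)\le d$ using the occupancy information.

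\textbf{Step 3 (the core counting, expected main obstacle).} The down-matching run provides the needed rigidity. Vertices are processed in increasing index order, and each takes its largest free neighbor. Hence $[s_i,l_i]$ being fully occupied means that the $d$ vertices of $J$ were forced into it. I would prove by induction on the run a ``no gap'' statement: at every time, the occupied vertices in $[s_i,l_i]$ form a suffix that grows downward.

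From this I would extract a counting inequality, namely that the number of indices $k\in[j,i-1]$ with $v_k\in J$ is at most $s_i-s_j$ subtracted from $i-j$. Equivalently, the members of $J$ with index $<j$ number at least $d-(i-j)+(s_i-s_j)$. These earlier members, together with $v_j,\ldots,v_i$, must fit under $M^\star$ into offline positions at most $l_i$. Moreover, $M^\star$ assigns them increasing positions that are $\ge s_j$ from $v_j$ onward. This should push $\mu_i$ above $\mu_j+(s_i-s_j)$.

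Making this counting exact, in particular handling the case where $M^\star$ places some members of $J$ below $s_i$, is the step I expect to be delicate. My first attempt would be a Hall-type argument on the window $[s_i-d+1,l_i]$, combined with Lemma~\ref{lemma:lipschitz_of_f} to control how $f$ can rise between $j$ and $i$.
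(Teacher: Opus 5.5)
There is a genuine gap. Step~3 is where the inequality actually has to be proved, but it is only a plan, and both concrete claims it rests on are false. Take the down-matching hard instance of Figure~\ref{figure:up-down-hard-instances} with $d=4$: $N(v_1)=\{u_1,\dots,u_4\}$, $N(v_2)=\{u_2,\dots,u_5\}$, $N(v_3)=\{u_3,\dots,u_6\}$, and $N(v_4)=\dots=N(v_7)=\{u_4,\dots,u_7\}$. Let $i=5$.

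Under the order $v_1,\dots,v_7$, the vertices $v_1,\dots,v_4$ take $u_4,u_5,u_6,u_7$ in turn. The occupied part of $[s_5,l_5]=[4,7]$ is therefore $\{u_4\}$, then $\{u_4,u_5\}$, then $\{u_4,u_5,u_6\}$, and is never a suffix of $[4,7]$ until it is all of it. So your ``no gap'' invariant fails. Moreover $J=\{v_1,\dots,v_4\}$, so for $j=3$ your counting inequality reads $2\le (i-j)-(s_i-s_j)=1$, and for $j=1$ it reads $4\le 1$. The lemma itself holds here ($f(5)=2>f(1)=\dots=f(4)=1$). So the route through how $M^\star$ places the members of $J$, plus a Hall-type argument, cannot close as described.

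The missing idea is a simple count, and you were one observation away from it. You already noted that $\mu_i-\mu_j\ge i-j$. Hence $s_i-s_j\le i-j-1$ suffices for $f(i)>f(j)$ in \emph{all} cases, not only when $t\le j$, and the case split on the reset index $t$ is unnecessary.

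To get this bound, note that in the order $v_1,\dots,v_n$, every offline vertex of index larger than $l_j$ is still free right after $v_j$ is processed, since $l_k\le l_j$ for all $k\le j$. Suppose $s_i-s_j\ge i-j$. Then $l_i-l_j\ge i-j$, so $N(v_i)\cap\{u_{l_j+1},\dots,u_{l_i}\}$ contains at least $\min\{d,\,i-j\}=i-j$ vertices, all free at that moment. Only the $i-j-1$ vertices $v_{j+1},\dots,v_{i-1}$ arrive before $v_i$, so $v_i$ finds a free neighbor, a contradiction. Thus $f(i)-f(j)=(\mu_i-\mu_j)-(s_i-s_j)\ge 1$ for every $j\in[i-d,i-1]$.

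This is exactly the paper's argument, phrased there as a contradiction: $f(j)\ge f(i)$ gives $l_i-l_j\ge\mu_i-\mu_j\ge i-j$, and the same count follows.
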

\begin{proof}
    Assume the contrary that there exists some $j\in [i-d, i-1]$ such that $f(j) \geq f(i)$.
    We show that $v_i$ must have an available neighbor upon arrival, contradicting the assumption that it is unmatched.
    Recall that $l_i$ (resp. $l_j$) is the largest index of a neighbor of $v_i$ (resp. $v_j$). Thus
    \begin{equation*}
        l_j = \mu_j + d - f(j), \qquad l_i = \mu_i + d - f(i).
    \end{equation*}
    Since $f(j)\geq f(i)$, we have
    \begin{equation*}
        l_i - l_j = (\mu_i-\mu_j) + (f(j) - f(i)) \geq \mu_i - \mu_j \geq i - j.
    \end{equation*}
 
    Immediately after $v_j$ arrives, every offline vertex with index larger than $l_j$ is unmatched, because no previously arrived online vertex is adjacent to it.
    
    Next, we count the number of available neighbors of $v_i$ upon its arrival.
    \begin{itemize}
        \item If $s_i > l_j$, all matched neighbors of $v_i$ are matched by vertices that arrived between $v_j$ and $v_i$.
        Note that there are only $i-j-1 \leq d-1$ such vertices, which implies that at least one neighbor of $v_i$ is not matched.

        \item If $s_i \leq l_j$, then among neighbors with index $[l_j+1, l_i]$, at most $i-j-1$ of them are matched, which implies that at least one of them is not matched, because $l_i - l_j \geq i - j$.
    \end{itemize}
    
    Thus, at least one neighbor of $v_i$ is available when $v_i$ arrives, a contradiction.
\end{proof}

By symmetry, if $v_i$ is unmatched by the up-matching rule, then $f(i)$ is strictly smaller than the values at the next $d$ indices.
The proof is omitted, as it is identical to that of Lemma~\ref{lemma:unmatched-only-if-local-max} after reversing the order.

\begin{lemma}
\label{lemma:unmatched-only-if-local-min}
    Consider any online vertex $v_i$ with $i\in [1,n-d]$.
    If it is not matched by the up-matching rule under the worst arrival order, then we must have
    \begin{equation*}
        f(i) < f(j), \qquad \forall j\in [i+1,i+d].
    \end{equation*}
\end{lemma}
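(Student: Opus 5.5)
The plan is to argue directly, mirroring the proof of Lemma~\ref{lemma:unmatched-only-if-local-max} with the roles of the two endpoints exchanged. One could instead invoke symmetry: reverse the offline order, note that this swaps up- and down-matching, and note that $M^\star$ and $f$ transform appropriately. But checking how $f$ behaves under that reflection is fiddly, so the direct route is safer.

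By Lemma~\ref{lemma:worst-arrival-order}, the worst arrival order for the up-matching rule is $v_n,v_{n-1},\ldots,v_1$. I argue by contradiction. Suppose $v_i$ (with $i\le n-d$) is unmatched under this order, yet some $j\in[i+1,i+d]$ satisfies $f(j)\le f(i)$. I will show that $v_i$ has a free neighbor when it arrives.

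The first step is to convert the offset condition into a statement about left endpoints. By definition of $f$,
\[
s_i=\mu_i-f(i)+1 \qquad\text{and}\qquad s_j=\mu_j-f(j)+1 .
\]
The recurrence $\mu_k=\max\{s_k,\mu_{k-1}+1\}$ gives $\mu_k\ge \mu_{k-1}+1$, so $\mu_j-\mu_i\ge j-i$. Combining this with $f(i)\ge f(j)$ yields
\[
s_j-s_i=(\mu_j-\mu_i)+(f(i)-f(j))\ge j-i .
\]

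The second step uses the arrival order. Immediately after $v_j$ arrives, only vertices $v_k$ with $k\ge j$ have arrived. Each of these satisfies $s_k\ge s_j$, so every offline vertex with index smaller than $s_j$ is still unmatched. Between $v_j$ and $v_i$, exactly the vertices $v_{j-1},\ldots,v_{i+1}$ arrive, which is $j-i-1\le d-1$ vertices.

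The third step splits into two cases.
\begin{itemize}
\item If $l_i<s_j$, then all $d$ neighbors of $v_i$ are free right after $v_j$ arrives. At most $d-1$ of them can be taken before $v_i$ arrives, so one remains free.
\item If $l_i\ge s_j$, consider the neighbors of $v_i$ with indices in $[s_i,s_j-1]$. There are $s_j-s_i\ge j-i$ of them, all free after $v_j$ arrives. At most $j-i-1$ of them are consumed by the intermediate arrivals, so at least one remains free.
\end{itemize}

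In either case $v_i$ has an available neighbor upon arrival, so the up-matching rule matches it, a contradiction. The only delicate point is the inequality $s_j-s_i\ge j-i$, which needs the strict monotonicity of $\mu$ together with the assumption $f(j)\le f(i)$. Everything else is routine counting.
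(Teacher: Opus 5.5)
Your proof is correct and follows the paper's approach. The paper omits this proof as ``identical to that of Lemma~\ref{lemma:unmatched-only-if-local-max} after reversing the order,'' and your argument is exactly that mirror image: you derive $s_j-s_i\ge j-i$ from $f(j)\le f(i)$ and $\mu_j-\mu_i\ge j-i$, note that every offline vertex with index below $s_j$ is free once $v_j$ arrives in the order $v_n,\ldots,v_1$, and count the at most $j-i-1\le d-1$ intermediate arrivals. Your caution about a literal reflection is also well placed, since reflecting the offline order replaces $M^\star$ by a down-greedy matching and in general gives a different offset function, so the direct mirrored argument is the right reading of the paper's symmetry remark.
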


Lemma~\ref{lemma:unmatched-only-if-local-max} and~\ref{lemma:unmatched-only-if-local-min} identify two types of indices at which an unmatched online vertex may occur.
We collect these indices in the following sets.

\begin{definition}[Bad Indices]
    Define
    \begin{align*}
        H &:= \left\{ x\in[d+1,n]: f(x)>\max_{y\in[x-d,x-1]} \{f(y)\} \right\}, \\
        L &:= \left\{ x\in[1,n-d]: f(x)<\min_{y\in[x+1,x+d]} \{f(y)\} \right\}.
    \end{align*}
    An index in $H$ (resp. $L$) is called an $H$-bad (resp. $L$-bad) index.
\end{definition}

Notice that an index may belong to both $H$ and $L$.
For example, in Figure~\ref{figure:example-G-and-f}, we have $H=\{5,6,7\}$ and $L=\{4,5,6\}$.

By Lemmas~\ref{lemma:unmatched-only-if-local-max} and \ref{lemma:unmatched-only-if-local-min}, every online vertex left unmatched by the down-matching rule has an $H$-bad index, whereas every online vertex left unmatched by the up-matching rule has an $L$-bad index.
Equivalently,
\begin{align*}
    & \{i\in [n]: v_i \text{ is unmatched by the down-matching rule}\} \subseteq H, \\
    & \{i\in [n]: v_i \text{ is unmatched by the up-matching rule}\} \subseteq L.
\end{align*}
Consequently, the expected matching size of \textsc{Flip} is lower-bounded by
\begin{equation*}
    \frac{1}{2} \left( (n-|H|) + (n-|L|) \right) \geq n - \frac{|H|+|L|}{2}.
\end{equation*}
Recall that $\opt = n$.
To prove Theorem~\ref{theorem:flip-competitive-ratio}, it suffices to establish the bound
\begin{equation}
\label{equation:main-inequality}
    |H|+|L|\leq \frac{2d-2}{3d-2}\cdot n.
\end{equation}

\section{Bounding the Number of Bad Indices}
\label{section:bounding-H-L}

In this section, we bound $|H|+|L|$, the total number of bad indices, where an index that is both $H$-bad and $L$-bad is counted twice.
Throughout, fix an arbitrary offset function $f:[n]\to[d]$ satisfying the one-sided Lipschitz condition, i.e., $f(i+1)\leq f(i)+1$ for every $i\in[n-1]$.
These are the only properties of $f$ used in this section; in particular, we do not require that $f$ arises from a bipartite graph with uniform interval neighborhoods.

\subsection{Maximal Chains}

We first group the bad indices into maximal chains that are well separated.

\begin{definition}[Maximal Chain]
\label{definition:maximal-chain}
    List the $H$-bad indices in increasing order.
    An \emph{$H$-chain} is a sequence $C=(i_1,i_2,\ldots,i_x)$
    of consecutive indices in this list such that $i_{k+1}-i_k\leq d$ for every $k<x$.
    The chain is \emph{maximal} if it cannot be extended by another $H$-bad index on either side.
    Let $\mathcal{C}_H$ be the collection of maximal $H$-chains.
    We define $L$-chains and the collection $\mathcal{C}_L$ analogously.
\end{definition}

For any maximal chain $C=(i_1,i_2,\ldots,i_x) \in \mathcal{C}_H\cup \mathcal{C}_L$, we denote its starting index, ending index, and length by
\begin{equation*}
    \st(C) = i_1,
    \qquad
    \ed(C) = i_x,
    \qquad
    \len(C) = x.
\end{equation*}

The maximal chains partition $H$ and $L$. 
In particular, we have
\begin{equation*}
    \sum_{C\in\mathcal{C}_H} \len(C) = |H|,
    \qquad
    \sum_{D\in\mathcal{C}_L} \len(D) = |L|.
\end{equation*}

Furthermore, maximal chains of the same type are separated by at least $d+1$ indices. Specifically, for any $C_1,C_2\in\mathcal{C}_H$ with $\st(C_1)<\st(C_2)$, we have
\begin{equation*}
    \st(C_2) \geq \ed(C_1) + d + 1.
\end{equation*}
In other words, between the last index of $C_1$ and the first index of $C_2$, there are at least $d$ other indices.
We next record a basic property of maximal chains.

\begin{lemma}
\label{lemma:chain-properties}
    Let $C=(i_1,\ldots,i_x)$ be any maximal chain.
    Then we have
    \begin{equation}
    \label{equation:chain-weight-upper-bound}
        x-1 \leq f(i_x)-f(i_1) \leq \min\{i_x - i_1, d-1\}.
    \end{equation}
\end{lemma}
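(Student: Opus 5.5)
The plan is to prove the two inequalities in \eqref{equation:chain-weight-upper-bound} separately. I treat $H$-chains in detail; for $L$-chains, the same statements follow by reversing the index order, as explained at the end.

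\textbf{Lower bound.} For an $H$-chain, consecutive elements satisfy $i_{k}\in[i_{k+1}-d,\,i_{k+1}-1]$, because $i_{k+1}-i_k\leq d$ and $i_k<i_{k+1}$. Since $i_{k+1}\in H$, the definition of $H$ gives $f(i_{k+1})>f(i_k)$. As $f$ is integer-valued, this means $f(i_{k+1})\geq f(i_k)+1$. Summing over $k=1,\ldots,x-1$ gives $f(i_x)-f(i_1)\geq x-1$.

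\textbf{Upper bound.} The bound $f(i_x)-f(i_1)\leq i_x-i_1$ is exactly Lemma~\ref{lemma:lipschitz_of_f}, applied when $x\geq 2$; the case $x=1$ is trivial. The bound $f(i_x)-f(i_1)\leq d-1$ follows from $f(i)\in[d]$, since then $f(i_x)\leq d$ and $f(i_1)\geq 1$.

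\textbf{$L$-chains.} For $D=(i_1,\ldots,i_x)$ with $i_{k+1}-i_k\leq d$, we have $i_{k+1}\in[i_k+1,\,i_k+d]$. Since $i_k\in L$, this gives $f(i_k)<f(i_{k+1})$, so again $f$ strictly increases along the chain. The lower bound then follows identically, and the upper bound uses the same two facts.

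Nothing here is a real obstacle. The only point needing care is checking that the chain gap condition $i_{k+1}-i_k\leq d$ places the neighbouring chain element inside the comparison window in the definition of $H$ (respectively $L$), with the window direction chosen correctly in each case.
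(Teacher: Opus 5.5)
Your proposal is correct and follows the paper's proof: the gap condition $i_{k+1}-i_k\leq d$ places the neighbouring chain element in the comparison window, so $f$ increases by at least one per step, and the upper bound comes from Lemma~\ref{lemma:lipschitz_of_f} together with $f(\cdot)\in[d]$. Your explicit check of the window direction for $L$-chains is a useful detail that the paper leaves implicit in ``by Definition''.
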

\begin{proof}
    By Definition~\ref{definition:maximal-chain}, for every $k<x$, we have $i_{k+1}-i_k\leq d$, which implies $f(i_k) < f(i_{k+1})$.
    Because $f$ is integer-valued, we have $f(i_{k+1})-f(i_k) \geq 1$.
    Summing over all $k\in \{1,\ldots,x-1\}$ gives $x-1\leq f(i_x)-f(i_1)$.
    The final inequality in Equation~\eqref{equation:chain-weight-upper-bound} follows from the one-sided Lipschitz property of $f$ (Lemma~\ref{lemma:lipschitz_of_f}) and from $f(\cdot)\in[d]$.
\end{proof}

\begin{corollary}
\label{corollary:chain-length-d-minus-one}
For every $C\in\mathcal{C}_H$ and $D\in\mathcal{C}_L$, we have
\begin{equation}
\label{equation:chain-endpoint-bounds}
    \len(C)\leq d-f(\st(C))+1,
    \quad \text{ and } \quad
    \len(D)\leq f(\ed(D)).
\end{equation}
Consequently, every maximal chain has length at most $d-1$.
\end{corollary}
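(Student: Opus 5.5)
The corollary follows from Lemma~\ref{lemma:chain-properties} once we use the range of $f$, which is $[d]$. The two chain types need slightly different handling, because for $L$-chains the bad index sits below its successors rather than above its predecessors.

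First take $C=(i_1,\ldots,i_x)\in\mathcal{C}_H$. Lemma~\ref{lemma:chain-properties} gives $x-1\le f(i_x)-f(i_1)$. Since $f(i_x)\le d$, this yields $x\le d-f(\st(C))+1$, which is the first bound in~\eqref{equation:chain-endpoint-bounds}. To bound this by $d-1$, we need $f(i_1)\ge 2$. Because $i_1\in H$, we have $i_1\ge d+1$, so the window $[i_1-d,i_1-1]$ is nonempty. It contains some $y$ with $f(y)\ge 1$, and $f(i_1)>f(y)$ gives $f(i_1)\ge 2$. Hence $\len(C)\le d-1$.

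Next take $D=(j_1,\ldots,j_y)\in\mathcal{C}_L$. Lemma~\ref{lemma:chain-properties} is stated for any maximal chain. For $L$-chains the defining inequality $f(j)<\min_{[j+1,j+d]}f$ makes $f$ strictly \emph{decreasing} along consecutive chain elements. If $j_k<j_{k+1}\le j_k+d$, then $j_{k+1}$ lies in the forward window of $j_k$, so $f(j_k)<f(j_{k+1})$. That is increasing after all. So the same telescoping gives $y-1\le f(j_y)-f(j_1)$. Since $f(j_1)\ge 1$, we get $y\le f(j_y)=f(\ed(D))$, which is the second bound.

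To get $\len(D)\le d-1$, we need $f(\ed(D))\le d-1$. Because $j_y\in L$, we have $j_y\le n-d$, so the window $[j_y+1,j_y+d]$ is nonempty. Any $z$ in it satisfies $f(j_y)<f(z)\le d$, so $f(j_y)\le d-1$.

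The only delicate point is applying these endpoint facts on the correct side. For $H$ we must use $\st(C)$ with its backward window, and for $L$ we must use $\ed(D)$ with its forward window. In each case the window is guaranteed nonempty by the range restrictions $x\in[d+1,n]$ and $x\in[1,n-d]$ built into the definition of bad indices. Everything else is immediate from Lemma~\ref{lemma:chain-properties}.
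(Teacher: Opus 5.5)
Your proof is correct and follows the paper's argument: apply Lemma~\ref{lemma:chain-properties}, bound the far endpoint using $f\in[d]$, and use $f(\st(C))\ge 2$ and $f(\ed(D))\le d-1$, which come from the nonempty windows in the definitions of $H$ and $L$. You should delete the aside claiming $f$ is ``strictly decreasing'' along $L$-chains, which you immediately retract; Lemma~\ref{lemma:chain-properties} already covers $L$-chains, with $f$ increasing along them.
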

\begin{proof}
For every $C\in \mathcal{C}_H$, Lemma~\ref{lemma:chain-properties} implies
\begin{equation*}
    \len(C)
    \leq 1 + f(\ed(C))-f(\st(C))
    \leq 1 + d - f(\st(C)) \leq d-1,
\end{equation*}
where the last inequality holds because $\st(C)\in H$ implies $f(\st(C))\geq 2$.

Similarly, for every $D\in \mathcal{C}_L$, Lemma~\ref{lemma:chain-properties} implies
\begin{equation*}
    \len(D)
    \leq 1 + f(\ed(D))-f(\st(D))
    \leq f(\ed(D)) \leq d-1,
\end{equation*}
where the last inequality holds because $\ed(D)\in L$ implies $f(\ed(D))\leq d-1$.
\end{proof}

We next show that if an $L$-chain precedes an $H$-chain and the two chains are sufficiently close, then their total length is bounded.

\begin{lemma}
\label{lemma:chain-interaction}
Let $C\in \mathcal{C}_H$ and $D\in \mathcal{C}_L$.
If $2 \leq \st(C)-\ed(D)\leq 2d$,
then $\len(C) + \len(D) \leq d - 1$.
\end{lemma}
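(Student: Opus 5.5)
The proof rests on one estimate, $f(\st(C))\geq f(\ed(D))+2$. Combined with the endpoint bounds of Corollary~\ref{corollary:chain-length-d-minus-one}, this gives the claim directly.

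Write $a:=\ed(D)$ and $b:=\st(C)$, so that $2\leq b-a\leq 2d$. Corollary~\ref{corollary:chain-length-d-minus-one} gives
\begin{equation*}
    \len(C)+\len(D)\leq \bigl(d-f(b)+1\bigr)+f(a).
\end{equation*}
Hence it suffices to show $f(b)\geq f(a)+2$.

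For this, I use the defining inequalities of the two bad indices. Since $a\in L$, we have $f(y)>f(a)$ for every $y\in[a+1,a+d]$. Since $b\in H$, we have $f(b)>f(y)$ for every $y\in[b-d,b-1]$.

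The key step is to check that the windows $[a+1,a+d]$ and $[b-d,b-1]$ share an index $y$.
\begin{itemize}
    \item Their intersection is $[\max\{a+1,b-d\},\min\{a+d,b-1\}]$.
    \item $a+1\leq b-1$ holds because $b-a\geq 2$.
    \item $b-d\leq a+d$ holds because $b-a\leq 2d$.
    \item $a+1\leq a+d$ and $b-d\leq b-1$ hold trivially since $d\geq 2$.
\end{itemize}
So each lower endpoint is at most each upper endpoint, and the intersection is nonempty. Moreover, $y\in[b-d,b-1]$ forces $y\geq b-d\geq d+1-d=1$, because $b\in H$ means $b\geq d+1$; similarly $y\leq a+d\leq n$ because $a\leq n-d$. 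Thus $y$ is a valid index in $[n]$.

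For such a $y$ we get the chain of strict inequalities $f(a)<f(y)<f(b)$. Since $f$ is integer-valued, $f(b)\geq f(a)+2$. Substituting into the bound above yields
\begin{equation*}
    \len(C)+\len(D)\leq d+1-2=d-1.
\end{equation*}

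There is no real obstacle here. The only point needing care is that the hypothesis range $2\leq b-a\leq 2d$ is exactly what makes the two windows overlap: the lower bound $2$ prevents the windows from being empty relative to each other when $b=a+1$, and the upper bound $2d$ prevents a gap between them.
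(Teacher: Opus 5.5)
Your proposal is correct and follows essentially the same route as the paper. Both use the overlap of the windows $[\ed(D)+1,\ed(D)+d]$ and $[\st(C)-d,\st(C)-1]$ to get $f(\st(C))\geq f(\ed(D))+2$, then combine this with the chain-length bounds; you cite Corollary~\ref{corollary:chain-length-d-minus-one} where the paper applies Lemma~\ref{lemma:chain-properties} directly, and you check the window intersection and index validity more explicitly.
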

\begin{proof}
    By the assumption of the lemma,
    \begin{equation*}
        [\ed(D)+1,\ed(D)+d]\cap[\st(C)-d,\st(C)-1] \ne \varnothing.
    \end{equation*}
    For any $z$ in this intersection, the definitions of $L$-bad and $H$-bad indices give
    \begin{equation*}
        f(\ed(D)) < f(z) < f(\st(C)).
    \end{equation*}
    Since $f$ is integer-valued, $f(\st(C)) \geq f(\ed(D))+2$.
    Therefore, by Lemma~\ref{lemma:chain-properties},
    \begin{align*}
        \len(C) + \len(D)
        &\leq \bigl(1+f(\ed(C))-f(\st(C))\bigr)
        +\bigl(1+f(\ed(D))-f(\st(D))\bigr) \\
        &\leq d+1+f(\ed(D))-f(\st(C)) \\
        &\leq d-1,
    \end{align*}
    where the second inequality uses $f(\ed(C))\leq d$ and $f(\st(D))\geq 1$.
\end{proof}

We use these properties of maximal chains to bound $|H|+|L|$ in the following subsections.

\subsection{Proxies: A Warm-up Analysis}

As a warm-up, we first prove $|H|\leq \frac{d-1}{2d-1}\cdot n$; the symmetric argument gives the same bound for $|L|$.

Recall that $|H| = \sum_{C\in \mathcal{C}_H} \len(C)$.
We upper bound $|H|$ by the following charging argument.
For each maximal chain $C\in \mathcal{C}_H$, we define a \emph{proxy} $P(C)$, which is a subset of indices in $[n]$.
We show that 
\begin{itemize}
    \item[(1)] for each $C\in \mathcal{C}_H$, we have $|P(C)| \geq \frac{2d-1}{d-1}\cdot \len(C)$;
    \item[(2)] the proxies $\{P(C)\}_{C\in \mathcal{C}_H}$ are pairwise disjoint.
\end{itemize}

Combining these two properties, we get
\begin{equation*}
    \frac{2d-1}{d-1}\cdot |H| = \sum_{C\in \mathcal{C}_H} \frac{2d-1}{d-1} \cdot \len(C) \leq \sum_{C\in \mathcal{C}_H} |P(C)| \leq n,
\end{equation*}
where the last inequality follows since the proxies are pairwise disjoint.

It remains to construct these proxies.
For each $C\in \mathcal{C}_H$, we let
\begin{equation*}
    P(C) = [\st(C) - d, \ed(C)].
\end{equation*}

The proxy satisfies property (1) because
\begin{equation*}
    |P(C)| = d + 1 + (\ed(C) - \st(C)) \geq d + \len(C) \geq \frac{2d-1}{d-1}\cdot \len(C),
\end{equation*}
where the second inequality follows from $\len(C)\leq d-1$.
Property (2) follows directly because any maximal $H$-chain $C'$ preceding $C$ satisfies $\ed(C') \leq \st(C)-d-1$.

Symmetrically, we can also show that $|L|\leq \frac{d-1}{2d-1}\cdot n$.
Unfortunately, combining the two bounds gives only $|H|+|L|\leq \frac{2d-2}{2d-1}\cdot n$, which yields a competitive ratio approaching $1/2$ as $d\to\infty$.

A key observation is that the two inequalities $|H|\leq \frac{d-1}{2d-1}\cdot n$ and $|L|\leq \frac{d-1}{2d-1}\cdot n$ cannot be tight simultaneously.
For example, for the hard instance for down-matching in Figure~\ref{figure:up-down-hard-instances}, we have $n=2d-1$ and $|H|=d-1$.
Hence $|H| = \frac{d-1}{2d-1}\cdot n$.
However, in this case $L=\varnothing$, which gives a strictly better bound on $|H|+|L|$.
Similarly, for the hard instance for up-matching, we have $|L| = \frac{d-1}{2d-1}\cdot n$ but $H = \varnothing$.
This suggests that we should design proxies simultaneously for maximal chains in $\mathcal{C}_H$ and $\mathcal{C}_L$, and provide an upper bound for their combined size.
Consider the naive combination of the proxies above for chains in $\mathcal{C}_H$ and $\mathcal{C}_L$:
\begin{itemize}
    \item let $P(C) = [\st(C) - d, \ed(C)]$ for each $C\in \mathcal{C}_H$;
    \item let $Q(D) = [\st(D), \ed(D) + d]$ for each $D\in \mathcal{C}_L$.
\end{itemize}

While the proxies $\{P(C)\}_{C\in \mathcal{C}_H}$ are pairwise disjoint and the proxies $\{Q(D)\}_{D\in \mathcal{C}_L}$ are pairwise disjoint, there may be \emph{overlaps} between some $P(C)$ and $Q(D)$.
If there were no overlaps, the same charging argument would give $|H|+|L|\leq \frac{d-1}{2d-1}\cdot n$, which is false in general.

Consider an $L$-chain $D$ preceding an $H$-chain $C$.
Since $Q(D)$ extends the interval of $D$ by a distance of $d$ to the right and $P(C)$ extends the interval of $C$ by a distance of $d$ to the left, if $Q(D)\cap P(C) \neq \varnothing$, then the two chains must have a distance at most $2d$, i.e., $\st(C) - \ed(D) \leq 2d$.
However, in such a case, we can use Lemma~\ref{lemma:chain-interaction} to show that the total length of $C$ and $D$ is at most $d-1$.
This interaction bound allows us to use shorter proxies than those defined above.
Intuitively, when a chain is short, say $\len(C)\leq(d-1)/2$, its proxy need not extend the chain by a full distance of $d$.
We can therefore shorten the proxies of short chains to reduce potential overlaps.
Moreover, the proxy design should distinguish chains longer than $(d-1)/2$ from those of length at most $(d-1)/2$, since two long chains cannot interact in this way.

\medskip

As a first target, consider proving $|H|+|L|\leq 2n/3$, which is equivalent to
\begin{equation*}
    3\cdot |H| + 3\cdot |L| \leq 2\cdot n.
\end{equation*}

Ideally, we want to design a proxy $P(C)$ for every $C\in \mathcal{C}_H$ and a proxy $Q(D)$ for every $D\in \mathcal{C}_L$, such that the following properties hold:
\begin{itemize}
    \item[1] for each $C \in \mathcal{C}_H$, we have $|P(C)| \geq 3\cdot \len(C)$;
    \item[2] for each $D \in \mathcal{C}_L$, we have $|Q(D)| \geq 3\cdot \len(D)$;
    \item[3] every index in $[n]$ is covered at most twice by the proxies $\{ P(C) \}_{C\in \mathcal{C}_H} \cup \{ Q(D) \}_{D\in \mathcal{C}_L}$.
\end{itemize}

If these three properties hold, the same counting argument gives $|H|+|L|\leq {2n}/{3}$.
Two obstacles prevent such a direct construction.

First, observe that we cannot require the proxies $\{P(C)\}_{C\in \mathcal{C}_H}$ to be pairwise disjoint, as otherwise we can show that $|H| \leq n/3$, which is incorrect.
Thus some overlap between the proxies is unavoidable.
In particular, overlaps may occur not only between proxies of different types, as discussed above, but also between proxies of the same type.

Second, we may no longer be able to require $P(C)$ to be an ordinary subset of $[n]$.
For an extreme example, consider the hard instance for down-matching in Figure~\ref{figure:up-down-hard-instances}. Here $n=2d-1$, and the unique maximal $H$-chain $C$ has length $d-1$.
Therefore, it is infeasible to have both $P(C) \subseteq [n]$ and $|P(C)|\geq 3\cdot \len(C)$, for any $d\geq 3$.
Note that this can only happen when $\len(C) > (d-1)/2$.
If $\len(C)\leq (d-1)/2$, then the interval $[\st(C)-d,\ed(C)]$ has size at least $d+\len(C)\geq 3\cdot\len(C)$, so it contains enough space for the proxy.

In summary, these observations suggest constructing proxies for $H$- and $L$-chains simultaneously and bounding their total size.
The construction should distinguish different chain lengths and use two tracks to accommodate the larger proxies. 
Finally, strengthening the target bound from $|H|+|L|\leq 2n/3$ to $|H|+|L|\leq \frac{2d-2}{3d-2}\cdot n$ requires one additional unit of proxy size per maximal chain.

\subsection{A Two-Track Proxy Packing}
\label{subsection:two-track-proxies}

We now present a two-track proxy packing motivated by the observations above.

Let $\kappa:=|\mathcal{C}_H|+|\mathcal{C}_L|$ be the total number of maximal chains. We prove that
\begin{equation}
\label{equation:strong-chain-packing}
    3(|H|+|L|)+\kappa\leq 2n.
\end{equation}

The proof uses two copies of $[n]$, called \emph{tracks}. On the first track, every $H$-chain receives a large proxy and every $L$-chain receives a small proxy; on the second track, these roles are reversed. We show that the proxies on each track can be packed into $[n]$ and that the two proxies assigned to each chain $C$ have total size $3\cdot\len(C)+1$. After the track labels are removed, every index of $[n]$ is covered at most twice.

Before describing the construction, we show that Equation~\eqref{equation:strong-chain-packing} implies the desired bound on $|H|+|L|$.
By Corollary~\ref{corollary:chain-length-d-minus-one}, every maximal chain has length at most $d-1$. Therefore,
\begin{equation*}
    \kappa\geq\frac{|H|+|L|}{d-1}.
\end{equation*}
Combining this inequality with \eqref{equation:strong-chain-packing} gives
\begin{equation*}
    3(|H|+|L|)+\frac{|H|+|L|}{d-1}\leq 2n.
\end{equation*}
Rearranging the inequality yields the desired upper bound:
\begin{equation*}
    |H|+|L| \leq \frac{2d-2}{3d-2}\cdot n.
\end{equation*}

\smallskip

We next construct the proxies, beginning with the first track.
For every $C\in \mathcal{C}_H$, define its proxy as
\begin{equation*}
    P(C) :=
    \begin{cases}
        [\st(C)-d,\st(C)-d+3\cdot \len(C)], & \len(C) \leq (d-1)/2,\\
        [\st(C)-d,\st(C)+\len(C)-1], & \len(C) >(d-1)/2.
    \end{cases}
\end{equation*}

We call a chain $C$ \emph{short} if $\len(C)\leq (d-1)/2$, and \emph{long} otherwise.
\begin{itemize}
    \item When $C$ is short, the proxy has size $3\cdot \len(C)+1$.
    \item When $C$ is long, the proxy has size $\len(C)+d$.
\end{itemize}
Thus $|P(C)|=p(\len(C))$, where $p(x):=\min\{3x+1,d+x\}$.
Moreover, $P(C)\subseteq[n]$: we have $\st(C)\geq d+1$, and the last index of $P(C)$ is at most $\ed(C)$.

Our goal is to assign two proxies to each chain $C$ whose combined size is $3\cdot \len(C)+1$.
For short $C$, the proxy $P(C)$ we construct above already has size $3\cdot \len(C) + 1$, which means that its proxy in the second track is empty.
For long $C$, the proxy $P(C)$ has size $\len(C) + d < 3\cdot \len(C) + 1$, which means that its proxy in the second track should have size $2\cdot \len(C) - d + 1$.
Since the roles of $H$ and $L$ are reversed on the second track, the first-track proxy for an $L$-chain should follow the same rule as the second-track proxy for an $H$-chain.

Therefore, for every $D\in\mathcal{C}_L$, we define its proxy as
\begin{equation}
\label{equation:small-L-proxy}
    Q(D):=
    \begin{cases}
        \varnothing, & \len(D) \leq (d-1)/2,\\
        [\ed(D)+d-\len(D)+1,\ed(D)+\len(D)+1], & \len(D) > (d-1)/2.
    \end{cases}
\end{equation}

Thus $|Q(D)|=q(\len(D))$, where $q(x):=\max\{0,2x-d+1\}$.
Moreover, $Q(D)\subseteq[n]$: since $\ed(D)\leq n-d$ and $\len(D)\leq d-1$, its right endpoint is at most $n$.

The key step is to show that the total size of the first-track proxies is at most $n$.

\begin{lemma}[First-track Packing]
\label{lemma:first-copy-packing}
We have
\begin{equation}
\label{equation:first-copy-packing}
    \sum_{C\in\mathcal{C}_H}p(\len(C))
    +\sum_{D\in\mathcal{C}_L}q(\len(D))
    \leq n.
\end{equation}
\end{lemma}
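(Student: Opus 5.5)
The plan is to show that the first-track proxies $\{P(C)\}_{C\in\mathcal{C}_H}\cup\{Q(D)\}_{D\in\mathcal{C}_L}$ are pairwise disjoint subsets of $[n]$. Since $|P(C)|=p(\len(C))$ and $|Q(D)|=q(\len(D))$, and we already know $P(C),Q(D)\subseteq[n]$, this gives Equation~\eqref{equation:first-copy-packing} immediately.

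\textbf{Step 1: proxies of the same type.} First I record where each proxy lives.
\begin{itemize}
\item $P(C)\subseteq[\st(C)-d,\ed(C)]$, because in both cases its right endpoint is at most $\ed(C)$. The warm-up argument then applies verbatim: if $C'$ precedes $C$ in $\mathcal{C}_H$, then $\ed(C')\leq\st(C)-d-1$, so $P(C')\cap P(C)=\varnothing$.
\item For long $D$, we have $\len(D)\leq d-1$ by Corollary~\ref{corollary:chain-length-d-minus-one}. Hence $Q(D)\subseteq[\ed(D)+2,\ed(D)+d]$. Since the next maximal $L$-chain starts at index at least $\ed(D)+d+1$, the sets $Q(D)$ are pairwise disjoint as well.
\end{itemize}

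\textbf{Step 2: a long $H$-chain against a long $L$-chain.} It remains to treat a pair $C\in\mathcal{C}_H$ and a long $D\in\mathcal{C}_L$; short $L$-chains have empty proxy. Suppose first that $C$ is also long. Then $\len(C)+\len(D)>d-1$, so the contrapositive of Lemma~\ref{lemma:chain-interaction} gives $\st(C)-\ed(D)\notin[2,2d]$.
\begin{itemize}
\item If $\st(C)-\ed(D)\geq 2d+1$, then $P(C)$ starts at $\st(C)-d\geq\ed(D)+d+1$. This lies beyond $Q(D)$, so the two are disjoint.
\item If $\st(C)\leq\ed(D)+1$, I need $\max P(C)<\ed(D)+d-\len(D)+1$, which is the left endpoint of $Q(D)$.
\end{itemize}

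\textbf{Step 3: short $H$-chains.} When $C$ is short, $P(C)$ occupies only $[\st(C)-d,\st(C)-d+3\len(C)]$, which is a subset of $[\st(C)-d,\ed(C)]$.
\begin{itemize}
\item If $\st(C)-\ed(D)\geq 2$, Lemma~\ref{lemma:chain-interaction} may apply and give $\len(C)+\len(D)\leq d-1$. In that case I would compare the right endpoint $\st(C)-d+3\len(C)$ with the left endpoint $\ed(D)+d-\len(D)+1$ of $Q(D)$, or else show $P(C)$ begins after $\ed(D)+\len(D)+1$. I would split on the sign of $\st(C)-d-\ed(D)$ and use $3\len(C)\leq d-1+\len(C)$ together with the interaction bound.
\item If $\st(C)\leq\ed(D)+1$, this is the same configuration as the second case of Step 2.
\end{itemize}

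\textbf{Main obstacle: chains that overlap in position.} The hard configuration is $\st(C)\leq\ed(D)+1$, where the $H$-chain begins before the $L$-chain ends. Lemma~\ref{lemma:chain-interaction} says nothing here. My first attempt is to exploit the $f$-values directly.
\begin{itemize}
\item Every index $z\in[\ed(D)+1,\ed(D)+d]$ has $f(z)>f(\ed(D))\geq\len(D)$.
\item Any index of $C$ at or before $\ed(D)$ has its $f$-value exceeding the values of the $d$ indices preceding it. Those preceding indices include elements of $D$ whose $f$-values climb up to $f(\ed(D))$.
\item Chain growth gives $f(\ed(C))-f(\st(C))\geq\len(C)-1$, with all values in $[d]$.
\end{itemize}
From these I would try to show that $C$ cannot contain an index beyond $\ed(D)+d-\len(D)$. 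Equivalently, I would try to show that the part of $C$ lying after $\st(C)$ has fewer than $\ed(D)+d-\len(D)+1-\st(C)$ elements. Either conclusion places $\max P(C)$ strictly left of $Q(D)$. If that fails, I would fall back on a finer case analysis according to whether $\ed(C)<\st(D)$, or the two chains genuinely interleave.
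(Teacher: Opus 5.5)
\paragraph{The proof has a genuine gap.} Your plan is to show that the first-track proxies $\{P(C)\}_{C\in\mathcal{C}_H}\cup\{Q(D)\}_{D\in\mathcal{C}_L}$ are pairwise disjoint. That claim is false, so Step 3 cannot be completed.

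Take $d=4$, $n=6$ and $f=(1,2,3,3,3,4)$. This $f$ is one-sided Lipschitz with values in $[4]$. It is even realized by the graph with intervals $[1,4],[1,4],[1,4],[2,5],[3,6],[3,6]$, where $\mu_i=i$. Here $L=\{1,2\}$ and $H=\{6\}$. So $D=(1,2)$ is a long $L$-chain and $C=(6)$ is a short $H$-chain. Their proxies are $P(C)=[2,5]$ and $Q(D)=[5,5]$, which share index $5$.

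In this example $\st(C)-\ed(D)=4\in[2,2d]$ and $\len(C)+\len(D)=3=d-1$. Hence $3\len(C)\le d-1+\len(C)$ and the interaction bound of Lemma~\ref{lemma:chain-interaction} both hold, yet the intervals still intersect. No endpoint comparison built from those inequalities can give disjointness in the case where $C$ is short, $D$ is long, and $2\le \st(C)-\ed(D)\le 2d$. The lemma still holds here ($4+1\le 6$). But proving it requires accounting for the overlap, not ruling it out.

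\paragraph{The missing idea.} The paper relocates the overlapping proxy rather than proving disjointness. When $P(C)$ meets $Q(D)$, write $x=\len(C)$ and $y=\len(D)$. One shows that $C$ is short, $D$ is long, and $x+y\le d-1$. Hence
\[
|Q(D)|=2y-d+1\le d-2x-1=\bigl|\tilde{P}(C)\setminus P(C)\bigr|,
\qquad\text{where } \tilde{P}(C)=[\st(C)-d,\st(C)+x-1].
\]
So $Q(D)$ can be moved into the unused tail of the extended proxy; in the example, to $\{6\}$.

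Making this rigorous needs three further facts:
\begin{itemize}
\item extended proxies of the same type are disjoint;
\item each proxy meets at most one other proxy;
\item no other $Q(D')$ reaches into $\tilde{P}(C)\setminus P(C)$.
\end{itemize}
Your proposal contains none of these.

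\paragraph{The case you flagged is actually easy.} The configuration you call the main obstacle, $\st(C)\le \ed(D)+1$, is straightforward. An overlap forces $\ed(D)-\st(C)\le x+y-d-2$. On the other hand, Corollary~\ref{corollary:chain-length-d-minus-one} gives $x\le d-f(\st(C))+1$ and $y\le f(\ed(D))$. Combine these with the Lipschitz bound when $\st(C)\le\ed(D)$, or with $f(\ed(D))<f(\st(C))$ when $\st(C)=\ed(D)+1$. This yields $x+y\le d+1+\ed(D)-\st(C)$, a contradiction. So disjointness does hold there, and the real difficulty lies in the gap-$[2,2d]$ case, where your approach breaks.
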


We defer the proof to the next subsection and now construct the second-track proxies symmetrically.
Importantly, for every $x\in[d-1]$, we have $p(x)+q(x)=3x+1$.

We define the second track proxies based on a rotated version of $f$.
Specifically, we define
\begin{equation*}
    \overleftarrow f(i):=d+1-f(n+1-i).
\end{equation*}
Geometrically, $\overleftarrow f$ is obtained by rotating $f$ by 180 degrees, reversing both the horizontal and vertical directions.
For $i\leq j$, we have
\begin{equation*}
    \overleftarrow f(j)-\overleftarrow f(i)
    =f(n+1-i)-f(n+1-j)
    \leq j-i.
\end{equation*}
Thus $\overleftarrow f$ satisfies the same one-sided Lipschitz condition.
Moreover, the map $i\mapsto n+1-i$ exchanges the $H$-chains and the
$L$-chains without changing their lengths. 
For example, every $C\in\mathcal{C}_H$ for $f$ becomes an $L$-chain for $\overleftarrow f$.
Applying the same construction to $\overleftarrow f$ therefore gives
\begin{equation}
\label{equation:second-copy-packing-swapped}
    \sum_{C\in\mathcal{C}_H}q(\len(C))
    + \sum_{D\in\mathcal{C}_L}p(\len(D))
    \leq n.
\end{equation}
Here $\mathcal{C}_H$ and $\mathcal{C}_L$ continue to denote the chains of the original function $f$.

Adding \eqref{equation:first-copy-packing} and
\eqref{equation:second-copy-packing-swapped}, and using
$p(x)+q(x)=3x+1$, gives
\begin{align*}
    \sum_{C\in\mathcal{C}_H}
    \bigl(3\cdot \len(C)+1\bigr)
    +
    \sum_{D \in\mathcal{C}_L}
    \bigl(3\cdot \len(D)+1\bigr)
    =3(|H|+|L|)+\kappa \leq 2n.
\end{align*}
This proves \eqref{equation:strong-chain-packing} and yields the following main result.

\begin{theorem}
\label{theorem:exact-bad-index-bound}
For every offset function $f:[n]\to[d]$ satisfying the one-sided Lipschitz
condition, we have
\begin{equation*}
    |H|+|L|\leq \frac{2d-2}{3d-2}\cdot n.
\end{equation*}
Consequently, \textsc{Flip} is
$\frac{2d-1}{3d-2}$-competitive against the semi-adaptive adversary.
\end{theorem}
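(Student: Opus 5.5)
The theorem follows once the First-track Packing lemma (Lemma~\ref{lemma:first-copy-packing}) is established. The surrounding reductions have already been assembled in the text. The rotation $\overleftarrow f$ turns the first-track bound into the second-track bound \eqref{equation:second-copy-packing-swapped}. Adding the two and using $p(x)+q(x)=3x+1$ gives \eqref{equation:strong-chain-packing}. Corollary~\ref{corollary:chain-length-d-minus-one} gives $\kappa\geq (|H|+|L|)/(d-1)$, which turns that into $|H|+|L|\leq\frac{2d-2}{3d-2}n$. The competitive ratio then follows from the bound $\E[\alg]\geq n-(|H|+|L|)/2$, together with Lemma~\ref{lemma:worst-arrival-order} and the reduction to $\opt=n$. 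So my plan is to prove Lemma~\ref{lemma:first-copy-packing} by showing that the first-track proxies $\{P(C)\}_{C\in\mathcal{C}_H}\cup\{Q(D)\}_{D\in\mathcal{C}_L}$ are pairwise disjoint subsets of $[n]$. Their sizes are exactly $p(\len(C))$ and $q(\len(D))$, and containment in $[n]$ has already been checked.

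\emph{Same-type disjointness.} Take $H$-chains $C_1$ before $C_2$. We have $\max P(C_1)\leq \ed(C_1)$ (in the short case, $\st-d+3\len\leq \st+\len-1$ since $2\len\leq d-1$). We also have $\min P(C_2)=\st(C_2)-d\geq \ed(C_1)+1$, so the two proxies are disjoint. For $L$-chains $D_1$ before $D_2$, only long chains have nonempty $Q$. In that case $\max Q(D_1)=\ed(D_1)+\len(D_1)+1\leq \ed(D_1)+d$, while $\min Q(D_2)=\ed(D_2)+d-\len(D_2)+1\geq \st(D_2)+d-\len(D_2)+1 +(\len(D_2)-1)>\ed(D_1)+d$, using $\ed(D_2)-\st(D_2)\geq \len(D_2)-1$ and $\st(D_2)\geq\ed(D_1)+d+1$.

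\emph{Cross-type disjointness.} This is the main step. Let $C\in\mathcal{C}_H$ and $D\in\mathcal{C}_L$ with $D$ long, so $Q(D)\neq\varnothing$. Note $Q(D)\subseteq[\ed(D)+d-\len(D)+1,\ed(D)+d]$ and $P(C)\subseteq[\st(C)-d,\ed(C)]$. There are three cases.

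If $C$ lies to the right, with $\st(C)-\ed(D)\geq 2$, an overlap forces $\st(C)-d\leq \ed(D)+\len(D)+1$, i.e.\ $\st(C)-\ed(D)\leq d+\len(D)+1\leq 2d$. Lemma~\ref{lemma:chain-interaction} then gives $\len(C)+\len(D)\leq d-1$, so $C$ is short. Then $P(C)$ ends at $\st(C)-d+3\len(C)$, and I would show that this end lies below $\min Q(D)=\ed(D)+d-\len(D)+1$. To get this I would sharpen the interaction bound, using the gap $f(\st(C))-f(\ed(D))\geq 2+(\text{number of strictly increasing steps forced})$. If the sharpened bound fails, I would instead use the Lipschitz property: a far-right start of $C$ and a large $\len(D)$ force $f$ values that exceed $d$.

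If $C$ lies to the left of $D$, I would use that $P(C)$ ends by $\ed(C)$ while $Q(D)$ starts after $\ed(D)+d-\len(D)\geq \st(D)+d-1$. This is disjoint unless $\ed(C)\geq\st(D)+d$, which I would rule out by combining the interleaving $H$/$L$ conditions with the Lipschitz property.

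The case $\st(C)-\ed(D)\in\{0,1\}$ or other interleavings needs a direct argument. At $z=\ed(D)+1$ or nearby, the $L$-condition gives values above $f(\ed(D))$ and the $H$-condition gives values below $f(\st(C))$. I expect these boundary and interleaved cases to be the main obstacle. My first attempt is to prove a strengthened version of Lemma~\ref{lemma:chain-interaction}: $\len(C)+\len(D)\leq d-1-(\text{overlap slack})$, expressed in terms of $\st(C)-\ed(D)$. This is exactly what the proxy endpoints $\st(C)-d+3\len(C)$ and $\ed(D)+d-\len(D)+1$ require.
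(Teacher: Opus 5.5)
Your reduction of the theorem to Lemma~\ref{lemma:first-copy-packing} (rotation to get the second track, adding the two tracks with $p(x)+q(x)=3x+1$, using $\kappa\ge(|H|+|L|)/(d-1)$, then the competitive-ratio consequence) matches the paper, and your same-type disjointness argument is correct. The gap is in the cross-type step. The first-track proxies, placed where they are defined, are \emph{not} always pairwise disjoint. So the inequality $\st(C)-d+3\len(C)<\ed(D)+d-\len(D)+1$, which you plan to extract from a sharpened Lemma~\ref{lemma:chain-interaction}, is false in general.

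Take $d=4$, $n=6$ and $f=(1,2,3,3,3,4)$. This $f$ is one-sided Lipschitz with values in $[4]$, and it even arises from a graph, with $s=(1,1,1,2,3,3)$ and $\mu_i=i$.
\begin{itemize}
\item $L=\{1,2\}$ forms one long chain $D$ with $\len(D)=2$ and $\ed(D)=2$, so $Q(D)=\{5\}$.
\item $H=\{6\}$ forms one short chain $C$ with $P(C)=[2,5]$.
\end{itemize}
Every constraint in your case analysis holds: $\st(C)-\ed(D)=4$ and $\len(C)+\len(D)=3=d-1$. Yet $P(C)\cap Q(D)=\{5\}$. The boundary and interleaved cases you flag are not the real obstacle; the claimed disjointness itself fails.

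The missing idea is that Lemma~\ref{lemma:first-copy-packing} only concerns total size, so an overlapping proxy may be moved rather than shown disjoint in place. The paper argues in three steps.
\begin{enumerate}
\item Any overlap forces $D$ to precede $C$, with $C$ short, $D$ long, and $\len(C)+\len(D)\le d-1$ (Lemma~\ref{lemma:collision_P(C)_Q(D)}).
\item Each proxy meets at most one other proxy (Lemma~\ref{lemma:at_most_one_overlap}).
\item $Q(D)$ is relocated into the unused tail $\tilde P(C)\setminus P(C)=[\st(C)-d+3x+1,\st(C)+x-1]$ of the extended proxy, where $x=\len(C)$ and $y=\len(D)$. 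This tail has $d-2x-1\ge 2y-d+1=|Q(D)|$ positions, and the paper checks that no other $Q(D')$ reaches into it.
\end{enumerate}
In the example, $Q(D)$ moves to $\{6\}$. You need this relocation step, or an equivalent re-packing, in place of the disjointness claim.
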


\subsection{Upper Bounding the Total Size of Proxies}

We now prove Lemma~\ref{lemma:first-copy-packing}, which states that the total size of the first-track proxies is at most $n$.
If all proxies were disjoint, the result would be immediate.
The proxies may overlap, so the key is to control these overlaps and eliminate them without changing the total proxy size.

Our analysis has three main steps:
\begin{itemize}
    \item We first show that proxies of the same type are pairwise disjoint (see Lemma~\ref{lemma:same_type_proxy_disjoint}).
    Consequently, overlap can only happen between proxies of different types.

    \item Then we show that every proxy can overlap with at most one other proxy, which is of a different type, in Lemma~\ref{lemma:at_most_one_overlap}.
    This implies that we can group overlapping proxies into pairs.

    \item Finally, for every overlapping pair, we relocate one proxy without changing its size so that the resulting proxies are pairwise disjoint.
\end{itemize}

Recall that for any $C\in \mathcal{C}_H$, its proxy is defined as
\begin{equation*}
    P(C) =
    \begin{cases}
        [\st(C)-d,\st(C)-d+3\cdot \len(C)], & \len(C) \leq (d-1)/2,\\
        [\st(C)-d,\st(C)+\len(C)-1], & \len(C) >(d-1)/2.
    \end{cases}
\end{equation*}
Define $\tilde{P}(C) := [\st(C)-d,\st(C)+\len(C)-1]$ as the \emph{extended proxy} of $C$.
It can be verified that regardless of whether $C$ is short or long, we always have $P(C) \subseteq \tilde{P}(C)$.

Also recall that the proxy of every $D\in \mathcal{C}_L$ is defined as
\begin{equation*}
    Q(D) =
    \begin{cases}
        \varnothing, & \len(D) \leq (d-1)/2,\\
        [\ed(D)+d-\len(D)+1,\ed(D)+\len(D)+1], & \len(D) > (d-1)/2.
    \end{cases}
\end{equation*}

Define $\tilde{Q}(D):=[\st(D),\ed(D)+d]$ as the extended proxy of $D$. Then $Q(D)\subseteq\tilde{Q}(D)$.

We prove a statement stronger than disjointness of proxies of the same type: we show that even their extended versions are disjoint.

\begin{lemma} \label{lemma:same_type_proxy_disjoint}
    The extended proxies of the same type are pairwise disjoint.
\end{lemma}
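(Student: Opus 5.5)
The plan is to treat the two types separately and reduce everything to the separation property of maximal chains of the same type stated after Definition~\ref{definition:maximal-chain}. For any two distinct maximal chains of the same type, say $C_1, C_2$ with $\st(C_1)<\st(C_2)$, that property gives
\begin{equation*}
    \st(C_2)\geq \ed(C_1)+d+1 .
\end{equation*}
This inequality is the only ingredient needed, together with the elementary comparison between $\st(C)+\len(C)-1$ and $\ed(C)$.

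For $H$-chains, consider $C_1,C_2\in\mathcal{C}_H$ with $\st(C_1)<\st(C_2)$. The extended proxies are intervals, so it suffices to check that the right endpoint of $\tilde P(C_1)$ lies strictly left of the left endpoint of $\tilde P(C_2)$. The chain $C_1$ consists of $\len(C_1)$ distinct integers in $[\st(C_1),\ed(C_1)]$, so
\begin{equation*}
    \st(C_1)+\len(C_1)-1\leq \ed(C_1).
\end{equation*}
The left endpoint of $\tilde P(C_2)$ is $\st(C_2)-d\geq \ed(C_1)+1$. Chaining these gives
\begin{equation*}
    \max\tilde P(C_1)=\st(C_1)+\len(C_1)-1\leq \ed(C_1)<\ed(C_1)+1\leq \st(C_2)-d=\min\tilde P(C_2),
\end{equation*}
so the two extended proxies are disjoint.

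For $L$-chains, take $D_1,D_2\in\mathcal{C}_L$ with $\st(D_1)<\st(D_2)$. The separation property for $L$-chains (analogous by definition) gives $\st(D_2)\geq \ed(D_1)+d+1$. Hence
\begin{equation*}
    \max\tilde Q(D_1)=\ed(D_1)+d<\st(D_2)=\min\tilde Q(D_2),
\end{equation*}
and again the intervals are disjoint.

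There is no real obstacle here. The only point needing care is that the separation property holds for $L$-chains as well as $H$-chains: two consecutive $L$-bad indices in different maximal chains are at distance greater than $d$, hence at least $d+1$. It also helps to note that the extended $H$-proxy ends at $\st(C)+\len(C)-1$ rather than at $\ed(C)$, which only makes it shorter and so makes disjointness easier.
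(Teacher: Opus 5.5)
Your proposal is correct and follows the paper's proof essentially step for step. For $H$-chains you use $\st(C_1)+\len(C_1)-1\leq \ed(C_1)<\st(C_2)-d$, and for $L$-chains you use $\ed(D_1)+d<\st(D_2)$, both coming from the maximal-chain separation $\st(C_2)\geq \ed(C_1)+d+1$, exactly as the paper does.
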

\begin{proof}
    Consider any $C\in \mathcal{C}_H$.
    Note that the last index of $\tilde{P}(C)$ is
    \begin{equation*}
        \st(C) + \len(C) - 1 \leq \ed(C),
    \end{equation*}
    which is smaller than $\st(C')-d$ for the succeeding $H$-chain by maximality.
    Therefore, the extended proxies of the $H$-chains are pairwise disjoint.
    
    Now we consider any $D\in \mathcal{C}_L$.
    The last index of $\tilde{Q}(D)$ is $\ed(D) + d$, which is smaller than $\st(D')$ for any succeeding $L$-chain by maximality.
    Hence, the extended proxies of the $L$-chains are also pairwise disjoint.
\end{proof}

The lemma also implies that, for chains of the same type, the order of the extended proxies agrees with the order of the chains.
Consequently, only proxies of different types may overlap.
Next, we show that if two proxies of different types overlap, then the $H$-chain is short and the $L$-chain is long.

\begin{lemma}
\label{lemma:collision_P(C)_Q(D)}
Suppose $P(C)\cap Q(D)\ne\varnothing$, where
$C\in\mathcal{C}_H$ and $D\in\mathcal{C}_L$.
Then we have $\len(C) + \len(D) \leq d-1$.
Furthermore, $C$ is short, and $D$ is long.
\end{lemma}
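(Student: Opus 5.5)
The plan is to first dispose of the easy structural claim. Then I will reduce the length bound to Lemma~\ref{lemma:chain-interaction} by showing that the overlap forces $2\le \st(C)-\ed(D)\le 2d$. The shortness of $C$ will follow from the length bound.

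First, $P(C)\cap Q(D)\neq\varnothing$ requires $Q(D)\neq\varnothing$, which by \eqref{equation:small-L-proxy} means $D$ is long, i.e.\ $\len(D)>(d-1)/2$. Once we know $\len(C)+\len(D)\le d-1$, it follows that $\len(C)\le d-1-\len(D)<(d-1)/2$, so $C$ is short. It therefore remains to prove $\len(C)+\len(D)\le d-1$.

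For the length bound I will use $P(C)\subseteq\tilde P(C)=[\st(C)-d,\st(C)+\len(C)-1]$ and $Q(D)=[\ed(D)+d-\len(D)+1,\ed(D)+\len(D)+1]$. Nonempty intersection of these intervals gives two inequalities:
\begin{equation*}
    \st(C)-d\le \ed(D)+\len(D)+1, \qquad \ed(D)+d-\len(D)+1\le \st(C)+\len(C)-1 .
\end{equation*}
The first inequality, together with $\len(D)\le d-1$ (Corollary~\ref{corollary:chain-length-d-minus-one}), yields $\st(C)-\ed(D)\le d+\len(D)+1\le 2d$. The second yields $\st(C)-\ed(D)\ge d+2-\len(C)-\len(D)$. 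If I can show $\st(C)-\ed(D)\ge 2$, then Lemma~\ref{lemma:chain-interaction} applies directly and gives $\len(C)+\len(D)\le d-1$, finishing the proof.

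The main obstacle is establishing the lower bound $\st(C)-\ed(D)\ge 2$, i.e.\ excluding $\st(C)\le\ed(D)+1$. The second inequality gives this only when $\len(C)+\len(D)\le d$, so I must rule out that $C$ starts at or before $\ed(D)+1$.

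For the boundary case $\st(C)=\ed(D)+1$, I would argue directly. The one-sided Lipschitz property gives $f(\st(C))\le f(\ed(D))+1$. Corollary~\ref{corollary:chain-length-d-minus-one} gives $\len(C)\le d+1-f(\st(C))$ and $\len(D)\le f(\ed(D))$. Since $\st(C)$ lies in the $d$ indices following $\ed(D)$, $L$-badness gives $f(\st(C))>f(\ed(D))$, hence $f(\st(C))=f(\ed(D))+1$. Adding the two length bounds then gives $\len(C)+\len(D)\le d$. Substituting into the second inequality yields $\st(C)-\ed(D)\ge 2$, contradicting $\st(C)=\ed(D)+1$.

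For the case $\st(C)\le\ed(D)$, I would compare badness conditions at nearby indices. If $\ed(D)\in[\st(C),\st(C)+d-1]$, the idea is to look at an index $z$ lying both among the $d$ successors of $\ed(D)$ and within $C$ or among the $d$ predecessors of some element of $C$. Combining $f(\ed(D))<f(z)$ with the strict increase along $C$ (Lemma~\ref{lemma:chain-properties}) should bound $f$ from both sides. That in turn should force $\len(C)+\len(D)$ small enough that the second inequality fails, which is the contradiction.

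The other possibility, that $\ed(D)$ lies at least $d$ beyond $\st(C)$, I would rule out using the position of $\tilde P(C)$. In that case $P(C)$ ends by $\st(C)+\len(C)-1$, while $Q(D)$ begins after $\ed(D)+d-\len(D)$. This already contradicts the second inequality, since both lengths are at most $d-1$.

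I expect this case analysis around $\st(C)\le\ed(D)$ to need the most care.
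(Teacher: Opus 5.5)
Your overall route is the paper's. You use the same two endpoint inequalities (via $P(C)\subseteq\tilde P(C)$), the same treatment of $\st(C)=\ed(D)+1$ using $L$-badness and Corollary~\ref{corollary:chain-length-d-minus-one}, the same appeal to Lemma~\ref{lemma:chain-interaction}, and the same deduction that $D$ is long and $C$ is short. Your case $\ed(D)\ge\st(C)+d$ is also correct.

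The gap is the remaining case $\st(C)\le\ed(D)\le\st(C)+d-1$. You only sketch it, and the sketch does not go through as stated. If $\ed(C)\le\ed(D)$, your index $z$ does not exist: no index among the $d$ successors of $\ed(D)$ lies in $C$ or among the $d$ predecessors of an element of $C$. This configuration must also be excluded, since an index may lie in both $H$ and $L$. When $z$ does exist, $f(\ed(D))<f(z)$ only bounds the elements of $C$ from $z$ (or the element it precedes) onward. The elements of $C$ inside $[\st(C),\ed(D)]$ still have to be counted separately, and your sketch does not say how.

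The missing step is a one-liner that needs no badness at all. Since $\st(C)\le\ed(D)$, the one-sided Lipschitz property (Lemma~\ref{lemma:lipschitz_of_f}) gives $f(\ed(D))-f(\st(C))\le\ed(D)-\st(C)$. Combine this with the bounds $\len(C)\le d+1-f(\st(C))$ and $\len(D)\le f(\ed(D))$, which you already used in the boundary case:
\begin{equation*}
\len(C)+\len(D)\le d+1+f(\ed(D))-f(\st(C))\le d+1+\ed(D)-\st(C).
\end{equation*}
This contradicts your second inequality, $\len(C)+\len(D)\ge d+2+\ed(D)-\st(C)$. It settles every case with $\ed(D)\ge\st(C)$ at once, so your separate case $\ed(D)\ge\st(C)+d$ becomes unnecessary. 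This is exactly the paper's argument; with it inserted, your proof is complete.
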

\begin{proof}
Let $x:=\len(C)$ and $y:=\len(D)$.
Since $P(C)$ intersects $Q(D)$, the left endpoint of $Q(D)$ is at most the right endpoint of $P(C)$. 
Consequently,
\begin{equation*}
    \ed(D)+d-y+1 \leq \st(C)-d-1+\min\{3x+1,d+x\},
\end{equation*}
which implies
\begin{equation}
    \ed(D) - \st(C) \leq x+y-d-2.
    \label{equation:ed(D)-st(C)_leq_x+y-d-2}
\end{equation}

We first show that $\ed(D)<\st(C)$. Otherwise, by the one-sided Lipschitz property (Lemma~\ref{lemma:lipschitz_of_f}) and Corollary~\ref{corollary:chain-length-d-minus-one},
\begin{equation*}
    x + y \leq d+1+f(\ed(D))-f(\st(C))
    \leq d+1+\ed(D)-\st(C),
\end{equation*}
contradicting Equation~\eqref{equation:ed(D)-st(C)_leq_x+y-d-2}.
Therefore $\st(C)>\ed(D)$.
We further rule out $\st(C)-\ed(D)=1$. In this case, since $\ed(D)\in L$ and $\st(C)=\ed(D)+1$, the definition of an $L$-bad index gives $f(\ed(D))<f(\st(C))$.
Corollary~\ref{corollary:chain-length-d-minus-one} then implies
\begin{equation*}
    x+y\leq d+1+f(\ed(D))-f(\st(C))\leq d.
\end{equation*}
On the other hand, Equation~\eqref{equation:ed(D)-st(C)_leq_x+y-d-2} gives $x+y\geq d+1$, a contradiction.
Thus $\st(C)-\ed(D)\geq2$.

Again, since $P(C)$ intersects $Q(D)$, the left endpoint of $P(C)$ is at most the right endpoint of $Q(D)$. Thus
\begin{equation*}
    \st(C)-d\leq \ed(D)+y+1.
\end{equation*}
Consequently, we have
\begin{equation*}
    \st(C)-\ed(D)\leq d+y+1\leq 2d,
\end{equation*}
where the last inequality follows from $y\leq d-1$.
Lemma~\ref{lemma:chain-interaction} therefore gives $x+y\leq d-1$.
Finally, $Q(D)\ne\varnothing$ implies that $D$ is long and $2y\geq d$. Together with $x+y\leq d-1$, this gives $x\leq(d-2)/2$, so $C$ is short.
\end{proof}

We next show that no proxy can overlap two distinct proxies.

\begin{lemma} \label{lemma:at_most_one_overlap}
    Every proxy overlaps with at most one other proxy.
\end{lemma}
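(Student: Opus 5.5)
By Lemma~\ref{lemma:same_type_proxy_disjoint}, a proxy can only overlap proxies of the opposite type. By Lemma~\ref{lemma:collision_P(C)_Q(D)}, every overlapping pair consists of a short $H$-chain $C$ and a long $L$-chain $D$ with $\ed(D)<\st(C)$. I would therefore prove two claims. First, a nonempty $Q(D)$ meets at most one $P(C)$. Second, a $P(C)$ meets at most one $Q(D)$. In each case I suppose there are two overlaps and derive a contradiction from the positions of the chains together with Lemma~\ref{lemma:chain-interaction}.

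\textbf{Claim (a).} Suppose the long $L$-chain $D$, of length $y$, overlaps both $P(C_1)$ and $P(C_2)$, where $\st(C_1)<\st(C_2)$ and $x_i:=\len(C_i)$. The proof of Lemma~\ref{lemma:collision_P(C)_Q(D)} gives $\st(C_2)\le \ed(D)+d+y+1$. Maximality of $H$-chains gives $\st(C_2)\ge \ed(C_1)+d+1\ge \st(C_1)+x_1+d$. Combining these,
\[
\st(C_1)\le \ed(D)+y+1-x_1.
\]
On the other hand, Equation~\eqref{equation:ed(D)-st(C)_leq_x+y-d-2} holds for $C_1$, and the argument there shows $\st(C_1)\ge \ed(D)+2$. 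Hence $x_1\le y-1$. This alone is not a contradiction, so I would use the values of $f$ as well. Every index in $[\ed(D)+1,\ed(D)+d]$ has $f$-value larger than $f(\ed(D))$. Both $\st(C_1)$ and $\st(C_2)$ are $H$-bad, so each has $f$-value larger than every value in the preceding window of length $d$. Write $a:=f(\ed(D))$. Then $f(\st(C_1))\ge a+2$, and $f(\ed(C_1))\ge a+1+x_1$. The window before $\st(C_2)$ contains $\ed(C_1)$ if $\st(C_2)-\ed(C_1)\le d$, but maximality forbids this. So I would instead use a point $z\in[\ed(D)+1,\ed(D)+d]\cap[\st(C_2)-d,\st(C_2)-1]$, which is what gives $f(\st(C_2))\ge a+2$. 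Then
\[
y\le a \le d-1-x_2 .
\]
The main obstacle is that $P(C_1)$ and $P(C_2)$ are both contained in a window of length roughly $d+2y$ that must also hold $\tilde P(C_1)$ and the gap of $d$ between the chains. I expect this to force
\[
\st(C_2)-\st(C_1)\ge d+x_1 ,
\]
while both starts lie in $[\ed(D)+2,\ed(D)+d+y+1]$, an interval of length $d+y$. That only yields $x_1\le y-1$ again. So the decisive step is to check whether $P(C_1)$, being short, ends before $Q(D)$ begins. I would compare the right end $\st(C_1)-d+3x_1$ of $P(C_1)$ with the left end $\ed(D)+d-y+1$ of $Q(D)$, using
\[
\st(C_1)\le \st(C_2)-d-x_1\le \ed(D)+y+1-x_1 .
\]
This shows the right end of $P(C_1)$ is at most $\ed(D)+y+1-d+2x_1$. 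Since $x_1+y\le d-1$, this should fall below $\ed(D)+d-y+1$ in the relevant range. I would carry out the inequality carefully here, bringing in $y\le a\le d-1-x_1$ if needed.

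\textbf{Claim (b).} Suppose the short $H$-chain $C$ overlaps $Q(D_1)$ and $Q(D_2)$, where $\ed(D_1)<\ed(D_2)$. Maximality gives $\st(D_2)\ge \ed(D_1)+d+1$. The left end of $Q(D_2)$ is at least $\ed(D_2)+d-y_2+1\ge \st(D_2)+d$. Also, $Q(D_1)$ lies in $[\ed(D_1)+d-y_1+1,\ed(D_1)+y_1+1]$. Hence $P(C)$, which has length $3x+1$, must span from at most $\ed(D_1)+y_1+1$ to at least $\ed(D_1)+2d+1$. This requires $3x\ge 2d-y_1$. Combined with $x+y_1\le d-1$ from Lemma~\ref{lemma:collision_P(C)_Q(D)}, this gives $3x\ge d+x+1$, i.e.\ $x\ge (d+1)/2$. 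That contradicts $C$ being short, so claim (b) follows directly from the arithmetic.
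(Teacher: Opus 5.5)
Your proposal is correct and follows the paper's proof. Claim (b) is exactly the paper's first case: $3x\ge 2d-y_1$ contradicts $x+y_1\le d-1$. The decisive step of claim (a) is the paper's second case rearranged, because $\st(C_1)-d+3x_1\le \ed(D)+y+1-d+2x_1\le \ed(D)+d-y-1$ (using $x_1+y\le d-1$), so $P(C_1)$ ends strictly before $Q(D)$ begins at $\ed(D)+d-y+1$. No ``relevant range'' caveat is needed, and the detour through $f$-values and $a=f(\ed(D))$ can be dropped.
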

\begin{proof}
Suppose first that a proxy $P(C)$ intersects two proxies $Q(D_1)$ and $Q(D_2)$, where $\st(D_1)<\st(D_2)$. 
Let $x:=\len(C)$, $y_1:=\len(D_1)$ and $y_2:=\len(D_2)$. By
Lemma~\ref{lemma:collision_P(C)_Q(D)}, $C$ is short, so
$|P(C)|=3x+1$, and both $D_1$ and $D_2$ are long.
Recall that $Q(D_1)$ ends at $\ed(D_1) + y_1 + 1$ and $Q(D_2)$ starts at $\ed(D_2)+d-y_2+1$.
An interval meeting
both $Q(D_1)$ and $Q(D_2)$ must have size at least
\begin{equation*}
    1 + \left(\ed(D_2)+d-y_2+1\right) - \left( \ed(D_1) + y_1 + 1 \right) \geq 2d-y_1+1,
\end{equation*}
where the inequality holds since $\ed(D_2) - y_2 \geq \ed(D_1) + d$, by maximality.
Hence $3x\geq 2d-y_1$.

On the other hand, Lemma~\ref{lemma:collision_P(C)_Q(D)} gives
$x+y_1\leq d-1$, and therefore
\begin{equation*}
    y_1 \leq d-1 - \frac{2d - y_1}{3},
\end{equation*}
which implies $y_1\leq(d-3)/2$, contradicting that $D_1$ is long.

\begin{figure}[htbp]
\centering
\begin{tikzpicture}[x=0.9cm,y=0.9cm,every node/.style={font=\small}]

% top interval: P(C)
\draw (1.7,-0.5) -- (6.3,-0.5);
\draw (1.7,-0.35) -- (1.7,-0.65);
\draw (6.3,-0.35) -- (6.3,-0.65);

\node[above=6pt] at (4,-0.5) {$P(C)$};
% \node[above=2pt] at (1.5,0) {$\st(C)-d$};
% \node[above=2pt] at (8,0) {$\st(C)-d-1+\min\{3x+1,d+x\}$};

% bottom left interval: Q(D_1)
\draw (-2.6,-1.4) -- (2.2,-1.4);
\draw (-2.6,-1.25) -- (-2.6,-1.55);
\draw (2.2,-1.25) -- (2.2,-1.55);

\node[below=6pt] at (-0.2,-0.3) {$Q(D_1)$};
% \node[below=2pt] at (-0.8,-1.4) {$\ed(D_1)+d-y_1+1$};
\node[below=2pt] at (2,-1.4) {$\ed(D_1)+y_1+1$};

% bottom right interval: Q(D_2)
\draw (5.8,-1.4) -- (9.6,-1.4);
\draw (5.8,-1.25) -- (5.8,-1.55);
\draw (9.6,-1.25) -- (9.6,-1.55);

\node[below=6pt] at (7.7,-0.3) {$Q(D_2)$};
\node[below=2pt] at (6,-1.4) {$\ed(D_2)+d-y_2+1$};
% \node[below=2pt] at (9.6,-1.4) {$\ed(D_2)+y_2+1$};

\end{tikzpicture}
\caption{An illustration where $P(C)$ overlaps with both $Q(D_1)$ and $Q(D_2)$.}
\label{figure:PC-overlaps-QD1-QD2}
\end{figure}

Conversely, suppose that a proxy $Q(D)$ intersects two proxies
$P(C_1)$ and $P(C_2)$, where $\st(C_1)<\st(C_2)$. Let
$x_1:=\len(C_1)$ and $x_2:=\len(C_2)$. By
Lemma~\ref{lemma:collision_P(C)_Q(D)}, both $C_1$ and $C_2$ are
short, and $D$ is long. 
Similarly, an interval meeting both $P(C_1)$ and $P(C_2)$ must have size at least
\begin{equation}
    1 + \left( \st(C_2)-d \right) - \left( \st(C_1)-d+3x_1 \right)
    \geq d-2x_1+1,
    \label{equation:length_of_Q(D)_at_least_d-2x+1}
\end{equation}
where the inequality holds because
$\st(C_2)\geq\st(C_1)+x_1+d$.
However, since $D$ is long and $x_1 + \len(D) \leq d-1$, the proxy $Q(D)$ has size
\begin{equation*}
    2\cdot \len(D) - d + 1 \leq 2\cdot (d-1-x_1) - d + 1
    \leq d-2x_1-1,
\end{equation*}
contradicting Equation~\eqref{equation:length_of_Q(D)_at_least_d-2x+1}.
\end{proof}

We now combine these ingredients to prove Lemma~\ref{lemma:first-copy-packing}.

\begin{proofof}{Lemma~\ref{lemma:first-copy-packing}}
    Recall that our goal is to show that the total size of the first-track proxies is at most $n$.
    Lemma~\ref{lemma:same_type_proxy_disjoint} shows that only proxies of different types can overlap, and Lemma~\ref{lemma:at_most_one_overlap} implies that if two proxies $P(C)$ and $Q(D)$ overlap, they are disjoint from any other proxies.
    For each overlapping pair, we relocate $Q(D)$ so that all proxies become pairwise disjoint while preserving their total size.
    
    Consider an overlap between $P(C)$ and $Q(D)$, and let $x:=\len(C)$ and $y:=\len(D)$. By
    Lemma~\ref{lemma:collision_P(C)_Q(D)}, $C$ is short and $D$ is long. 
    When $C$ is short, its proxy $P(C)$ is a proper subinterval of the extended proxy $\tilde{P}(C)$.
    Specifically, $P(C)$ and $\tilde{P}(C)$ have the same starting position, while their ending positions differ by
    \begin{equation*}
        \left( \st(C)+x-1 \right) - \left( \st(C)-d+3x \right)
        = d - 2x - 1.
    \end{equation*}

    Since $D$ is long and $x+y \leq d-1$, we have
    \begin{equation*}
        |Q(D)| = 2y - d + 1 \leq 2\cdot (d-1-x) - d + 1 \leq d-2x-1.
    \end{equation*}

    Therefore, we can shift $Q(D)$ to immediately follow $P(C)$ inside $\tilde{P}(C)$.
    We do this for every overlapping pair of $P(C)$ and $Q(D)$.
    Since the extended proxies $\tilde{P}(C)$ are pairwise disjoint, it remains to show that $\tilde{P}(C)\setminus P(C)$ is disjoint from every $Q$-proxy other than $Q(D)$.

    Assume for contradiction that $Q(D')$, where $D' \neq D$, overlaps with $\tilde{P}(C) \setminus P(C)$.
    Then $Q(D')$ cannot have been relocated; otherwise, it would lie inside some $\tilde{P}(C')$, which is disjoint from $\tilde{P}(C)$.
    Since $Q(D)$ overlaps with $P(C)$ (before relocation), $Q(D')$ is disjoint from $P(C)$.
    Because the $Q$-proxies are pairwise disjoint and appear in the same order as their chains, $Q(D')$ can overlap $\tilde{P}(C)\setminus P(C)$ only if $D$ precedes $D'$. Its starting index must then lie in $\tilde{P}(C)\setminus P(C)$ (see Figure~\ref{figure:Q(D)_P(C)_Q(D')_tilde{P}(C)}).
    Let $y' := \len(D')$.

    \begin{figure}[htbp]
    \centering
    \begin{tikzpicture}[
        x=0.9cm,
        y=0.9cm,
        line width=1pt,
        every node/.style={font=\small}
    ]
    %------------------------------------------------
    % Coordinates
    %------------------------------------------------
    \coordinate (A)  at (0,0);
    \coordinate (B)  at (5,0);
    \coordinate (C)  at (11,0);
    \coordinate (D1) at (-2,-1.2);
    \coordinate (D2) at (1.5,-1.2);
    \coordinate (E1) at (7.2,-1.2);
    \coordinate (E2) at (12,-1.2);
    %------------------------------------------------
    % Top interval: \widetilde{P}(C)
    %------------------------------------------------
    \draw (A) -- (C);
    \draw (A) ++(0,0.18) -- ++(0,-0.36);
    \draw (B) ++(0,0.18) -- ++(0,-0.36);
    \draw (C) ++(0,0.18) -- ++(0,-0.36);
    
    \node[above=8pt] at ($(A)!0.38!(B)$) {$P(C)$};
    \node[above=8pt] at ($(B)!0.52!(C)$) {$\widetilde{P}(C)\setminus P(C)$};
    
    \node[above left=2pt]  at (A) {$\st(C)-d$};
    \node[above]  at (B) {$\st(C)-d+3x$};
    \node[above right=2pt] at (C) {$\st(C)+x-1$};
    
    %------------------------------------------------
    % Bottom left interval: Q(D)
    %------------------------------------------------
    \draw (D1) -- (D2);
    \draw (D1) ++(0,0.18) -- ++(0,-0.36);
    \draw (D2) ++(0,0.18) -- ++(0,-0.36);
    \node[above=4pt] at ($(D1)!0.45!(D2)$) {$Q(D)$};
    \node[below=6pt] at (D2) {$\ed(D)+y+1$};
    %------------------------------------------------
    % Bottom right interval: Q(D')
    %------------------------------------------------
    \draw (E1) -- (E2);
    \draw (E1) ++(0,0.18) -- ++(0,-0.36);
    \draw (E2) ++(0,0.18) -- ++(0,-0.36);
    \node[above=4pt] at ($(E1)!0.5!(E2)$) {$Q(D')$};
    \node[below=6pt] at (E1) {$\ed(D')+d-y'+1$};
    \node[below right=2pt] at (E2) {$\ed(D')+y'+1$};
    \end{tikzpicture}
    \caption{An illustration of $Q(D)$, $P(C)$, $\tilde{P}(C)\setminus P(C)$, and $Q(D')$.}
    \label{figure:Q(D)_P(C)_Q(D')_tilde{P}(C)}
    \end{figure}
    
    By maximality, we have $\ed(D') - y' \geq \ed(D)+d$.
    Then the starting index of $Q(D')$ satisfies
    \begin{align*}
        \ed(D') + d - y' + 1 & \geq \ed(D) + 2d + 1 \\
        & \geq (\st(C) - d - y - 1) + 2d + 1 \\
        & = \st(C) + (d-y) \\
        & \geq \st(C) + x + 1,
    \end{align*}
    where the second inequality holds because the ending index of $Q(D)$ is at least the starting index of $P(C)$, and the last inequality follows from $x+y\leq d-1$.
    This is a contradiction, because the starting index of $Q(D')$ is strictly larger than the ending index of $\tilde{P}(C)$.
\end{proofof}

\section{Tightness and Impossibility Results}

In this section, we complement the competitive guarantee for \textsc{Flip} with two upper bounds.

\subsection{Tightness of \textsc{Flip}}

We first give an instance showing that the competitive analysis of \textsc{Flip} is tight for every $d\geq2$.

\begin{theorem}
    For every integer $d\geq 2$, there exists an instance on which
    \textsc{Flip} has competitive ratio $\frac{2d-1}{3d-2}$.
    In particular, this ratio converges to $2/3$ as $d\to\infty$.
\end{theorem}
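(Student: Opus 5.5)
The plan is to glue the two hard instances of Figure~\ref{figure:up-down-hard-instances} along their common block of $d$ identical requests. Then one arrival order wastes $d-1$ requests under both rules at once, which is the extremal configuration of Theorem~\ref{theorem:exact-bad-index-bound} with $n=3d-2$ and $|H|+|L|=2d-2$.

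\textbf{The instance.} Take offline vertices $u_1,\ldots,u_{3d-2}$ and online vertices $v_1,\ldots,v_{3d-2}$ with the following neighborhoods:
\begin{itemize}
\item $N(v_i)=[i,i+d-1]$ for $i\in[1,d-1]$;
\item $N(v_i)=[d,2d-1]$ for $i\in[d,2d-1]$;
\item $N(v_i)=[i-d+1,i]$ for $i\in[2d,3d-2]$.
\end{itemize}
Every interval has length $d$, and $u_i\in N(v_i)$ for all $i$, so $\opt=3d-2$. For $d=2$ the first and last groups are single vertices, and the construction still makes sense.

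\textbf{The arrival order.} To get tightness even against an oblivious adversary, I will use one fixed order: first $v_1,\ldots,v_{d-1}$, then $v_{3d-2},v_{3d-3},\ldots,v_{2d}$, and finally the middle block $v_d,\ldots,v_{2d-1}$.

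\textbf{Down-matching.} Each $v_i$ with $i\le d-1$ takes its lowest neighbor $u_{i+d-1}$, so these vertices occupy $u_d,\ldots,u_{2d-2}$. Each $v_j$ with $j\ge 2d$ then takes $u_j$, which is free. When the middle block arrives, the only free vertex in $[d,2d-1]$ is $u_{2d-1}$. So exactly one middle vertex is matched, and $d-1$ are unmatched.

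\textbf{Up-matching.} Each $v_i$ with $i\le d-1$ takes $u_i$. Next, $v_{3d-2},\ldots,v_{2d}$ arrive in that order and each takes its highest free neighbor $u_{j-d+1}$, which ranges over $u_{2d-1},\ldots,u_{d+1}$; I will check inductively that this neighbor is indeed free at that time. Afterwards only $u_d$ remains free in $[d,2d-1]$, so again $d-1$ middle vertices are unmatched.

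Hence both deterministic rules match exactly $2d-1$ vertices, and $\E[\alg]=2d-1$, giving ratio $\frac{2d-1}{3d-2}$. This ratio tends to $2/3$ as $d\to\infty$. Combined with Theorem~\ref{theorem:flip-competitive-ratio}, the ratio is exact on this instance.

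\textbf{Main checkpoint.} The only delicate step is confirming that the same single order is simultaneously bad for both rules, i.e.\ that the outer blocks do not interfere with each other. This reduces to checking that in each rule the two outer blocks occupy disjoint offline sets:
\begin{itemize}
\item down-matching: $[d,2d-2]$ versus $[2d,3d-2]$;
\item up-matching: $[1,d-1]$ versus $[d+1,2d-1]$.
\end{itemize}
In each case these sets leave exactly one vertex of $[d,2d-1]$ free. As a cross-check, $f(i)=1$ for $i\le d$, then $f$ rises by one per step to $d$ at $i=2d-1$, and equals $d$ afterwards. So $L=[d,2d-2]$ and $H=[d+1,2d-1]$, each of size $d-1$, matching the bound.
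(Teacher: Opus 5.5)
Your proof is correct and follows the paper's: the same glued instance on $3d-2$ vertices, with an arrival order that sends the two outer blocks first and the middle block $v_d,\ldots,v_{2d-1}$ last, so that under either rule exactly one vertex of $[d,2d-1]$ is still free. The only difference is that you present the block $v_{2d},\ldots,v_{3d-2}$ in decreasing rather than increasing order. This is immaterial, because in either order up-matching fills $[d+1,2d-1]$ and down-matching fills $[2d,3d-2]$.
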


\paragraph{Construction of the Hard Instance.}
Fix an integer $d\geq 2$, and consider an instance with
$3d-2$ offline vertices $U=\{u_1,\ldots,u_{3d-2}\}$
and equally many online vertices $V=\{v_1,\ldots,v_{3d-2}\}$.
The neighborhoods of the online vertices are defined by
\begin{equation*}
    N(v_i)=
    \begin{cases}
        \{u_i,u_{i+1},\ldots,u_{i+d-1}\}, & i\in\{1,\ldots,d-1\}, \\
        \{u_d,u_{d+1},\ldots,u_{2d-1}\}, & i\in\{d,\ldots,2d-1\}, \\
        \{u_{i-d+1},u_{i-d+2},\ldots,u_i\}, & i\in\{2d,\ldots,3d-2\}.
    \end{cases}
\end{equation*}
The left graph in Figure~\ref{figure:example-G-and-f} illustrates this construction for $d=4$.

Consider the arrival order
\begin{equation*}
    v_1,\ldots,v_{d-1}, \quad
    v_{2d},\ldots,v_{3d-2}, \quad
    v_d,\ldots,v_{2d-1}.
\end{equation*}
Under either matching rule used by \textsc{Flip}, all $2d-2$ vertices in the first two blocks are matched. At that point, exactly one vertex in $\{u_d,u_{d+1},\ldots,u_{2d-1}\}$ remains unmatched: it is $u_d$ under one rule and $u_{2d-1}$ under the other.
Since every vertex in the final block has this set as its neighborhood, the first such vertex is matched to the unique remaining offline vertex, while the other $d-1$ vertices remain unmatched.
Consequently,
\begin{equation*}
    \alg_{\mathrm{up}}
    =
    \alg_{\mathrm{down}}
    =
    2d-1.
\end{equation*}

On the other hand, the instance admits the perfect matching
$\{(u_i,v_i):i\in[3d-2]\}$, and hence
$\opt=3d-2$.
This gives the claimed upper bound.

\subsection{A \texorpdfstring{$3/4$}{} Upper Bound for Any Algorithm}

In this subsection, we show that no randomized online algorithm can
achieve a competitive ratio strictly larger than $3/4$.

\paragraph{Construction of the Hard Instance.}
Assume that $d$ is even. Let $k$ be an even integer and let $n=kd$.
Partition the offline set $U=\{u_1,\ldots,u_n\}$ into $k$ consecutive blocks $U_1,\ldots,U_k$, each of size $d$.
We further divide each block $U_i$ into two half-blocks:
\begin{align*}
    U_{i,1} &:=\{u_{(i-1)d+1},\ldots,u_{(i-1)d+d/2}\}, \\
    U_{i,2} &:=\{u_{(i-1)d+d/2+1},\ldots,u_{id}\},
\end{align*}
and write $U_i=U_{i,1}\cup U_{i,2}$.

The online vertices arrive in two phases.
The first phase is common to both realizations of the instance.
For every $i\in[k]$, let there be $d/2$ online vertices whose neighborhood is $U_i$.

After the first phase, one of the following two continuations is selected uniformly at random.

\begin{itemize}
    \item In the \emph{odd continuation}, for every
    $i\in\{1,3,\ldots,k-1\}$, there are $d$ online vertices whose neighborhood is $U_{i,2}\cup U_{i+1,1}$.

    \item In the \emph{even continuation}, there are $d/2$ online vertices whose neighborhood is $U_1$, $d/2$ online vertices
    whose neighborhood is $U_k$, and, for every $i\in\{2,4,\ldots,k-2\}$, there are $d$ online vertices whose neighborhood is $ U_{i,2}\cup U_{i+1,1}$.
\end{itemize}

Thus, each realization contains $n/2$ online vertices in each phase, and every online neighborhood is an interval of exactly $d$ consecutive offline vertices; see Figure~\ref{figure:three-quarter-hard-instance} for an illustration with $k=4$.

\begin{figure}[htbp]
\centering
\resizebox{0.96\textwidth}{!}{%
\begin{tikzpicture}[
    x=0.72cm,
    y=0.72cm,
    every node/.style={font=\scriptsize},
    firstphase/.style={draw=blue!70!black, line width=0.9pt},
    secondphase/.style={draw=red!70!black, line width=0.9pt}
]

% Odd continuation.
\begin{scope}
\node[font=\small] at (4,2.5) {odd continuation};
\node[left] at (-0.25,1.25) {Phase 1};
\node[left] at (-0.25,-0.95) {Phase 2};

\foreach \i in {1,...,4} {
    \pgfmathsetmacro{\xl}{2*(\i-1)}
    \draw[fill=blue!10] (\xl,0) rectangle ++(1,0.62);
    \draw[fill=blue!10] ({\xl+1},0) rectangle ++(1,0.62);
    \node at ({\xl+0.5},0.31) {$U_{\i,1}$};
    \node at ({\xl+1.5},0.31) {$U_{\i,2}$};
    \node[below=2pt] at ({\xl+1},0) {$U_\i$};

    \draw[firstphase] (\xl,1.25) -- ({\xl+2},1.25);
    \draw[firstphase] (\xl,1.39) -- (\xl,1.11);
    \draw[firstphase] ({\xl+2},1.39) -- ({\xl+2},1.11);
    \node[above=2pt] at ({\xl+1},1.25) {$d/2$};
}

\draw[secondphase] (1,-0.95) -- (3,-0.95);
\draw[secondphase] (1,-0.81) -- (1,-1.09);
\draw[secondphase] (3,-0.81) -- (3,-1.09);
\node[below=2pt] at (2,-0.95) {$d$};

\draw[secondphase] (5,-0.95) -- (7,-0.95);
\draw[secondphase] (5,-0.81) -- (5,-1.09);
\draw[secondphase] (7,-0.81) -- (7,-1.09);
\node[below=2pt] at (6,-0.95) {$d$};
\end{scope}

\draw[densely dashed,gray] (9,-1.45)--(9,2.35);

% Even continuation.
\begin{scope}[shift={(10,0)}]
\node[font=\small] at (4,2.5) {even continuation};
\node[left] at (-0.25,1.25) {Phase 1};
\node[left] at (-0.25,-0.95) {Phase 2};

\foreach \i in {1,...,4} {
    \pgfmathsetmacro{\xl}{2*(\i-1)}
    \draw[fill=blue!10] (\xl,0) rectangle ++(1,0.62);
    \draw[fill=blue!10] ({\xl+1},0) rectangle ++(1,0.62);
    \node at ({\xl+0.5},0.31) {$U_{\i,1}$};
    \node at ({\xl+1.5},0.31) {$U_{\i,2}$};
    \node[below=2pt] at ({\xl+1},0) {$U_\i$};

    \draw[firstphase] (\xl,1.25) -- ({\xl+2},1.25);
    \draw[firstphase] (\xl,1.39) -- (\xl,1.11);
    \draw[firstphase] ({\xl+2},1.39) -- ({\xl+2},1.11);
    \node[above=2pt] at ({\xl+1},1.25) {$d/2$};
}

\draw[secondphase] (0,-0.95) -- (2,-0.95);
\draw[secondphase] (0,-0.81) -- (0,-1.09);
\draw[secondphase] (2,-0.81) -- (2,-1.09);
\node[below=2pt] at (1,-0.95) {$d/2$};

\draw[secondphase] (3,-0.95) -- (5,-0.95);
\draw[secondphase] (3,-0.81) -- (3,-1.09);
\draw[secondphase] (5,-0.81) -- (5,-1.09);
\node[below=2pt] at (4,-0.95) {$d$};

\draw[secondphase] (6,-0.95) -- (8,-0.95);
\draw[secondphase] (6,-0.81) -- (6,-1.09);
\draw[secondphase] (8,-0.81) -- (8,-1.09);
\node[below=2pt] at (7,-0.95) {$d/2$};
\end{scope}

\end{tikzpicture}%
}
\caption{The odd and even continuations for $k=4$. Each box represents a half-block of $d/2$ consecutive offline vertices. A horizontal segment represents the common neighborhood of a group of online vertices, and its label gives the number of vertices in that group. The blue first-phase groups are common to both realizations; the red second-phase groups distinguish the two continuations.}
\label{figure:three-quarter-hard-instance}
\end{figure}

\begin{theorem}
\label{theorem:3/4-problem-hardness}
    No randomized online algorithm has a competitive ratio strictly larger than $3/4$.
\end{theorem}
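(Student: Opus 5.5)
We will apply Yao's principle. The random instance is the two-phase construction above, with each continuation chosen with probability $1/2$. It then suffices to show that every deterministic online algorithm matches, in expectation over the continuation, at most $\frac{3}{4}n+O(d)$ online vertices. We also need a perfect matching (or one missing only $O(d)$ vertices) in each realization, so that $\opt\geq n-O(d)$. Letting $k\to\infty$ with $d$ fixed and even then gives ratio at most $3/4+O(1/k)$. Odd $d$ would be handled by the same construction with half-blocks of sizes $\lceil d/2\rceil$ and $\lfloor d/2\rfloor$, or simply by noting that the statement only needs one family of $d$.

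\emph{Step 1: $\opt$.} In the odd continuation, pair $U_i,U_{i+1}$ for odd $i$. The $d$ second-phase vertices on $U_{i,2}\cup U_{i+1,1}$ take those $d$ vertices, and the first-phase groups of $U_i$ and $U_{i+1}$ take $U_{i,1}$ and $U_{i+1,2}$ respectively, each of size $d/2$. This gives a perfect matching of size $n$. The even continuation is analogous: the extra groups on $U_1$ and $U_k$ take $U_{1,2}$ and $U_{k,1}$, and the interior pairs are matched as before. So $\opt=n$ in both realizations.

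\emph{Step 2: phase one.} Phase one is identical in both realizations, so the algorithm's phase-one decisions are fixed. For each block $i$, let $a_i$ and $b_i$ be the numbers of phase-one vertices placed in $U_{i,1}$ and $U_{i,2}$. Then $a_i+b_i\leq d/2$, and we may assume equality since matching more never hurts in this accounting. In phase two, the total matched is at most the number of free offline vertices in the second-phase neighborhoods. In the odd continuation, the group on $U_{i,2}\cup U_{i+1,1}$ sees $(d/2-b_i)+(d/2-a_{i+1})$ free slots. Summing over odd $i$, the phase-two gain is at most $\sum_{i\text{ odd}}(d-b_i-a_{i+1})$. In the even continuation, the interior groups give $\sum_{i\text{ even},\,i\le k-2}(d-b_i-a_{i+1})$, and the end groups on $U_1$ and $U_k$ add at most $d/2+d/2=d$. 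So the even phase-two gain is at most $d+\sum_{i\text{ even}}(d-b_i-a_{i+1})$.

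\emph{Step 3: averaging.} Adding the two bounds covers every consecutive pair $(i,i+1)$, $i=1,\dots,k-1$, exactly once, up to boundary terms. The sum is therefore at most
\[
\sum_{i=1}^{k-1}(d-b_i-a_{i+1})+d \;\le\; (k-1)d-\sum_{i}(a_i+b_i)+d/2+d/2+d \;=\; kd-\tfrac{kd}{2}+O(d)=\tfrac{n}{2}+O(d),
\]
using $\sum_i (a_i+b_i)=n/2$ together with the missing $a_1$ and $b_k$ terms, each at most $d/2$. The expected phase-two gain is thus at most $n/4+O(d)$, and the expected total is at most $n/2+n/4+O(d)=\frac34 n+O(d)$. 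Dividing by $\opt=n=kd$ and letting $k\to\infty$ finishes the proof.

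The main obstacle is justifying Step 2 cleanly for an arbitrary deterministic algorithm, including one that leaves phase-one vertices unmatched. If $a_i+b_i<d/2$, the free-slot count increases by at most the deficit, while the phase-one count decreases by exactly that deficit. Using that trade-off, I would bound the total matched directly, as phase-one matched plus free slots, rather than assuming a full phase one. One must also check that the boundary bookkeeping in the even continuation, with the groups on $U_1$ and $U_k$, only costs $O(d)$.
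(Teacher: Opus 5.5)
Your proposal is correct and is essentially the paper's proof: the same two-continuation instance under Yao's principle, the same half-block counts $a_i,b_i$, and the same averaging in which each pair $(i,i+1)$, $i\in[k-1]$, is covered exactly once, giving an expected matching size of at most $\frac{3n}{4}+\frac{d}{2}$ against $\opt=n$. The differences are cosmetic: you normalize to $a_i+b_i=d/2$ via the monotonicity trade-off, whereas the paper keeps $m\le n/2$ inside the bound $\frac{n}{2}+\frac{m}{2}+\frac{a_1+b_k}{2}$, which is increasing in $m$. Your odd-$d$ aside is unnecessary, since as in the paper it suffices to take $d$ even; it also does not carry over verbatim, because groups of size $d/2$ no longer make sense.
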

\begin{proof}
    We first analyze an arbitrary deterministic online algorithm $\mathcal{A}$.
    For each $i\in [k]$, let $a_i$ and $b_i$ denote the numbers of vertices in $U_{i,1}$ and $U_{i,2}$ that are matched by $\mathcal A$ during the first phase.
    Let
    \begin{equation*}
        m := \sum_{i=1}^k (a_i+b_i).
    \end{equation*}
    Since the first phase contains $n/2$ online vertices, $m\leq n/2$.

    Let $\mathcal{A}_{\mathrm{odd}}$ and $\mathcal{A}_{\mathrm{even}}$ be the matching sizes obtained under the two continuations, respectively.
    In the odd continuation, the number of offline vertices still available in $U_{i,2}\cup U_{i+1,1}$ is $d-b_i-a_{i+1}$.
    Consequently,
    \begin{equation*}
        \mathcal{A}_{\mathrm{odd}} \leq m+\sum_{\substack{1\leq i\leq k-1 \\ i\text{ odd}}}
        ( d-b_i-a_{i+1} ).
    \end{equation*}
    Similarly, the two endpoint groups in the even continuation contain $d$ online vertices in total, while each internal group can match at most the number of available vertices in its neighborhood.
    Hence,
    \begin{equation*}
        \mathcal{A}_{\mathrm{even}} \leq m+d+\sum_{\substack{2\leq i\leq k-2 \\ i\text{ even}}} ( d-b_i-a_{i+1} ).
    \end{equation*}
    Averaging these two inequalities gives
    \begin{align*}
        \frac{\mathcal{A}_{\mathrm{odd}} +\mathcal{A}_{\mathrm{even}}}{2} &\leq m+\frac{1}{2} \left( n-\sum_{i=1}^{k-1}(b_i+a_{i+1}) \right) = m+\frac{1}{2} (n-m+a_1+b_k) \\
        &= \frac{n}{2}+\frac{m}{2}+\frac{a_1+b_k}{2} \leq \frac{3n}{4}+\frac{d}{2} = \left(\frac{3}{4}+\frac{1}{2k}\right) \cdot n,
    \end{align*}
    where the last inequality follows from $m\leq n/2$ and $a_1,b_k\leq d/2$.

    On the other hand, both realizations admit a perfect matching:
    \begin{itemize}
        \item For the odd continuation, match the second-phase vertices to $U_{i,2}\cup U_{i+1,1}$ for every odd $i$, and match the first-phase vertices associated with $U_i$ to $U_{i,1}$ when $i$ is odd and to $U_{i,2}$ when $i$ is even.

        \item For the even continuation, match the two endpoint groups to $U_{1,1}$ and $U_{k,2}$, respectively.
        Match every internal second-phase group to its entire neighborhood, and match each first-phase group to the remaining half of its block. 
    \end{itemize}
    Therefore, we have $\opt=n$ for both realizations.

    The preceding bound holds for every deterministic algorithm under the uniform distribution over the two continuations. By Yao's minimax principle, for every randomized online algorithm, at least one realization in the support satisfies
    \begin{equation*}
        \frac{\E[\alg]}{\opt} \leq \frac{3}{4}+\frac{1}{2k}.
    \end{equation*}
    Since $k$ can be arbitrarily large, the desired upper bound of $3/4$ follows.
\end{proof}

\section{Without the Uniform-Length Assumption}
\label{section:1-1/e}

In this section, we show that convexity alone is insufficient for beating the competitive ratio of $1-1/e$.
More precisely, we prove the following theorem.

\begin{theorem}
\label{theorem:1-1/e}
    No algorithm has a competitive ratio strictly larger than $1-1/e$ for online matching in convex bipartite graphs.
\end{theorem}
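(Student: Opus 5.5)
We plan to embed the classic upper-triangular hard instance for online bipartite matching into a convex bipartite graph, using a recursive construction. The construction was sketched in the introduction, where each offline vertex is replaced by a large block of consecutive resources. We then apply Yao's minimax principle to a random instance.

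\textbf{Construction.} Fix $k$ and let the offline side be an interval $I$ of $k!$ consecutive vertices. At level $k$, partition $I$ into $k$ consecutive subintervals of size $(k-1)!$. Introduce $(k-1)!$ online vertices whose common neighborhood is all of $I$; this is an interval. Then choose one of the $k$ subintervals uniformly at random to be \emph{dead}, meaning it receives no further online vertices. Recurse, in parallel, on each of the remaining $k-1$ subintervals at level $k-1$: each gets $(k-2)!$ online vertices with neighborhood equal to that subinterval, which is split into $k-1$ pieces, one of which dies, and so on. All random choices are made before any arrivals, so the distribution is oblivious. Every neighborhood is a contiguous block, so the graph is convex.

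\textbf{Perfect matching.} We check that there is a perfect matching of the online side by induction. The level-$k$ batch of $(k-1)!$ vertices can be matched into the dead subinterval, which has exactly $(k-1)!$ vertices. Each live subinterval is handled recursively in the same way. The total number of online vertices therefore equals the number of offline vertices used, so $\opt$ equals the number of online vertices.

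\textbf{Bounding any deterministic algorithm.} The algorithm cannot tell which subinterval will die, and the construction is symmetric under permuting the subintervals at each level. So, by averaging over these permutations, we may assume the algorithm spreads each batch evenly across the current $k$ pieces in expectation. Since this averaging only symmetrizes and the outcome is concave/linear in the relevant quantities, it does not increase $\E[\alg]$. We then track $x_j$, the expected fraction of a surviving piece already consumed when level $j$ arrives. This produces the same recursion as the fractional water-filling analysis of the upper-triangular instance. The recursion gives the expected number of matched online vertices as $n\left(1-(1-1/k)^{k}\right)+o(n)$, or equivalently a telescoping $\sum 1/j$ bound. Letting $k\to\infty$ yields $1-1/e$, and Yao's principle transfers the bound to randomized algorithms.

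\textbf{Main obstacle.} The difficulty is the accounting in the recursion. The sizes shrink factorially, and batches at deeper levels have many more online vertices relative to the remaining capacity, so we must check that the per-level loss adds up to exactly the triangular-instance loss and not something smaller. Our first attempt is to normalize each level so that every surviving piece carries equal total demand. We would then reduce the bound to the known $1-1/e$ bound for the $k\times k$ upper-triangular instance by showing that each block behaves like a single fractionally divisible offline vertex. If the factorial scaling gives the wrong proportions, we would rescale the batch sizes, for example by taking blocks of size $N^{j}$ with $N$ large, so that each block approximates a continuous resource and the reduction to the fractional triangular instance holds exactly in the limit.
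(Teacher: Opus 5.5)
Your instance is a valid oblivious version of the paper's construction, and your perfect-matching argument is correct. It uses blocks of $k!$ split into $k$ pieces of size $(k-1)!$, a batch of $(k-1)!$ vertices on the whole block, a uniformly random dead piece, recursion on the survivors, and then Yao. The paper instead works with deterministic fractional algorithms and adaptively drops the least-loaded piece.

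\textbf{The gap.} The step that fails is the sentence ``by averaging over these permutations, we may assume the algorithm spreads each batch evenly.''
\begin{itemize}
\item Symmetrizing over block permutations equalizes the loads only in expectation. In each realization the loads $y_1,\dots,y_k$ of the pieces can be very unequal, and what happens afterwards depends on them nonlinearly, because pieces saturate at $1$.
\item A surviving piece is not described by one number. Its load can itself be unevenly spread over its own sub-pieces, so you cannot treat it as ``a single fractionally divisible offline vertex.''
\end{itemize}

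\textbf{What is needed.} You need an induction on $k$ whose hypothesis depends only on the average starting load. If $\mathsf{Build}(I,k)$ starts with average matched fraction $x$ on $I$, the expected final average must be at most
\[
w_k(x)=\frac{1}{k}\sum_{l=1}^{k}\min\Bigl\{1,\,x+\sum_{s=l}^{k}\frac{1}{s}\Bigr\}.
\]
This function satisfies $w_k(x)=\frac{\hat x}{k}+\frac{k-1}{k}w_{k-1}(\hat x)$ with $\hat x=\min\{1,x+1/k\}$, and it is concave and nondecreasing.

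In your random version the inductive step is then
\[
\frac{1}{k}\Bigl(\bar y+\frac{k-1}{k}\sum_j w_{k-1}(y_j)\Bigr)\le\frac{1}{k}\bigl(\bar y+(k-1)w_{k-1}(\bar y)\bigr)\le w_k(x),
\]
using Jensen and $\bar y\le\hat x$. The paper's adaptive version additionally needs $w_{k-1}(x)-x$ to be nonincreasing, to handle $y_r\le\bar y$.

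Your phrase ``the outcome is concave/linear in the relevant quantities'' asserts exactly this lemma without naming the function or proving its concavity. That lemma is the heart of the proof.

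\textbf{Further inaccuracies.} Two further points show the recursion was not actually carried out.
\begin{itemize}
\item The value the recursion produces is $w_k(0)$, not $1-(1-1/k)^k$; the latter is the bounded-degree bound. Already for $k=3$ one gets $w_3(0)=13/18\neq 19/27$. Both tend to $1-1/e$, here via $\int_0^1\min\{1,-\ln x\}\,dx$.
\item Your ``main obstacle'' is not an obstacle. A level-$j$ batch of $(j-1)!$ vertices on a block of $j!$ vertices raises the average load by exactly $1/j$, which is precisely the upper-triangular proportion. So the factorial sizes are the right ones and no $N^{j}$ rescaling is needed.
\end{itemize}
Once you state and prove the inductive lemma above, the rest of your plan goes through.
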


It suffices to prove the upper bound for deterministic fractional algorithms: the edge-matching probabilities of any randomized algorithm induce a deterministic fractional algorithm with the same ratio.

Recall the classic upper-triangular instance of Karp et al.~\cite{conf/stoc/KarpVV90} for general online bipartite matching.
Starting with all offline vertices, the adversary repeatedly introduces one online vertex adjacent to the current set and then removes the least-matched offline vertex from all future neighborhoods.
Unfortunately, the resulting graph is not convex: the removal of the least-matched offline vertex splits the offline vertices into two disconnected segments.
Our main idea is to replace each vertex with sufficiently many copies, so that the neighborhoods are still intervals after some offline vertices are removed.
% Our construction applies the same idea recursively at the interval level: an interval of size $k!$ is partitioned into $k$ consecutive subintervals of size $(k-1)!$, a group of $(k-1)!$ common online neighbors is introduced, the least-matched subinterval is dropped, and the construction continues within every remaining subinterval.
% As in the instance of Karp et al.~\cite{conf/stoc/KarpVV90}, uniform fractional allocation is optimal, yielding the $1-1/e$ bound.

\subsection{The Hard Instance}

Let $\mathcal{A}$ be an arbitrary deterministic fractional algorithm.
Fix an integer $m\geq1$, and let $U$ be an ordered set of $m!$ offline vertices.
For every $k\in [m]$ and every interval $I\subseteq U$ of size $k!$, we define the following recursive procedure $\mathsf{Build}(I,k)$.
The hard instance is generated by $\mathsf{Build}(U,m)$.

\begin{itemize}
    \item If $k=1$, the procedure introduces one online vertex whose neighborhood is the unique vertex in $I$.

    \item For $k\geq 2$, partition $I$ into $k$ consecutive subintervals $I_1,I_2,\ldots,I_k$, each containing $(k-1)!$ offline vertices.
    Introduce $(k-1)!$ online vertices, each with neighborhood $I$.
    We refer to them as the level-$k$ online vertices.
    After $\mathcal{A}$ processes these vertices, let $\lambda_u$ denote the current matched fraction of each $u\in I$.
    Choose an index $r\in[k]$ such that
    \begin{equation*}
        \sum_{u\in I_r}\lambda_u = \min_{j\in[k]} \sum_{u\in I_j}\lambda_u,
    \end{equation*}
    breaking ties arbitrarily.
    No future online vertex is adjacent to any vertex in $I_r$; we call $I_r$ the \emph{dropped interval}.
    For every $j\neq r$, recursively apply $\mathsf{Build}(I_j,k-1)$.
\end{itemize}

Every online neighborhood produced by the construction is an interval in the given order of $U$.
Moreover, the instance admits a perfect matching of size $m!$.
Indeed, at every recursive call, the level-$k$ online vertices can be matched to the dropped interval $I_r$, whose vertices have no future online neighbors.
Therefore, we have $\opt = m!$.

\subsection{Bounding the Fractional Matching}

We upper-bound the fractional matching size of $\mathcal{A}$ by tracking the (average) matched fraction within each recursive interval.

First, suppose that $\mathcal{A}$ distributes every online level uniformly over its neighboring interval.
Suppose an invocation $\mathsf{Build}(I,k)$ begins with matched fraction $x$ on every vertex of $I$, and let $w_k(x)$ denote the final average matched fraction after matching the online vertices at level $k$.
Clearly, we have $w_1(x)=1$.
For $k\geq2$, the level-$k$ online vertices raise the average matched fraction in $I$ to $\hat x=\min\{1,x+1/k\}$.
One of the $k$ subintervals is then dropped with average matched fraction $\hat x$, while the other $k-1$ subintervals continue with the same average matched fraction.
Therefore,
\begin{equation*}
    w_k(x) = \frac{\hat x}{k} + \frac{k-1}{k}\cdot w_{k-1}(\hat x).
\end{equation*}
Unrolling this recurrence gives
\begin{equation*}
    w_k(x) = \frac{1}{k} \cdot \sum_{l=1}^k \min \left\{ 1, x+\sum_{s=l}^k \frac{1}{s} \right\}.
\end{equation*}
Notice that $w_k$ is concave and non-decreasing.
Moreover, $w_k(x) - x$ is non-increasing.

We next show that $w_k(x)$ upper-bounds the final average matched fraction in $\mathsf{Build}(I,k)$ even when the allocation across subintervals is nonuniform.

\begin{lemma}
\label{lemma:uniform-allocation-is-optimal}
    Consider an invocation $\mathsf{Build}(I,k)$ that begins with average matched fraction $x$ over the vertices in $I$.
    The final average matched fraction in $I$ is at most $w_k(x)$.
\end{lemma}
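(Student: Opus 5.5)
We argue by induction on $k$, using the two stated properties of $w_k$: it is concave and non-decreasing, and $w_k(x)-x$ is non-increasing.

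\textbf{Base case.} For $k=1$ the claim is trivial, since every matched fraction is at most $1=w_1(x)$.

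\textbf{Inductive step.} Fix $k\geq2$ and an invocation $\mathsf{Build}(I,k)$ starting with average fraction $x$ on $I$.

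1. After the $(k-1)!$ level-$k$ online vertices are processed, the total added mass is at most $(k-1)!$ and no vertex exceeds $1$. So the new average satisfies $\bar x\le \hat x=\min\{1,x+1/k\}$.

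2. Let $y_j$ be the average matched fraction on $I_j$ at this moment, so $\frac1k\sum_j y_j=\bar x$. The dropped interval $I_r$ minimizes $y_j$ and keeps its value $y_r$ forever, since it has no future neighbors.

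3. Each surviving interval $I_j$, $j\neq r$, runs $\mathsf{Build}(I_j,k-1)$ starting from average $y_j$. By the induction hypothesis its final average is at most $w_{k-1}(y_j)$. Note that $\mathsf{Build}(I_j,k-1)$ only introduces online vertices inside $I_j$, so intervals do not interact.

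4. Combining, the final average over $I$ is at most
\[
\frac1k\Bigl(y_r+\sum_{j\neq r} w_{k-1}(y_j)\Bigr).
\]
By concavity of $w_{k-1}$ this is at most
\[
\frac{y_r}{k}+\frac{k-1}{k}\,w_{k-1}(z),\qquad z:=\frac{k\bar x-y_r}{k-1}.
\]

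5. Since $y_r$ is the minimum, $y_r\le \bar x\le z$. The expression $g(y_r)=\frac{y_r}{k}+\frac{k-1}{k}w_{k-1}(z)$ should be non-decreasing in $y_r$ for fixed $\bar x$. Raising $y_r$ by $\delta$ lowers $z$ by $\delta/(k-1)$, so $g$ changes by $\frac{\delta}{k}\bigl(1-\text{slope of }w_{k-1}\bigr)$. This is non-negative exactly because $w_{k-1}(x)-x$ is non-increasing, i.e., the slope of $w_{k-1}$ is at most $1$.

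6. Hence the bound is maximized at $y_r=z=\bar x$, giving at most $\frac{\bar x}{k}+\frac{k-1}{k}w_{k-1}(\bar x)$.

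7. This expression is non-decreasing in $\bar x$, since $w_{k-1}$ is non-decreasing. Replacing $\bar x$ by $\hat x$ gives exactly the recurrence value $w_k(x)$.

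\textbf{Main obstacle.} The delicate point is step 5, the monotonicity in the dropped value $y_r$, together with the use of concavity when the $y_j$ are unequal. Lowering $y_r$ shifts mass to the surviving intervals, and those gain at rate less than $1$, so making the dropped interval poorer never helps the algorithm.

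A technicality is that the properties of $w_{k-1}$ are asserted only informally in the paper. If needed, I would verify them from the unrolled formula, which is an average of functions of the form $\min\{1,x+c\}$: each such function is concave and non-decreasing with slope in $[0,1]$, and averaging preserves all three properties.
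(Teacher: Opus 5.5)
Your proof is correct and follows the paper's argument: the same induction on $k$, the same intermediate bound $\frac1k\bigl(y_r+\sum_{j\neq r}w_{k-1}(y_j)\bigr)$, and the same three properties of $w_{k-1}$ (concave, non-decreasing, and $w_{k-1}(x)-x$ non-increasing). The only difference is the order of two steps, which is an equivalent rearrangement. The paper first uses that $w_{k-1}(x)-x$ is non-increasing to trade $y_r$ for $\bar y$, and then applies Jensen over all $k$ subintervals. You instead apply Jensen over the $k-1$ surviving subintervals and then argue monotonicity in $y_r$.
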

\begin{proof}
    We prove by induction on $k\geq 1$.
    The claim is immediate for $k=1$, since the final matched fraction of the unique offline vertex is at most $1=w_1(x)$.

    Now suppose that $k\geq2$.
    After the level-$k$ online vertices have been processed, let $\bar y$ and $y_j$ be the average matched fraction in $I$ and $I_j$ (where $j\in [k]$), respectively.
    Since level $k$ contains $(k-1)!$ online vertices whereas $|I|=k!$, it can increase the average load in $I$ by at most $1/k$.
    Hence, $\bar y \leq \min\left\{1, x+1/k\right\} = \hat x$.
    
    Let $I_r$ be the dropped subinterval.
    By definition, we have $y_r = \min_{j\in [k]} y_j \leq \bar y$.
    
    The dropped subinterval $I_r$ contributes $(k-1)!\cdot y_r$ to the final matching.
    By the induction hypothesis, every remaining $k-1$ subinterval $I_j$ contributes at most $(k-1)!\cdot w_{k-1}(y_j)$.
    Thus, the final average matched fraction in $I$ is at most
    \begin{equation}
    \label{equation:ub-average-matched-fraction-in-I}
        \frac{1}{k} \left( y_r+\sum_{j\neq r} w_{k-1}(y_j) \right).
    \end{equation}
    
    By the concavity of $w_{k-1}$ and $y_r\leq \bar y$,
    \begin{align*}
        y_r + \sum_{j\neq r} w_{k-1}(y_j) &= y_r + \sum_{j=1}^k w_{k-1}(y_j) - w_{k-1}(y_r) \\
        &\leq \bar y + \sum_{j=1}^k w_{k-1}(y_j) - w_{k-1}(\bar y)
        \leq \bar y + (k-1)\cdot w_{k-1}(\bar y),
    \end{align*}
    where the first inequality holds because $w_{k-1}(x)-x$ is non-increasing. 
    The last expression is non-decreasing in $\bar y$. Therefore, the quantity in Equation~\eqref{equation:ub-average-matched-fraction-in-I} is at most
    \begin{equation*}
        \frac{\hat x}{k} + \frac{k-1}{k}\cdot w_{k-1}(\hat x) = w_k(x),
    \end{equation*}
    which completes the induction.
\end{proof}

Now we can prove Theorem~\ref{theorem:1-1/e}.

\begin{proofof}{Theorem~\ref{theorem:1-1/e}}
    Consider the instance generated by $\mathsf{Build}(U,m)$.
    Initially, every offline vertex has zero matched fraction.
    Recall that $\opt=m!$.
    By Lemma~\ref{lemma:uniform-allocation-is-optimal}, the competitive ratio of any deterministic fractional algorithm in this instance is upper-bounded by
    \begin{equation*}
        w_m(0) = \frac{1}{m} \sum_{\ell=1}^m \min\left\{ 1, \sum_{s=\ell}^m\frac{1}{s} \right\}.
    \end{equation*}

    As $m\to\infty$, view $\ell/m$ as $x\in(0,1]$. The harmonic sum $\sum_{s=\ell}^m 1/s$ converges to $-\ln x$, and the preceding expression converges to
    \begin{align*}
        \lim_{m\to\infty} w_m(0) = \int_0^1 \min\{1,-\ln x\} \mathrm{d}x 
        = \int_0^{1/e}1 \mathrm{d} x + \int_{1/e}^1 (-\ln x) \mathrm{d}x = 1-\frac{1}{e},
    \end{align*}
    and the theorem follows.
\end{proofof}

\section*{Declaration for the Use of AI} 
The main algorithmic framework, including the design of the algorithm and the reduction to bad indices in the offset function, was discovered fully by humans.
The authors acknowledge the use of GPT-5.6 Sol in the exploration for part of the proofs. All arguments were independently verified by the authors. The manuscript was written entirely by the authors, who take full responsibility for its content.

\section*{Acknowledgments} 
We thank Siddhartha Banerjee and Billy Jin for posing the question of determining the optimal competitive ratio in the setting of Theorem~\ref{theorem:arbitrary-length} to us, and for subsequent helpful discussions. Part of this work was done while Yilong Feng and Kangning Wang were visiting Shanghai University of Finance and Economics.

\cleardoublepage
\phantomsection
\addcontentsline{toc}{section}{Bibliography}

\bibliography{ref}
\bibliographystyle{alpha}

\end{document}